\documentclass[12pt]{article}
\usepackage{natbib}
\usepackage{caption, float}
\usepackage[margin=1in]{geometry}

\usepackage{amsmath,amsthm,amssymb,graphicx,hyperref}
\hypersetup{colorlinks,linkcolor={blue},citecolor={blue},urlcolor={red}}  
\usepackage[linesnumbered,ruled,lined]{algorithm2e}
\usepackage{xcolor}
\usepackage{dsfont}
\usepackage{bm}

\newtheorem{theorem}{Theorem}[section]
\newtheorem{lemma}[theorem]{Lemma}
\newtheorem{proposition}[theorem]{Proposition}

\newenvironment{restatedlemma}[1]{
  \begin{trivlist}
  \item[\hskip \labelsep {\bfseries Lemma~#1.}]\itshape
}{
  \end{trivlist}
}

\newenvironment{restatedproposition}[1]{
  \begin{trivlist}
  \item[\hskip \labelsep {\bfseries Proposition~#1.}]\itshape
}{
  \end{trivlist}
}

\renewcommand*\theassumption{(A\arabic{assumption})}
\newcounter{subassumption}[assumption]

\makeatletter
\renewcommand{\p@subassumption}{\theassumption}
\makeatother

\numberwithin{equation}{section}

\numberwithin{equation}{section}

\newcommand{\set}[1]{\left\{#1\right\}}
\newcommand{\pt}[1]{\left(#1\right)}
\newcommand{\br}[1]{\left[#1\right]}
\newcommand{\norm}[1]{\left\lVert#1\right\rVert}
\newcommand{\N}{\mathbb{N}}
\newcommand{\R}{\mathbb{R}}

\newcommand{\E}{\mathbb{E}} 
\newcommand{\Cov}{\text{Cov}}

\newcommand\given[1][]{\:#1\vert\:}

\DeclareMathOperator*{\argmin}{argmin}

\graphicspath{ {./} }

\begin{document}

\def\spacingset#1{\renewcommand{\baselinestretch}
{#1}\small\normalsize} \spacingset{1}

\title{Sparse Longitudinal Functional Principal Component Analysis for Episodic Ambulatory Behavioral Assessments}

\author{
Nidhi Pai$^{1}$\thanks{Email: pai00032@umn.edu}
\and
Yu Fang$^{2}$
\and
Srijan Sen$^{2,3}$
\and
Zhenke Wu$^{3, 4, 5}$
\and
Erjia Cui$^{1}$
}

\date{
\small
$^{1}$Division of Biostatistics and Health Data Science, University of Minnesota, Minneapolis, MN\\
$^{2}$Michigan Neuroscience Institute, University of Michigan, Ann Arbor, MI\\
$^{3}$Eisenberg Family Depression Center, University of Michigan, Ann Arbor, MI \\
$^{4}$Department of Biostatistics, University of Michigan, Ann Arbor, MI \\
$^{5}$Michigan Institute for Data and AI in Society, University of Michigan, Ann Arbor, MI
}

\maketitle

\begin{abstract}
Accurately monitoring mental fatigue is critical for improving workplace safety and productivity.
A recent study examined unobtrusively collected smartphone typing speed as a potential ambulatory proxy assessment of mental fatigue using data from the Intern Health Study (IHS).
While population-level average typing speed patterns were found to be consistent with validated measures of mental fatigue, how these trajectories vary across participants and days may inform opportune moments for just-in-time interventions and remains an open question.
Treating typing speed trajectories as sparsely observed functional data, we propose a novel sparse longitudinal functional principal component analysis (sparse LFPCA) method for decomposing variability and predicting individual curves.
Specifically, sparse data are accommodated by casting covariance estimation as a structured penalized spline regression problem, enabling simultaneous estimation and smoothing of multiple covariance components while borrowing information across locations in the functional domain.
Simulations show that sparse LFPCA (1) accurately estimates eigenfunctions and generates reasonable predictions for underlying curves, and (2) achieves similar or superior performance compared to existing alternatives.
Our analysis of typing speed data collected from IHS reveals new and interpretable participant- and day-level patterns not captured by previous analyses and can be used to tailor behavioral interventions.
\end{abstract}

\medskip
\noindent\textbf{Keywords:} 
Ambulatory assessment, Covariance models, Functional data analysis, Mobile health, Sparse longitudinal data

\newpage
\baselineskip=24pt

\section{Introduction}
\label{sec:intro}

\subsection{Motivation}
\label{subsec:intro_motivation}

Advances in ambulatory assessment technologies hold tremendous promise to unobtrusively probe the social, behavioral, physiological contexts in which individuals live and experience conditions that require clinical attention. 
Recognizing this potential, modern health and biomedical studies increasingly adopt these novel technologies to collect data repeatedly over time on the same subjects for clinical endpoint prediction, association studies, and constructing tailoring variables for just-in-time interventions. 
For example, the Apple SensorKit \citep{apple_sensorkit, langholm2023sensorkit} platform passively collects typing speed and accuracy every day at irregular and sparse times that are associated with typing activities. 
This information was collected in our motivating Intern Health Study (IHS), an ongoing large annual cohort study examining mental health and stress in first-year physician residents. \cite{fang2026sensorkit_fatigue} used two months of data from the start of the internship to investigate smartphone typing performance in the IHS as a potential indicator for monitoring mental fatigue, an important task for improving workplace safety. 
In addition to typing speed metrics, behavioral and physiological measures, including sleep patterns, were extracted from fitness trackers worn by the participants.
For each typing session (defined by a short period lumping multiple temporally adjacent typing events across all apps), time awake was defined as the elapsed time between the typing initiation and the end of the most recent sleep episode. 
As a result, for each participant on each day, typing speed was collected at several discrete typing sessions (one observation per typing session) and treated as function of time awake. See \citet{fang2026sensorkit_fatigue} for the exact definitions of the typing speed.

\begin{figure}
    \centering
    \includegraphics[width=.75\linewidth]{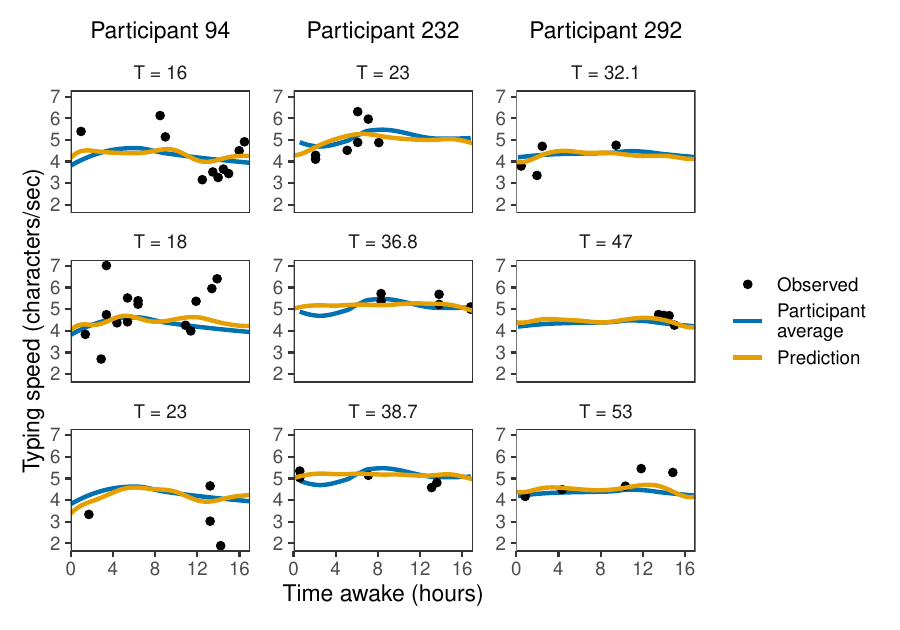}
    \caption{Selected participants and days from the IHS typing speed data. Columns correspond to different participants, and three days of data are shown for each participant. The x-axis is time since waking, in hours, and the y-axis is typing speed, in characters per second. The time for each day (denoted by $T$) is the elapsed time (in days) from 6:00 a.m. on the start day of the study (June 29) to the wake-time of the most recent sleep episode. For example, $T = 23$ implies the participant woke up around 6:00 a.m. on July 22, whereas $T = 36.8$ indicates the participant woke up around 1:12 a.m. on August 5.
    Black points represent typing sessions, and blue curves represent the participant's smoothed average typing speed, pooled over all available days for the participant.
    Gold lines are predictions generated by our proposed methods, as described in Sections \ref{sec:methods} and \ref{sec:application}.
    }
    \label{fig:sk_figure}
\end{figure}

To illustrate, Figure~\ref{fig:sk_figure} shows data from three participants over three selected days per participant, with participants shown in separate columns.
The x-axis is time awake in hours, and the y-axis shows typing speed in characters per second.
Black points represent typing sessions, and blue curves represent the participant's smoothed average typing speed, pooled over all available days for the participant.
The number of typing sessions per day varies between 1 and 36, with a median of 3 sessions per day.
The data is sparse, as most days contain very few observations, and irregular, since sampling times vary within and across days.
In addition, each participant has between 1 and 74 days of data (median 28.1 days), resulting in sparse trajectories (a few observations per day) collected longitudinally (multiple days per participant).
This is a more complicated setting than traditional longitudinal studies because a sparse trajectory rather than a scalar measurement is observed per day per participant.
Using this data, \cite{fang2026sensorkit_fatigue} investigated population-level patterns in typing speed and evaluated its utility as a proxy for mental fatigue.
While they found that average typing speed mirrors patterns in lab-based measures of mental fatigue in the literature, an important scientific question remains: how do these trajectories vary across participants and across days within participants?
Periods of extreme within-participant variation may indicate opportune moments for digital health interventions, e.g., push notifications suggesting a break when typing speed drops significantly and the participant is available to be notified.

\subsection{Challenges Faced by the Existing Methods}
\label{subsec:intro_existing}

To answer this question, a natural approach is to model daily typing speed trajectories as sparse functional observations and analyze them using tools from functional data analysis (FDA; \cite{ramsay2005fda, kokoszka2017introduction, crainiceanu2024fda_book, jiang2025tutorial}).
Functional principal components analysis (FPCA; \cite{yao2005fpca, ramsay2005fda, xiao2016denseface}) is a widely used technique in FDA, first designed to investigate dominant patterns of variation in independent, single-level functional data.
Multilevel FPCA (MFPCA) extends FPCA to hierarchical settings by incorporating a functional random intercept \citep{di2009mfpca, cui2023fmfpca}.
For longitudinal functional observations, as in the Intern Health Study, \cite{greven2010lfpca} proposed longitudinal FPCA (LFPCA) which further extends MFPCA by including a functional random slope to model variation across visits.
A variety of longitudinal functional models have since been developed to address inference \citep{park2015longitudinal, scheipl2015famm, shamshoian2022bayesian, li2022fixed}, large, high-dimensional datasets \citep{cui2022fui, zipunnikov2014longitudinal, loewinger2025fast}, asynchronous variables \citep{li2022regression}, and skewed distributions \citep{alam2024modeling}.
For sparse or irregular functional data, a parallel line of research has focused on FPCA  \citep{james2000principal, sparse_face}, with extensions to the multilevel \citep{di2014sparse_mfpca} and multivariate \citep{li2020multi_sface, ghosh2025gof} settings.
Some work takes a Bayesian approach \citep{sartini2025bayesian, ye2024functional}, while others focus on addressing non-Gaussian data \citep{zhong2022robust}.

Despite substantial progress in both directions, existing FPCA methods are limited in their ability to simultaneously account for both longitudinal and sparse functional data structures.
The LFPCA framework of \cite{greven2010lfpca} was developed for dense functional data observed on a common time grid. It estimates the covariance at each pair of time points separately, without borrowing information between pairs of time points.
In sparse or irregular designs, the limited number of observations at any given pair of time points leads to unstable, high-variance, or even infeasible estimates.
This problem was considered by \cite{cederbaum2016sparse_flmm}, who introduced functional linear mixed models (FLMM) for irregular or sparse data.
However, their method applies only to a simplified crossed random effect structure and is computationally intensive. See Supplementary Section~\ref{sec:s_review} for details on limitations of existing methods.

\subsection{Contributions}
\label{subsec:intro_contributions}

To address these limitations, we propose a novel sparse LFPCA framework. 
Our primary contribution is to cast covariance estimation as a structured penalized spline regression problem, which enables estimating and smoothing \textit{multiple} covariance components simultaneously.
Although \cite{sparse_face} proposed a fast covariance estimation approach for sparse functional data, their method applies only to single-level observations.
To solve critical new estimation challenges in the longitudinal functional setting, as in the motivating SensorKit data, we (1) propose a refined weight matrix to account for longitudinal correlations, (2) introduce a novel block-structured smoothness penalty, and (3) develop a scalable smoothing parameter selection strategy that regularizes separate covariance functions simultaneously while allowing each to have a different degree of smoothness.
In addition, we provide ready-to-use R software, \texttt{lfpca.sparse()}, implementing our method.
To our knowledge, this is the first time FPCA has been extended to a sparse longitudinal setting.

The remainder of this article is organized as follows.
Section~\ref{sec:methods} reviews the LFPCA model and introduces our proposed sparse LFPCA framework.
Section~\ref{sec:simulation} contains simulation studies of the proposed method, including a comparison with MFPCA as a special case.
Section~\ref{sec:application} presents an application of our model to the Intern Health Study data. We conclude with a discussion in Section~\ref{sec:discussion}.

\section{Methods}
\label{sec:methods}

\subsection{LFPCA Model}
\label{subsec:model}

We begin by briefly reviewing the LFPCA model.
For subject $i$, $i = 1, ..., I$ at visit $j$, $j = 1, ..., J_i$, suppose we observe $Y_{ij}(s)$ at locations $\set{s_{ijk}}_{k = 1, ..., m_{ij}} \subset \mathcal{S}$, where $\mathcal{S} = [0, 1]$ is the functional domain and $m_{ij}$ is the number of observations for subject $i$ at visit $j$.
In the dense data setting considered by \cite{greven2010lfpca}, the locations $s_{ijk}$ belong to a regular grid $\mathcal{S}^* = \{s_1, s_2, \dots, s_L\}$ shared across subjects and visits.
Let $T_{ij}$ be the time of visit $j$ for subject $i$.
The longitudinal functional model is
\begin{equation}
    Y_{ij}(s) = \underbrace{\mu(s, T_{ij})}_{\sf population~fixed~effects} + \underbrace{\bm R_{ij}^\top \bm Z_i(s)}_{\sf subject-level~random ~
    effects}  + \underbrace{W_{ij}(s)}_{\sf visit-level~random~effects} + \underbrace{\epsilon_{ij}(s)}_{\sf residual~error},
    \label{eq:model}
\end{equation}
where $\mu(s, T_{ij})$ denotes the fixed effect, 
$\bm Z_i(s) = (Z_{i, 0}(s), Z_{i, 1}(s))^\top$
is a vector of functional subject-level random effects with corresponding design matrix $\bm R_{ij} = (1, T_{ij})^\top$,
$W_{ij}(s)$ is the visit-specific deviation, and $\epsilon_{ij}(s)$ is white noise with zero mean and variance $\sigma^2$.
Here, $\bm Z_i(s)$, $W_{ij}(s)$, and $\epsilon_{ij}(s)$ are centered, mutually uncorrelated stochastic processes.
Let $K_0(s,u) = \Cov\{Z_{i, 0}(s), Z_{i, 0}(u)\}$ and $K_1(s,u) = \Cov\{Z_{i, 1}(s),  Z_{i, 1}(u)\}$ be the auto-covariance functions of $Z_{i, 0}(s)$ and $Z_{i, 1}(s)$, respectively.
Let $K_{01}(s,u) = \Cov\{ Z_{i, 0}(s),  Z_{i, 1}(u)\}$ be the cross-covariance function of $ Z_{i, 0}(s)$ and $ Z_{i, 1}(u)$.
Let $K_{\bm Z}(s,u)$ be the covariance of $\bm Z_i(s)$ such that
$K_{\bm Z}(s,u) = \begin{pmatrix}
    K_0(s,u) & K_{01}(s,u) \\
    K_{01}(u,s) & K_1(s,u)
\end{pmatrix}$.
The functions $K_0(s,u)$ and $K_1(s,u)$ are symmetric in $s$ and $u$, whereas $K_{01}(s,u)$ is generally not symmetric.
Let $K_W(s,u) = \Cov\{W_{ij}(s), W_{ij}(u)\}$ be the covariance function of $W_{ij}(s)$.

LFPCA identifies the main directions of variation and provides dimensionality reduction in longitudinal functional data.
By Mercer's Theorem \citep{mercer1909}, the covariance function $K_{\bm Z}(s,u)$ can be expanded as
$K_{\bm Z}(s,u) = \sum_{n=1}^\infty \lambda_n^Z \bm \phi_n(s) \bm \phi^\top_n(u)$,
where $\lambda_1^Z \geq \lambda_2^Z \geq \cdots \geq 0$ are the eigenvalues,
$\bm \phi_n(s) = (\phi_n^0(s), \phi_n^1(s))^\top$
are the corresponding eigenfunctions, and the set $\set{\bm \phi_n \given n \in \N}$ forms an orthonormal basis with respect to the additive scalar product
\[\langle \pt{\phi^0_{n_1},\phi^1_{n_1}}, \pt{\phi^0_{n_2},\phi^1_{n_2}}  \rangle = \int_\mathcal{S} \phi_{n_1}^0(s) \phi_{n_2}^0(s) ds + \int_\mathcal{S} \phi_{n_1}^1(s) \phi_{n_2}^1(s) ds.\]
Similarly, the eigendecomposition of $K_W(s,u)$ is $K_W(s,u) = \sum_{n = 1}^\infty \lambda_n^W \psi_n(s) \psi_n(u)$, where $\lambda_1^W \geq \lambda_2^W \geq ... \geq 0$ are the eigenvalues and $\psi_n(s)$ are the corresponding eigenfunctions. 
The Kosambi–Karhunen–Loève (KKL; \cite{kosambi1943, karhunen1947, loeve1948functions}) expansions ensure that $\bm Z_i(s) = \sum_{n_1 = 1}^{\infty} \xi_{i n_1} \bm \phi_{n_1}(s) $ and
$W_{ij}(s) = \sum_{n_2 = 1}^{\infty} \zeta_{i j n_2} \psi_{n_2}(s)$,
where $\xi_{i n_1} \sim N(0, \lambda_{n_1}^Z)$ and $\zeta_{ij n_2} \sim N(0, \lambda_{n_2}^{W})$ are mutually uncorrelated scores which are independent of the error. 
One key assumption in LFPCA is that the infinite expansions can be approximated by the first few principal components, i.e., $\bm Z_i(s) \approx \sum_{n_1 = 1}^{N_Z} \xi_{i n_1} \bm \phi_{n_1}(s)$ and $\bm W_{ij}(s) \approx \sum_{n_2 = 1}^{N_W} \zeta_{i j n_2} \psi_{n_2}(s)$, where $N_Z$ and $N_W$ are estimated by techniques such as leave-one-curve-out cross-validation \citep{rice1991lococv}, AIC-type criteria \citep{yao2005fpca}, restricted likelihood ratio tests \citep{staicu2010rlrt1, crainiceanu2009rlrt2}, or proportion of variance explained \citep{greven2010lfpca}.
Then, the model \eqref{eq:model} becomes
\begin{align}
    Y_{ij}(s) &\approx \mu(s, T_{ij}) 
    + \sum_{n_1 = 1}^{N_Z}  \xi_{i n_1} \bm R_{ij}^\top  \bm \phi_{n_1}(s) 
    + \sum_{n_2 = 1}^{N_W} \zeta_{i j n_2} \psi_{n_2}(s) + \epsilon_{ij}(s).
    \label{eq:kkl_model}
\end{align}
The decompositions reduce the longitudinal functional data to two finite sets of scores, $\set{\xi_{in_1} \given i = 1, ..., I; n_1 = 1, ..., N_Z}$ and $\set{\zeta_{ijn_2} \given i = 1, ..., I; j = 1, ..., J_i; n_2 = 1, ..., N_W}$, yielding a low-dimensional representation. 
The scores can be used to predict underlying curves or for downstream tasks such as regression or clustering.

\subsection{Sparse LFPCA Framework}
\label{subsec:methods_outline}

To overcome the limitations of existing methods, we propose an LFPCA estimation framework for sparse functional data. The main steps, summarized in Algorithm~\ref{alg:methods}, are to estimate the mean function, estimate the between-subject covariance functions, estimate the within-subject covariance function, and predict subject- and visit-level scores.
Note that in our framework, MFPCA can be viewed as a special case without the longitudinal component; the necessary modifications are described in the Supplementary Materials, Section \ref{sec:s_mfpca}.

\begin{algorithm}
\caption{Sparse LFPCA}\label{alg:methods}
\begin{enumerate}
    \item Estimate the mean function $\mu(s, T_{ij})$ under the working independence model $Y_{ij}(s) = \mu(s, T_{ij}) + \epsilon_{ij}(s)$.
    \item Estimate $K_0(s,u)$, $K_{01}(s,u)$, and $K_1(s,u)$ using cross-products from different visits $\set{\widehat C_{i j_1 j_2 k_1 k_2} \given i = 1, ..., I; j_1 = 1, ..., J_i; j_2 \not = j_1; k_1 = 1, ..., m_{ij_1}; k_2 = 1, ..., m_{i j_2}}$. Obtain eigenvectors $\bm \phi_{n_1}(t)$ and eigenvalues $\lambda_{n_1}^Z$.
    \item Estimate $K_W(s,u)$ and $\sigma^2$ using adjusted cross-products from the same visit 
    $\set{\widehat A_{i j j k_1 k_2} \given i = 1, ..., I; j = 1, ..., J_i; k_1, k_2 = 1, ..., m_{ij}}$. Obtain eigenvectors $\psi_{n_2}(t)$ and eigenvalues $\lambda_{n_2}^W$.
    \item Obtain scores $\xi_{i n_1}$ and $\zeta_{i j n_2}$ using mixed model equations. Substitute the mean function and eigenfunctions into Equation~\eqref{eq:kkl_model} to obtain predictions for $Y_{ij}(s)$.
\end{enumerate}
\end{algorithm}

\subsection{Fixed Effects Estimation}

The first step is to obtain an estimate $\widehat \mu(s, T_{ij})$ of the mean surface under a working independence assumption.
In the most general case, $\mu(s, T_{ij})$ can be estimated as a smooth bivariate function of $s$ and $T_{ij}$, which is appropriate if the collection of visit times $T_{ij}$ over all subjects is dense.
Otherwise, simpler fixed effect structures can be used such as $\mu(s, T_{ij}) = \mu(s) + T_{ij} \beta$.
We estimate $\mu(s, T_{ij})$ using tensor product P-splines in a generalized additive framework \citep{eilers_marx1996_psplines}.
Choosing between smoothing methods is not our focus; see \cite{krivobokova2007correlatederrors, wood2017gam} for a more detailed discussion.
After estimating the fixed effects, let $\hat r_{ijk} = Y_{ij}(s_{ijk}) - \hat \mu(s_{ijk}, T_{ij})$ be the residuals from substituting the estimated mean surface. 
We will use the hat notation to denote variables where the mean function $\mu(s, T_{ij})$ is replaced by its estimate $\hat \mu(s, T_{ij})$.

\subsection{Between-Subject Covariance Estimation}
\label{subsec:kb}

\subsubsection{Tensor product spline expansion}
\label{subsubsec:kb_model}

Let $C_{i j_1 j_2 k_1 k_2} = r_{ij_1 k_1} r_{i j_2 k_2}$ be cross-products of the residuals, where $j_1, j_2 \in \set{1, ..., J_i}$ index visits and $k_1, k_2 \in \set{1, ..., m_{ij}}$ index observations for each subject; these cross-products underpin the covariance estimation.
Based on the model in Equation~\eqref{eq:model}, the expected value of $C_{i j_1 j_2 k_1 k_2}$ is
\begin{equation}
    \begin{aligned}
    \E \br{C_{i j_1 j_2 k_1 k_2}} 
    &= K_0(s_{i j_1 k_1}, s_{i j_2 k_2}) + T_{ij_2} K_{01}(s_{i j_1 k_1}, s_{i j_2 k_2}) + T_{ij_1} K_{01}(s_{i j_2 k_2}, s_{i j_1 k_1}) \\
    &\quad + T_{ij_1} T_{i j_2} K_{1}(s_{i j_1 k_1}, s_{i j_2 k_2}) + \br{K_W(s_{i j_1 k_1}, s_{i j_2 k_2}) + \sigma^2 \delta_{k_1  k_2} } \delta_{j_1 j_2},
    \end{aligned}
        \label{eq:expected_value}
\end{equation}
where $\delta$ is the Kronecker delta. In the LFPCA method proposed by \cite{greven2010lfpca}, this model is fit as a linear regression, where the parameters to be estimated are the covariance functions at each pair of locations, and the raw covariance estimates are subsequently smoothed.
However, with sparse data, the number of cross-products is often limited, making it difficult to accurately estimate the many unknown parameters in the linear regression step. 
See Section \ref{sec:s_review} for more details on the limitations of existing methods for sparse data.

To improve covariance estimation in the sparse data setting, we propose incorporating smoothing to borrow information across location pairs.
Specifically, we model the between-subject covariance functions $K_0(s,u)$, $K_{01}(s,u)$, and $K_1(s,u)$ with tensor product splines
\begin{equation}
\begin{aligned}
    H_0(s,u) &= \sum_{1 \leq \kappa, \ell \leq c} \theta_{\kappa \ell}^{0} B_{\kappa}(s) B_{\ell}(u), \qquad 
    H_{01}(s,u) = \sum_{1 \leq \kappa, \ell \leq c} \theta_{\kappa \ell}^{01} B_{\kappa}(s) B_{\ell}(u), \\
    H_1(s,u) &= \sum_{1 \leq \kappa, \ell \leq c} \theta_{\kappa \ell}^{1} B_{\kappa}(s) B_{\ell}(u),
\end{aligned}
\label{eq:kb_basis}
\end{equation}
respectively. 
Here, $\set{B_1(\cdot), ..., B_c(\cdot)}$ is a collection of B-spline basis functions in $\mathcal{S}$, $\set{\theta^0_{\kappa \ell}, \theta^{01}_{\kappa \ell}, \theta^1_{\kappa \ell}}_{1 \leq \kappa, \ell \leq c}$ are basis function coefficients, and $c$ is the rank of the B-spline basis; knots are equally spaced.
This tensor product spline construction provides a structured and flexible representation and has been used in covariance estimation for single-level sparse data.
However, to our knowledge, it has not been extended to the longitudinal sparse settings.
For simplicity, we assume that the basis functions are the same for the three covariance surfaces. 
For the autocovariance functions $K_0(s,t)$ and $K_1(s,t)$, we enforce that
$\theta_{\kappa \ell} = \theta_{\ell \kappa}$ for $1 \leq \kappa, \ell \leq c$,
which forces $H_0(s,t) = H_0(t,s)$ and $H_1(s,t) = H_1(t,s)$.

For between-subject covariance estimation, consider the set $\set{\widehat C_{i j_1 j_2 k_1 k_2}}$ for all $i = 1, 2, \dots, I$, $j_1 \not = j_2$, and $1 \leq k_1 \leq m_{i j_1}$, $1 \leq k_2 \leq m_{i j_2}$, i.e., products of residuals from the same subject and different visits. 
Subjects with only one visit are not included here.
Based on Equation \eqref{eq:expected_value}, for $j_1 \not = j_2$, we have
$
    \E \br{C_{i j_1 j_2 k_1 k_2}} 
    = K_0(s_{ij_1 k_1}, s_{ij_2 k_2}) + T_{i j_2} K_{01}(s_{ij_1 k_1}, s_{ij_2 k_2}) + T_{i j_1} K_{01}(s_{ij_2 k_2}, s_{ij_1 k_1}) 
     + T_{i j_1} T_{i j_2} K_1(s_{ij_1 k_1}, s_{ij_2 k_2}).
$
Substituting the covariance functions with their basis expansions in Equation~\eqref{eq:kb_basis} gives
\begin{equation}
    \begin{aligned}
    \E \br{C_{i j_1 j_2 k_1 k_2}} 
    &= \sum_{1 \leq \kappa, 1 \leq \ell \leq c} \theta_{\kappa \ell}^{0} B_{\kappa}(s_{ij_1 k_1}) B_{\ell}(s_{ij_2 k_2}) +
    T_{i j_2}\sum_{1 \leq \kappa, 1 \leq \ell \leq c} \theta_{\kappa \ell}^{01} B_{\kappa}(s_{ij_1 k_1}) B_{\ell}(s_{ij_2 k_2}) \\
    &\quad + T_{i j_1}\sum_{1 \leq \kappa, 1 \leq \ell \leq c} \theta_{\kappa \ell}^{01} B_{\kappa}(s_{ij_2 k_2}) B_{\ell}(s_{ij_1 k_1}) 
    + T_{i j_1} T_{i j_2} \sum_{1 \leq \kappa, 1 \leq \ell \leq c} \theta_{\kappa \ell}^{1} B_{\kappa}(s_{ij_1 k_1}) B_{\ell}(s_{ij_2 k_2})  \\
    &= \sum_{1 \leq \kappa, 1 \leq \ell \leq c} \theta_{\kappa \ell}^{0} B_{\kappa}(s_{ij_1 k_1}) B_{\ell}(s_{ij_2 k_2}) \\
    &\quad + \sum_{1 \leq \kappa, 1 \leq \ell \leq c} \theta_{\kappa \ell}^{01} \br{T_{i j_2} B_{\kappa}(s_{ij_1 k_1}) B_{\ell}(s_{ij_2 k_2}) + T_{i j_1} B_{\kappa}(s_{ij_2 k_2}) B_{\ell}(s_{ij_1 k_1}) } \\
    &\quad + \sum_{1 \leq \kappa, 1 \leq \ell \leq c} \theta_{\kappa \ell}^{1} T_{i j_1} T_{i j_2} B_{\kappa}(s_{ij_1 k_1}) B_{\ell}(s_{ij_2 k_2}).
    \end{aligned}
    \label{eq:kb_reg_model}
\end{equation}
While Equation~\eqref{eq:kb_reg_model} looks more complex than Equation~\eqref{eq:expected_value}, the key observation is that we can still write it as a regression model $\widehat{\bm C} = \bm X \bm \alpha$.
Here, the outcome $\widehat{\bm C}$ contains $\widehat C_{i j_1 j_2 k_1 k_2}$ for all $i = 1, 2, \dots, I$, $j_1 \not = j_2$.
To define $\bm \alpha$, first let $\bm{\Theta}_0 = (\theta^0_{k \ell})_{1 \leq \kappa \leq c, 1 \leq \ell \leq c} \in \R^{c \times c}$ be the coefficient matrix for $K_0(s,u)$, and define $\bm \Theta_{01}$ and $\bm \Theta_{1}$ analogously. 
Let $\text{vech}(\cdot)$ denote the operator that stacks the columns of the lower triangle of a matrix into a vector and $\text{vec}(\cdot)$ denote the operator that stacks all columns of a matrix into a vector.
Because $K_0(s,u)$ and $K_1(s, u)$ are symmetric, we need only estimate $\bm \theta_0 = \text{vech}\bm \Theta_0$ and $\bm \theta_1 = \text{vech}\bm \Theta_1$, whereas $K_{01}(s,u)$ is not symmetric, so we estimate $\bm \theta_{01} = \text{vec} \bm \Theta_{01}$.
The vector of regression coefficients is then defined as $\bm \alpha = (\bm \theta_0^\top, \bm \theta_{01}^\top, \bm \theta_1^\top)^\top$.

The design matrix $\bm X$ has a block structure $\bm X = [\bm X_0, \bm X_{01}, \bm X_1]$, where the columns of the three blocks correspond to the entries of $\bm \theta_0$, $\bm \theta_{01}$, and $\bm \theta_1$.
Let $\bm b(s) = \set{B_1(s), ..., B_c(s)}^\top$ be a vector of the basis functions evaluated at time $s$.
To see how $\bm X$ is constructed, first consider $K_0(s,t)$. Note that $H_0(s,u) = (\bm b(u) \otimes \bm b(s))^\top \bm G_c \bm \theta_0$, where $\otimes$ is the Kronecker product and $\bm G_c \in \R^{c^2 \times \frac{c(c+1)}{2}}$ is the duplication matrix (\cite{seber2008_duplication_matrix}, p. 246) such that $\text{vec} \bm \Theta_0 = \bm G_c \bm \theta_0$. 
Using this notation, the block of the design matrix $\bm X$ for estimating $\bm \theta_0$ can be written as $\bm X_0 = \bm B \bm G_c$, where $\bm B$ is a complex matrix consisting of $\bm b(s)$ evaluated at time points corresponding to $\widehat{\bm C}$ and appropriately concatenated.
Blocks $\bm X_{01}$ and $\bm X_1$ are defined similarly, but in more complicated forms involving visit times in addition to basis functions.
The details of constructing $\widehat{\bm C}$, $\bm B$, $\bm X$, and $\bm \alpha$ are given in the Supplementary Materials, Section \ref{subsec:s_kb_matrix}.

\subsubsection{Weighted least squares}
\label{subsubsec:kb_weights}

To increase estimation efficiency, we fit the model with weighted least squares (WLS), where the weight matrix $\bm W$ is specified as the inverse of $\Cov(\bm C)$, as proposed in \cite{sparse_face}.
However, the derivation of $\Cov(\bm C)$ for sparse LFPCA is not straightforward because the longitudinal setting introduces additional model components.
Proposition~\ref{prop:weight} describes how to compute $\Cov(\bm C)$ in this case.

\begin{proposition}
\label{prop:weight}
Define $\bm{M}_{i j_1 j_2 k_1 k_2} = (
    K_0(s_{i j_1 k_1}, s_{i j_2 k_2}),
    T_{i j_2} K_{01}(s_{i j_1 k_1}, s_{i j_2 k_2}),$  \\
    $T_{i j_1} K_{01}(s_{i j_2 k_2}, s_{i j_1 k_1}),
    T_{i j_1} T_{i j_2} K_1(s_{i j_1 k_1}, s_{i j_2 k_2}),
    K_W(s_{i j_1 k_1}, s_{i j_2 k_2}) \delta_{j_1 j_2},
    \sigma^2 \delta_{j_1 j_2} \delta_{k_1 k_2}
)^\top$. Then
\[\Cov(C_{i j_1 j_2 k_1 k_2}, C_{i j_3 j_4 k_3 k_4}) = \bm{1}^\top \pt{\bm{M}_{i j_1 j_3 k_1 k_3} \otimes \bm{M}_{i j_2 j_4 k_2 k_4} + \bm{M}_{i j_1 j_4 k_1 k_4} \otimes \bm{M}_{i j_2 j_3 k_2 k_3}}.
\]
\end{proposition}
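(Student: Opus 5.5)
The covariance formula rests on Gaussianity of the residual process together with Isserlis' (Wick's) theorem for fourth moments of centered jointly Gaussian variables. Write $r_{ijk} = Y_{ij}(s_{ijk}) - \mu(s_{ijk},T_{ij})$. By model \eqref{eq:model} and the Gaussian KKL expansions, $\bm Z_i$, $W_{ij}$ and $\epsilon_{ij}$ are independent centered Gaussian processes (we assume the white noise is Gaussian as well). Hence for fixed subject $i$ the collection $\set{r_{ijk}}$ is jointly Gaussian with mean zero. Isserlis' theorem then gives
\[
\E[r_1 r_2 r_3 r_4] = \E[r_1r_2]\E[r_3r_4] + \E[r_1r_3]\E[r_2r_4] + \E[r_1r_4]\E[r_2r_3].
\]
With $r_1=r_{ij_1k_1}$, $r_2=r_{ij_2k_2}$, $r_3=r_{ij_3k_3}$, $r_4=r_{ij_4k_4}$, we have $C_{ij_1j_2k_1k_2}=r_1r_2$ and $C_{ij_3j_4k_3k_4}=r_3r_4$. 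Subtracting $\E[r_1r_2]\E[r_3r_4]$ leaves
\[
\Cov(C_{ij_1j_2k_1k_2},C_{ij_3j_4k_3k_4}) = \E[r_1r_3]\E[r_2r_4]+\E[r_1r_4]\E[r_2r_3].
\]

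Next, identify each pairwise second moment with a sum of the entries of the corresponding $\bm M$ vector. By \eqref{eq:expected_value}, $\E[r_{ij_ak_a} r_{ij_bk_b}] = \bm 1^\top \bm M_{ij_aj_bk_ak_b}$, since the six entries of $\bm M$ are exactly the six summands there: the $K_0$ term, the two $K_{01}$ terms, the $K_1$ term, the $K_W\delta_{j_aj_b}$ term and the $\sigma^2\delta\delta$ term. The indicator on the nugget needs a small check: $\epsilon_{ij}(s)$ is white noise, so two observations share the nugget only if they are the same observation, which requires $j_a=j_b$ and $k_a=k_b$. This matches $\delta_{j_aj_b}\delta_{k_ak_b}$.

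The final step converts products of sums into the Kronecker form, using $(\bm 1^\top \bm a)(\bm 1^\top \bm b) = \bm 1^\top(\bm a\otimes \bm b)$, where the $\bm 1$ on the right has length 36. This holds because $\bm a\otimes\bm b$ lists all products $a_pb_q$. Applying it to both products gives the stated formula.

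The main obstacle is justifying joint Gaussianity, which Isserlis' theorem requires. We would state explicitly that the error is Gaussian and that the scores are jointly Gaussian and independent, not merely uncorrelated, as the KKL statement with normal scores implies. We would also remark that the formula concerns the population cross-products $C$ with the true mean, not $\widehat C$. Two further points need mention. Only pairs from the same subject $i$ appear, because cross-subject covariances vanish by independence. The expression also remains valid when the two cross-products share residuals or coincide, since Isserlis' theorem applies to repeated variables.
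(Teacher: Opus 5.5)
Your proof is correct and uses the same key tool as the paper: Isserlis' theorem under joint Gaussianity, followed by subtracting $\E[r_1r_2]\,\E[r_3r_4]$. The only difference is that you apply Isserlis directly to the jointly Gaussian residuals. This bypasses the paper's 256-term (72 nonzero) expansion into $Z_{i,0}$, $Z_{i,1}$, $W_{ij}$, $\epsilon_{ij}$ components, and the Kronecker form follows at once from $(\bm 1^\top \bm a)(\bm 1^\top \bm b)=\bm 1^\top(\bm a\otimes\bm b)$.
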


The proof of Proposition~\ref{prop:weight} is in the Supplementary Materials, Section \ref{sec:s_weight}.
For the between-subject covariance function estimation, we use $\widehat{C}_{i j_1 j_2 k_1 k_2}$ where $j_1 \not = j_2$. Thus, only the first four entries of $\bm M_{i j_1 j_2 k_1 k_2}$ are nonzero. 
Because $\Cov (\bm C_i)$ may be singular, we let $0 < \beta < 1$ and further define
$\bm W_i^{-1} = (1 - \beta) \Cov (\bm C_i) + \beta \text{diag}\pt{\text{diag}(\Cov (\bm C_i))}$, 
which ensures the existence and numerical stability of $\bm W_i$.
The weight matrix is then constructed as $\bm W = \text{blockdiag}(\bm W_1, ..., \bm W_I)$.
Because $\bm W$ and the analogous weight matrix for the within-subject covariance, $\widetilde{\bm W}$, depend on the covariance functions and $\sigma^2$, a two-stage estimation procedure can be used.
First, set $\bm W = \bm I$ and $\widetilde{\bm W} = \bm I$ and estimate the covariance functions as outlined in Sections \ref{subsubsec:kb_model} and \ref{subsec:kw} using (unweighted) least squares. Second, calculate the weight matrices and repeat the covariance function estimation using WLS.

\subsubsection{Penalized estimation}
\label{subsubsec:kb_estimation}

Because the number of basis functions is relatively large, we propose adding a penalty term to avoid overfitting.
The penalties on $\bm \Theta_0$, $\bm \Theta_{01}$, and $\bm \Theta_1$ are $\lambda_0 ||\bm \Theta_{0} \bm{D}||^2_F$, $\lambda_{01} ||\bm \Theta_{01} \bm{D}||^2_F$, and $\lambda_1 ||\bm \Theta_1 \bm{D}||^2_F$, respectively, where $\lambda_0$, $\lambda_{01}$, and $\lambda_1$ are distinct smoothing parameters, $\bm{D} \in \R^{c \times (c -2)}$ is a second-order difference matrix, and $||\cdot||_F$ is the Frobenius norm.
To unify the penalties for multiple covariance components within a single least squares objective function, we introduce a novel block-structured penalty.
Specifically, define
\begin{align*}
    \bm Q_0 &= \begin{pmatrix}
    \bm G_c^\top (\bm I_c \otimes \bm D \bm D^\top) \bm G_c & \bm 0 & \bm 0 \\
    \bm 0 & \bm 0  & \bm 0 \\
    \bm 0 & \bm 0  & \bm 0 
    \end{pmatrix}, \quad 
    \bm Q_{01} = \begin{pmatrix}
    \bm 0 & \bm 0  & \bm 0  \\
    \bm 0 & \bm I_c \otimes \bm D \bm D^\top  & \bm 0 \\
    \bm 0 & \bm 0  & \bm 0 
    \end{pmatrix}, \\
    \bm Q_{1} &= \begin{pmatrix}
    \bm 0 & \bm 0  & \bm 0  \\
    \bm 0 & \bm 0  & \bm 0  \\
    \bm 0 & \bm 0 & \bm G_c^\top (\bm I_c \otimes \bm D \bm D^\top) \bm G_c
    \end{pmatrix}.
\end{align*}
The penalty is then $\sum_{l \in \mathcal{L}}  \lambda_l \bm \alpha^\top \bm Q_l  \bm \alpha$, where $\mathcal{L} = \set{1, 01, 1}$ are labels corresponding to the three between-subject covariance functions.
Using the duplication matrix $\bm G_c$ ensures that smoothness penalties are correctly imposed on the symmetric coefficient matrices $\bm \Theta_0$ and $\bm \Theta_1$, even though the estimation targets the vectors of unique elements $\bm \theta_0$ and $\bm \theta_1$.
Additionally, the block-structured penalty assigns each covariance component its own smoothing parameter, allowing different components to be smoothed to varying degrees.
We will discuss the selection of smoothing parameters $\lambda_l$ in Section~\ref{subsubsec:kb_sp_selection}.

With the introduced penalty, the resulting penalized WLS objective function is
\[\widehat{\bm \alpha} = \argmin_{\bm \alpha} \br{ (\widehat{\bm C} - \bm{X} \bm \alpha)^\top \bm W (\widehat{\bm C}  - \bm{X} \bm \alpha) + \sum_{l \in \mathcal{L}}  \lambda_l \bm \alpha^\top \bm Q_l  \bm \alpha },\]
and an explicit form for $\hat{\bm{\alpha}}$ is
\[\widehat{ \bm \alpha} =  \pt{\bm{X}^\top \bm W \bm X +  \sum_{l \in \mathcal{L}} \lambda_l \bm Q_l  }^{-1} \pt{\bm X^\top \bm W \widehat{\bm C} }.\]

In practice, we evaluate the covariance functions, eigenfunctions, and predictions on a grid $\mathcal{S}^*$ of $L$ locations in $\mathcal{S}$.
Let $\bm B^* = (\bm b(s_1), ..., \bm b(s_L))^\top \in \R^{L \times c}$ be a matrix of the basis functions evaluated at $\mathcal{S}^*$.
Given coefficient estimates $\widehat{\bm \Theta}_0$, $\widehat{\bm \Theta}_{01}$, and $\widehat{\bm \Theta}_1$ from $\widehat{\bm \alpha}$, the estimated covariance functions evaluated on $\mathcal{S}^*$ are 
$\bm{\widehat{K}}^R_0 = \bm B^* \widehat{\bm \Theta}_0 (\bm B^*)^\top$, 
$\bm{\widehat{K}}^R_{01} = \bm B^* \widehat{\bm \Theta}_{01} (\bm B^*)^\top$, and $\bm{\widehat{K}}^R_1 = \bm B^* \widehat{\bm \Theta}_1 (\bm  B^*)^\top$.
Here, the $R$ superscript stands for ``raw"; the estimates are trimmed in Section~\ref{subsubsec:eigen_kb} to ensure the matrices are positive semi-definite.

\subsubsection{Selection of smoothing parameters}
\label{subsubsec:kb_sp_selection}

We propose selecting smoothing parameters $\lambda_0$, $\lambda_{01}$, and $\lambda_1$ by leave-one-subject-out cross validation, since it accounts for both within-visit and between-visit correlation, unlike leave-one-observation-out and leave-one-visit-out methods.
The leave-one-subject-out cross validated error is $\text{iCV} = \sum_{i =1}^I \norm{\widehat{\bm C}^{[i]}_i - \widehat{\bm C} }^2$, where $\widehat{\bm C}^{[i]}_i$ is the prediction of $\widehat{\bm C}_i$ generated by fitting the model without data from the $i$th subject.
Specifically, we derive an expression for the cross-validated error for longitudinal functional data that is much faster to compute, adapting the approach of \cite{sparse_face}.
First, let $\bm S = \bm X(\bm X^\top \bm W \bm X + \bm Q )^{-1} \bm X^\top \bm W$ be the smoother matrix, where $\bm Q = \sum_{l \in \mathcal{L}} \lambda_l \bm Q_l $.
We then have Lemma~\ref{lemma:kb_smoother}:
\begin{lemma}
\label{lemma:kb_smoother}
    The smoother matrix $\bm S$ can be written as 
    \[\bm S = \bm X \bm A \pt{\bm I_p + \sum_{l \in \mathcal{L}} \lambda_l \text{diag}(\bm s_l)}^{-1} (\bm X \bm A)^\top \bm W\]
for some $\bm A$, $\bm s_0$, $\bm s_{01}$, and $\bm s_{1}$ which do not depend on any $\lambda_l$.
\end{lemma}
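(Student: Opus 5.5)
The plan follows the simultaneous-diagonalization argument used for single-level sparse FACE \citep{sparse_face}, adapted to the three-block penalty $\bm Q=\sum_{l\in\mathcal L}\lambda_l\bm Q_l$. I assume $\bm X^\top\bm W\bm X$ is positive definite. This holds when $\bm W$ is built with $0<\beta<1$ and $\bm X$ has full column rank $p$.

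First, take a square root of the Gram matrix: $\bm X^\top\bm W\bm X=\bm L\bm L^\top$, for instance by Cholesky. Then
\[
\bm X^\top\bm W\bm X+\bm Q=\bm L\Bigl(\bm I_p+\sum_{l\in\mathcal L}\lambda_l\bm L^{-1}\bm Q_l\bm L^{-\top}\Bigr)\bm L^\top .
\]
The goal is an orthogonal $\bm U$, independent of the $\lambda_l$, such that each $\bm U^\top\bm L^{-1}\bm Q_l\bm L^{-\top}\bm U=\text{diag}(\bm s_l)$. Setting $\bm A=\bm L^{-\top}\bm U$ then gives
\[
(\bm X^\top\bm W\bm X+\bm Q)^{-1}=\bm A\Bigl(\bm I_p+\sum_{l}\lambda_l\,\text{diag}(\bm s_l)\Bigr)^{-1}\bm A^\top .
\]
Substituting into $\bm S=\bm X(\cdot)^{-1}\bm X^\top\bm W$ yields the stated form directly. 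Since $\bm A$ and the $\bm s_l$ are built only from $\bm X$, $\bm W$ and the $\bm Q_l$, they do not depend on any $\lambda_l$. With a single penalty this step is just the spectral decomposition of the symmetric matrix $\bm L^{-1}\bm Q\bm L^{-\top}$.

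The main obstacle is the multi-penalty case. A single orthogonal $\bm U$ diagonalizing all three symmetric matrices $\bm P_l=\bm L^{-1}\bm Q_l\bm L^{-\top}$ exists if and only if they pairwise commute. Because the $\bm Q_l$ are supported on disjoint diagonal blocks, $\bm Q_l\bm Q_m=\bm 0$ for $l\neq m$. Hence $\bm P_l\bm P_m=\bm L^{-1}\bm Q_l(\bm X^\top\bm W\bm X)^{-1}\bm Q_m\bm L^{-\top}$. This vanishes when the off-diagonal blocks of $(\bm X^\top\bm W\bm X)^{-1}$ linking blocks $l$ and $m$ vanish, that is, when $\bm X^\top\bm W\bm X$ is block diagonal in the partition $(\bm\theta_0,\bm\theta_{01},\bm\theta_1)$. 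My first approach is to exploit this structure.

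Concretely, I would choose a block-diagonal square root $\bm L=\text{blockdiag}(\bm L_0,\bm L_{01},\bm L_1)$, with $\bm L_l\bm L_l^\top=\bm X_l^\top\bm W\bm X_l$. Inside each block I would diagonalize $\bm L_l^{-1}\tilde{\bm Q}_l\bm L_l^{-\top}=\bm U_l\text{diag}(\tilde{\bm s}_l)\bm U_l^\top$, where $\tilde{\bm Q}_l$ is the nonzero block of $\bm Q_l$. Then I would set $\bm U=\text{blockdiag}(\bm U_l)$ and let $\bm s_l$ be $\tilde{\bm s}_l$ padded with zeros outside block $l$.

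If the cross blocks $\bm X_l^\top\bm W\bm X_m$ are not zero, I would first reparametrize $\bm\alpha$ by block Gram–Schmidt with respect to the $\bm W$-inner product, absorbing the change of basis into $\bm A$. I would then check whether the transformed penalties remain confined to disjoint blocks. This is the step I expect to need the most care. If it fails, the fallback is to state the lemma under block orthogonality of the columns of $\bm X$, or for a block-diagonal approximation of $\bm X^\top\bm W\bm X$, which is exactly what makes the fast iCV formula valid.
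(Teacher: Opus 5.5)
Your skeleton is the paper's: take a square root of $\bm M=\bm X^\top\bm W\bm X$, diagonalize the whitened penalties with one orthogonal $\bm U$, and set $\bm A=\bm M^{-1/2}\bm U$ (your $\bm L^{-\top}\bm U$). Your block-orthogonal case is a complete and correct proof under that hypothesis. At the multi-penalty step, the paper argues that $\bm Q_l\bm Q_m=\bm 0$ makes the $\bm Q_l$ commute, and then asserts that $\bm R_l=\bm M^{-1/2}\bm Q_l\bm M^{-1/2}$ are simultaneously diagonalizable. You are right that this does not follow, since $\bm R_l\bm R_m=\bm M^{-1/2}\bm Q_l\bm M^{-1}\bm Q_m\bm M^{-1/2}$. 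On this point your proposal is more careful than the paper's proof.

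The gap in your plan is the block Gram--Schmidt step, and it cannot be closed, because without an extra hypothesis the statement is false. Since $\bm X$ has full column rank and $\bm W$ is invertible, the identity for $\bm S$ at all $\lambda_l\ge 0$ is equivalent to $(\bm M+\sum_l\lambda_l\bm Q_l)^{-1}=\bm A\bm E^{-1}\bm A^\top$, where $\bm E=\bm I_p+\sum_l\lambda_l\text{diag}(\bm s_l)$.

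\begin{itemize}
\item Setting all $\lambda_l=0$ gives $\bm A\bm A^\top=\bm M^{-1}$, so $\bm U:=\bm M^{1/2}\bm A$ is orthogonal.
\item Matching the coefficients of $\lambda_l$ in $\bm M+\sum_l\lambda_l\bm Q_l=\bm A^{-\top}\bm E\bm A^{-1}$ then gives $\bm R_l=\bm U\text{diag}(\bm s_l)\bm U^\top$.
\item Hence the lemma holds if and only if the $\bm R_l$ commute. Because $\bm Q_l\bm M^{-1}\bm Q_m$ and $\bm Q_m\bm M^{-1}\bm Q_l$ occupy the disjoint blocks $(l,m)$ and $(m,l)$, this holds if and only if $\bm Q_l\bm M^{-1}\bm Q_m=\bm 0$ for all $l\neq m$. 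Owing to the penalty null spaces, this is slightly weaker than block-diagonality of $\bm M$.
\end{itemize}

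Under a reparametrization $\bm\alpha=\bm T\bm\beta$, this matrix becomes $\bm T^\top\bm Q_l\bm M^{-1}\bm Q_m\bm T$, so no reparametrization can help. Gram--Schmidt does make the Gram matrix block-diagonal, but it gives $\bm T^\top\bm Q_l\bm T$ nonzero off-diagonal blocks.

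A minimal counterexample is $\bm M=\begin{pmatrix}2&1\\1&2\end{pmatrix}$, $\bm Q_1=\text{diag}(1,0)$, $\bm Q_2=\text{diag}(0,1)$. Here $\bm Q_1\bm M^{-1}\bm Q_2\neq\bm 0$. Directly, $\det(\bm M+\lambda_1\bm Q_1+\lambda_2\bm Q_2)=3+2\lambda_1+2\lambda_2+\lambda_1\lambda_2$ is not a product of two affine factors, as the representation would force.

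Nothing in this model enforces the condition either. With $\bm W=\bm I$, $\bm X_0^\top\bm X_1=\bm G_c^\top\bm B^\top\bm\Delta\bm B\bm G_c\neq\bm 0$, where $\bm\Delta$ is diagonal with entries $T_{ij_1}T_{ij_2}>0$, so the condition generically fails.

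Your fallback is therefore the correct endpoint. The lemma, and the exactness of the fast iGCV formula built on it, need $\bm Q_l\bm M^{-1}\bm Q_m=\bm 0$ (for example, $\bm W$-orthogonal blocks of $\bm X$); otherwise the formula must be read as an approximation. An exact alternative is to fix the ratios $\lambda_l=\lambda\omega_l$. Diagonalizing the single penalty $\sum_l\omega_l\bm Q_l$ gives $\bm A$ and $\bm s$ that depend only on $\bm\omega$, so the formula is exact along each ray of the grid.
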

The proof of Lemma~\ref{lemma:kb_smoother} is given in the Supplementary Materials, Section \ref{subsec:s_kb_sp}. 
Given Lemma~\ref{lemma:kb_smoother}, the iCV criterion can be approximated as a generalized cross validation (GCV) criterion and further simplified.
Let $\widehat{\bm C}_i$ and $\bm X_i$ be the rows of $\widehat{\bm C}$ and $\bm X$, respectively, that correspond to subject $i$.
Define $\bm F_i = \bm X_i \bm A$, $\bm F = \bm X \bm A$, $\tilde{\bm F} = \bm F^\top \bm W$, $\bm f_i = \bm F_i^\top \widehat{\bm C}_i$, $\tilde{\bm f} = \tilde{\bm F} \widehat{\bm C}$, $\bm J_i = \bm F_i^\top \bm W_i \widehat{\bm C}_i$, $\bm L_i = \bm F_i^\top \bm F_i$, $\tilde{\bm L}_i = \bm F_i^\top \bm W_i \bm F_i$.
Let $\tilde{\bm D} = \br{\bm I + \sum_l \lambda_l \text{diag}(\bm s_l)  }^{-1}$, and let $\tilde{\bm d}$ be its diagonal. 
Let $\odot$ denote the Hadamard product of two matrices of the same dimensions (i.e., elementwise multiplication).
Define $\bm g = \sum_{i = 1}^I \bm J_i \odot \bm f_i$ and $\bm G = \sum_{i = 1}^I (\bm J_i \tilde{\bm f}^\top) \odot \bm L_i$.
Proposition~\ref{prop:gcv} below simplifies iGCV for computational efficiency, where only one element, $\tilde{\bm d}$, depends on $\lambda_0$, $\lambda_{01}$, or $\lambda_1$; the remaining matrices (i.e., $\bm C, \tilde{\bm f}, \bm{f}, \bm F, \bm g, \bm G, \bm L_i, \tilde{\bm L}_i$) can all be precomputed.

\begin{proposition}
\label{prop:gcv}
The iCV criterion can be approximated as iGCV and simplified as follows:
    \begin{align*}
    \text{iGCV} &= \norm{\bm C}^2 - 2 \tilde{\bm d}^\top (\tilde{\bm f} \odot \bm{f}) + (\tilde{\bm f} \odot \tilde{\bm d})^\top (\bm F^\top \bm F) (\tilde{\bm f} \odot \tilde{\bm d}) + 2 \tilde{\bm d}^\top \bm g - 4 \tilde{\bm d}^\top \bm G \tilde{\bm d}  \\
    &\quad + 2 \tilde{\bm d}^\top \br{\sum_{i = 1}^I \set{\bm L_i (\tilde{\bm f} \odot \tilde{\bm d}) } \odot \set{\tilde{\bm L}_i (\tilde{\bm f} \odot \tilde{\bm d})}}.
\end{align*}
\end{proposition}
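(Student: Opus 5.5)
Following the fast-covariance approach of \cite{sparse_face}, we first replace the exact leave-one-subject-out prediction by a generalized-cross-validation approximation. For a linear smoother with $\widehat{\bm C}^{[i]}_i$, the exact identity $\widehat{\bm C}_i - \widehat{\bm C}^{[i]}_i = (\bm I - \bm S_{ii})^{-1}(\widehat{\bm C}_i - \bm S_i \widehat{\bm C})$ involves the diagonal block $\bm S_{ii}$. We approximate $(\bm I - \bm S_{ii})^{-1} \approx \bm I + \bm S_{ii}$, a first-order expansion valid since $\bm S_{ii}$ is small when $I$ is large. This gives
\[
\text{iGCV} = \sum_{i=1}^I \norm{(\bm I + \bm S_{ii})(\widehat{\bm C}_i - \bm S_i\widehat{\bm C})}^2 .
\]
We then expand this to second order in the smoother blocks, dropping the cubic and quartic terms in $\bm S_{ii}$. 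What remains is
\[
\norm{\widehat{\bm C} - \bm S\widehat{\bm C}}^2 + 2\sum_i (\widehat{\bm C}_i - \bm S_i\widehat{\bm C})^\top \bm S_{ii}(\widehat{\bm C}_i - \bm S_i\widehat{\bm C}),
\]
again with the quartic contributions discarded, consistent with the order of approximation in \cite{sparse_face}.

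Next we substitute Lemma~\ref{lemma:kb_smoother}. With $\bm F = \bm X\bm A$ and $\tilde{\bm D} = \text{diag}(\tilde{\bm d})$, we have $\bm S = \bm F\tilde{\bm D}\bm F^\top \bm W$. Since $\bm W$ is block diagonal, the blocks are $\bm S_i = \bm F_i \tilde{\bm D}\tilde{\bm F}$ and $\bm S_{ii} = \bm F_i\tilde{\bm D}\bm F_i^\top \bm W_i$. The fitted vector is $\bm S\widehat{\bm C} = \bm F(\tilde{\bm d}\odot\tilde{\bm f})$.

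The first term then expands as
\[
\norm{\widehat{\bm C}}^2 - 2\widehat{\bm C}^\top\bm F(\tilde{\bm d}\odot\tilde{\bm f}) + (\tilde{\bm f}\odot\tilde{\bm d})^\top\bm F^\top\bm F(\tilde{\bm f}\odot\tilde{\bm d}).
\]
Using $\bm F^\top\widehat{\bm C} = \sum_i \bm f_i = \bm f$ and the identity $\bm a^\top(\bm b\odot\bm c) = \bm c^\top(\bm a\odot\bm b)$, the middle term becomes $-2\tilde{\bm d}^\top(\tilde{\bm f}\odot\bm f)$. This produces the first three terms of Proposition~\ref{prop:gcv}.

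For the correction term we expand $(\widehat{\bm C}_i - \bm F_i\bm v)^\top \bm F_i\tilde{\bm D}\bm F_i^\top\bm W_i(\widehat{\bm C}_i - \bm F_i\bm v)$, where $\bm v = \tilde{\bm f}\odot\tilde{\bm d}$, into four bilinear pieces. Each piece is rewritten with $\bm x^\top\tilde{\bm D}\bm y = \tilde{\bm d}^\top(\bm x\odot\bm y)$, using the definitions of $\bm f_i$, $\bm J_i$, $\bm L_i$ and $\tilde{\bm L}_i$:
\begin{itemize}
\item The pure data piece is $\widehat{\bm C}_i^\top\bm F_i\tilde{\bm D}\bm F_i^\top\bm W_i\widehat{\bm C}_i = \tilde{\bm d}^\top(\bm f_i\odot\bm J_i)$. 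Summing over $i$ gives $2\tilde{\bm d}^\top\bm g$.
\item The two cross pieces each have the form $\bm v^\top\bm L_i\tilde{\bm D}\bm J_i$ (up to transposition). Using $\bm x^\top \text{diag}(\bm d)\bm M\text{diag}(\bm d)\bm y = \bm d^\top\{(\bm x\bm y^\top)\odot\bm M\}\bm d$, this equals $\tilde{\bm d}^\top\{(\bm J_i\tilde{\bm f}^\top)\odot\bm L_i\}\tilde{\bm d}$. Summing gives $-4\tilde{\bm d}^\top\bm G\tilde{\bm d}$.
\item The quadratic piece is $\bm v^\top\bm L_i\tilde{\bm D}\tilde{\bm L}_i\bm v = \tilde{\bm d}^\top\{(\bm L_i\bm v)\odot(\tilde{\bm L}_i\bm v)\}$, which is the last term.
\end{itemize}

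The main obstacle is the first step: justifying precisely which approximation of $(\bm I-\bm S_{ii})^{-1}$, and which truncation of the expansion, yields exactly the stated combination of terms. The algebraic rewriting afterwards is routine but sign- and transpose-sensitive. In particular, the cross term requires care because $\bm W_i$ appears on only one side, so I would verify the symmetry of $\bm L_i$ against the asymmetric placement of $\bm W_i$ when collecting the factor $4$.
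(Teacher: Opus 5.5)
Your reduction step matches the paper's. Both use the Xu--Huang leave-one-subject-out identity followed by the replacement $(\bm I-\bm S_{ii})^{-\top}(\bm I-\bm S_{ii})^{-1}\approx\bm I+\bm S_{ii}+\bm S_{ii}^\top$, giving $\norm{\bm S\widehat{\bm C}-\widehat{\bm C}}^2+2\sum_i(\bm S_i\widehat{\bm C}-\widehat{\bm C}_i)^\top\bm S_{ii}(\bm S_i\widehat{\bm C}-\widehat{\bm C}_i)$. The paper stops there and defers the remaining algebra to Proposition~2 of \cite{sparse_face}. Your route, $(\bm I-\bm S_{ii})^{-1}\approx\bm I+\bm S_{ii}$ and then discarding $\bm S_{ii}^\top\bm S_{ii}$, lands in the same place. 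However, the justification should be ``first order in $\bm S_{ii}$, with the residual $\widehat{\bm C}_i-\bm S_i\widehat{\bm C}$ held fixed'', not ``drop cubic and quartic terms''. The last term of the proposition, $\bm v^\top\bm L_i\tilde{\bm D}\tilde{\bm L}_i\bm v$, is cubic in $\tilde{\bm d}$, and you rightly keep it. Your rewriting of the first three terms, of the pure-data piece into $2\tilde{\bm d}^\top\bm g$, and of the quadratic piece is correct.

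The step that fails is the claim that the two cross pieces both equal $\bm v^\top\bm L_i\tilde{\bm D}\bm J_i$ up to transposition. Writing the correction as $(\bm f_i-\bm L_i\bm v)^\top\tilde{\bm D}(\bm J_i-\tilde{\bm L}_i\bm v)$, the cross pieces are
\[
\bm f_i^\top\tilde{\bm D}\tilde{\bm L}_i\bm v=\tilde{\bm d}^\top\{(\bm f_i\tilde{\bm f}^\top)\odot\tilde{\bm L}_i\}\tilde{\bm d},
\qquad
\bm v^\top\bm L_i\tilde{\bm D}\bm J_i=\tilde{\bm d}^\top\{(\bm J_i\tilde{\bm f}^\top)\odot\bm L_i\}\tilde{\bm d}.
\]
They differ by $\widehat{\bm C}_i^\top(\bm P_i\bm W_i-\bm W_i\bm P_i)\bm F_i\bm v$, where $\bm P_i=\bm F_i\tilde{\bm D}\bm F_i^\top$. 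This vanishes when each $\bm W_i$ is a multiple of the identity, since then $\bm J_i\propto\bm f_i$ and $\tilde{\bm L}_i\propto\bm L_i$. It does not vanish in general. Take one coefficient, $\bm F_i=(1,0)^\top$, $\widehat{\bm C}_i=(0,1)^\top$, and $\bm W_i$ with unit diagonal and off-diagonal $1/2$. Then $\bm f_i=0$, $\bm J_i=1/2$ and $\bm L_i=\tilde{\bm L}_i=1$, so the first piece is $0$ and the second is $\tilde d^{\,2}\tilde f/2$. Symmetry of $\bm L_i$ cannot rescue this, because the asymmetry comes from the one-sided $\bm W_i$.

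Consequently, collecting the cross terms into $-4\tilde{\bm d}^\top\bm G\tilde{\bm d}$ is exact only when $\bm W_i\propto\bm I$, which is the choice the paper makes in all its simulations. For a general weight matrix, the exact rewriting carries $-2\tilde{\bm d}^\top(\bm G+\bm G')\tilde{\bm d}$ with $\bm G'=\sum_i(\bm f_i\tilde{\bm f}^\top)\odot\tilde{\bm L}_i$. This term still does not depend on any $\lambda_l$ and can be precomputed. To close the argument, either add the hypothesis $\bm W_i\propto\bm I$ or prove the corrected formula. Your closing caveat pointed at exactly this spot, but the proposal asserts the equality rather than resolving it.
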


The proof of Proposition~\ref{prop:gcv} is given in Supplementary Materials, Section~\ref{subsec:s_kb_sp}.
With a grid search in three dimensions, the smoothing parameters $\lambda_0$, $\lambda_{01}$, and $\lambda_1$ are selected to minimize iGCV.

\subsubsection{Eigendecomposition}
\label{subsubsec:eigen_kb}

To obtain positive semi-definite estimates of the covariance functions and facilitate the KKL expansions in Equation~\eqref{eq:kkl_model}, the estimated covariance functions are concatenated into a block matrix 
$\bm{\widehat K}^R_Z = \begin{pmatrix}
    \bm{\widehat{K}}^R_0 & \bm{\widehat{K}}^R_{01} \\
    (\bm{\widehat{K}}^R_{01})^\top & \bm{\widehat{K}}^R_1
\end{pmatrix}$.
An eigendecomposition of $\bm{\widehat K}^R_Z$ yields eigenvalues $\widehat \lambda_n$ and eigenvectors $\bm{\widehat \phi}_n$  for $n = 1, ..., L$.
Following the literature \citep{yao2005fpca, greven2010lfpca, cui2023fmfpca}, negative eigenvalues are trimmed to 0 to produce the positive semi-definite matrix estimate
$\bm{\widehat K}_Z = \sum_{n = 1}^L \max(\lambda_n, 0) \bm{\widehat \phi}_n (\bm{\widehat \phi}_n)^\top$.
Estimates of $K_{0}(s,u)$, $K_{01}(s,u)$, and $ K_1(s,u)$ are then given by the corresponding blocks of $\bm{\widehat K}_Z$.

After trimming, the eigendecomposition is truncated to a finite-dimensional expansion based on proportion of variability explained (PVE) \citep{di2009mfpca}.
We retain the first $N_Z$ principal components, where $N_Z$ is large enough such that $\text{PVE} = \frac{\sum_{n = 1}^{N_Z} \lambda_n^Z}{\sum_{n = 1}^{L} \lambda_n^Z}$ is greater than a prespecified threshold.
Each vector $\widehat{\bm \phi}_n$ contains two blocks of $L$ entries each. Estimates for the eigenfunctions $\phi_n^0(s)$ are given by entries 1 to $L$, and estimates for $\phi_n^1(s)$ are given by entries $L + 1$ to $2L$.

\subsection{Within-Subject Covariance Estimation}
\label{subsec:kw}

We estimate the within-subject covariance function $K_W(s,u)$ using a similar penalized spline regression framework in as Section~\ref{subsec:kb} with two differences: (1) the residual cross-products are constructed from pairs of observations within the same visit and adjusted to target $K_W(s,u)$, and (2) the error variance $\sigma^2$ is also estimated in the regression.

Specifically, for Equation~\eqref{eq:expected_value}, when $j_1 = j_2 = j$, we have 
$
    \E \br{C_{i j j k_1 k_2}} 
    = K_0(s_{i j k_1}, s_{i j k_2}) + T_{ij} K_{01}(s_{i j k_1}, s_{i j k_2}) + T_{ij} K_{01}(s_{i j k_2}, s_{i j k_1}) + T_{ij}^2 K_{1}(s_{i j k_1}, s_{i j k_2}) + K_W(s_{i j k_1}, s_{i j k_2}) + \sigma^2 \delta_{k_1 = k_2}.
$
From Section \ref{subsec:kb}, we have obtained estimates for $K_0(s_{i j k_1}, s_{i j k_2})$, $K_{01}(s_{i j k_1}, s_{i j k_2})$, $K_{01}(s_{i j k_2}, s_{i j k_1})$, and $K_{1}(s_{i j k_1}, s_{i j k_2})$.
Let 
$\widehat A_{i j j k_1 k_2} = \widehat C_{i j j k_1 k_2} - \widehat K_0(s_{i j k_1}, s_{i j k_2}) - T_{ij} \widehat K_{01}(s_{i j k_1}, s_{i j k_2}) - T_{ij} \widehat K_{01}(s_{i j k_2}, s_{i j k_1}) - T_{ij}^2 \widehat K_{1}(s_{i j k_1}, s_{i j k_2}).$
It follows that the set $\set{\widehat A_{i j j k_1 k_2} | k_1 \not = k_2}$
is a collection of estimators of $K_W(\cdot, \cdot)$, and the set
$\set{\widehat A_{i j j k_1 k_2} | k_1 = k_2}$ is a collection of estimators of $K_W(\cdot, \cdot) + \sigma^2$.
We model $K_W(s,u)$ with tensor product splines
\[H_W(s,u) = \sum_{1 \leq \kappa \leq c, 1 \leq \ell \leq c} \theta^W_{\kappa \ell} B_{\kappa}(s) B_{\ell}(u),\] 
where $\bm{\Theta}^W = (\theta_{\kappa \ell}^W)_{1 \leq \kappa \leq c, 1 \leq \ell \leq c}$ is the coefficient matrix, and fit the model using penalized weighted least squares.
The details of constructing the penalized WLS model and choosing the smoothing parameter $\lambda$ are left to the Supplementary Materials, Section~\ref{sec:s_kw}, as they are similar to \cite{sparse_face}. Compared to Section~\ref{subsec:kb}, an additional variable $\sigma^2$ is added to the vector of regression coefficients, and an extra column $\bm \delta$ containing the indicators $\delta_{k_1 k_2}$ is added to the design matrix.
The estimated within-subject covariance function evaluated at $\mathcal{S}^*$ is $\widehat{\bm K}^R_W = \bm B^* \widehat{\bm \Theta}_W (\bm B^*)^\top$. 

The eigenfunctions $\psi_n(s)$ and eigenvalues $\lambda_n^W$ are estimated based on an eigendecomposition of $\widehat{\bm K}^R_W$. Negative eigenvalues are trimmed.
The expansion is truncated to $N_W$ components by PVE, as discussed in Section~\ref{subsubsec:eigen_kb}, to produce a low-dimensional representation.

\subsection{Score Prediction}
\label{subsec:score}

Once the eigenfunctions and eigenvalues are estimated, the LFPCA model in Equation~\eqref{eq:kkl_model} reduces to a mixed effects model, which we solve using Henderson's mixed model equations (MME; \cite{henderson1975mme}).
MME were introduced to FDA by \cite{cui2023fmfpca}; the main adaptation for our sparse data setting is that eigenfunctions are only evaluated at observed time points rather than on a common dense grid.
The details on constructing the matrices to estimate the scores $\xi_{i n_1}$ and $\zeta_{i j n_2}$ are given in the Supplementary Materials, Section~\ref{sec:s_mme}.
After score prediction, we substitute the mean functions and eigenfunctions in Equation \eqref{eq:kkl_model} to predict $Y_{ij}(s)$ at any desired time points in $\mathcal{S}$.

\section{Simulation}
\label{sec:simulation}

We conduct extensive simulations to evaluate the proposed methods in estimating covariance functions and eigenfunctions as well as predicting curves based on sparse observations.
We separate the simulations into three settings based on what methods are available.
Simulation 1 is in the sparse longitudinal setting, where, to our knowledge, only our proposed Sparse LFPCA (SLFPCA) is applicable.
Simulation 2 compares a multilevel version of the proposed approach (SLFPCA-M) to two existing MFPCA methods: MFPCA-SC \citep{di2009mfpca} and Fast MFPCA \citep{cui2023fmfpca}.
Simulation 3 compares the proposed approach to the LFPCA methods by \cite{greven2010lfpca} (LFPCA-G) with dense and complete data, as the available implementation of \cite{greven2010lfpca} does not accommodate missing data.
Due to space considerations, we summarize results from Simulation 2 and 3 and provide full results in the Supplementary Materials, Sections \ref{subsec:s_mfpca_sim} and \ref{subsec:s_greven_sim}.
For all simulations, we set the weight matrices $\bm W = \bm I$. In all plots, outliers are omitted.

\subsection{Simulation 1: Sparse LFPCA}
\label{subsec:sim_lpfca}

\subsubsection{Simulation design}
\label{subsubsec:simulation_design}

Simulation 1 considers the performance of SLFPCA as the number of subjects, visits, points per curve, and error variance change.
Data is generated according to the KKL decomposition in Equation~\eqref{eq:kkl_model}, with
$\mu(s, T_{ij})= 0$ for simplicity and $N_Z = N_W = 2$ eigenfunctions at each level. 
The eigenfunctions are
\begin{align*}
    &\phi^0_1(s) = \sqrt{\frac{2}{3}} \sin(\pi s) \quad 
    &&\phi^1_1(s) =  \sqrt{\frac{2}{3}} \quad  
    &&\psi_1(s) =  1 \\
    &\phi^0_2(s) = \cos(\pi s) \quad 
    &&\phi^1_2(s) = \sin(3 \pi s) \quad  
    &&\psi_2(s) = \sqrt{3} (2s - 1)
\end{align*}

Note that $\set{\phi_1^0(s), \phi_2^0(s)}$, $\set{\phi_1^1(s), \phi_2^1(s)}$, and $\set{\psi_1(s), \psi_2(s)}$ are each orthonormal sets.
In addition, the random intercept eigenfunctions $\phi_1^0(s)$ and $\phi_2^0(s)$ and the random subject eigenfunctions $\phi_1^1(s)$ and $\phi_2^1(s)$ are mutually orthogonal.
The true eigenvalues are $\lambda^{Z}_{n} = \lambda^{W}_{n} = \pt{\frac{1}{2}}^{n-1}$ for $n = 1, 2$.
For each of the $I$ subjects, the number of visits is drawn from a Poisson distribution with mean $J^*$, and for each visit, the number of observations $m_{ij}$ is drawn from a Poisson distribution with mean $m^*$, where $I$, $J^*$, and $m^*$ are simulation parameters specified below.
Visit times $T_{ij}$ are drawn from Unif(0, 1).
To create irregular sampling points for each curve, the $m_{ij}$ locations $s_{ijk}$ are drawn from a Unif(0,1) distribution.
The subject-level scores $\xi_n$ are drawn from $N(0, \lambda_n^Z)$, and the visit-level scores $\zeta_n$ are drawn from $N(0, \lambda_n^W)$.
White noise drawn from $N(0, \sigma^2)$ is added to each observation, where $\sigma^2$ is a simulation parameter.
Although observed sampling times for each curve are irregular and sparse, the eigenfunctions and curves are also generated on a grid $\mathcal{S}^*$ of $L = 1000$ points evenly spaced in $\mathcal{S} = [0,1]$ to compare estimates to.

The base simulation parameters are $I = 350$, $J^* = 30$, $m^* = 5$, and $\sigma^2 = .85$. One parameter is varied at a time; the variations are $I = 100, 800, 1200$, $J^* = 3, 8, 20$, $m^* = 3, 8$, and $\sigma^2 = .1, 2$. 
These parameters are in part motivated by the SensorKit data, in which $I = 365$, $J^* = 27.5$, $m^* = 4.2$.
For each set of simulation parameters, 300 replicate datasets are generated and analyzed. The main metrics of interest are the integrated squared error (ISE) of the covariance functions and eigenfunctions, as well as the mean squared error (MSE) of predictions $Y_{ij}(s)$.
Additional metrics, namely squared error of the eigenvalues and noise variance, are reported in the Supplementary Materials Section~\ref{subsec:s_additional_sim_lfpca}.

\subsubsection{Simulation results}
\label{subsubsec:simulation_results}

\begin{figure}
    \centering
    \includegraphics[width=\linewidth]{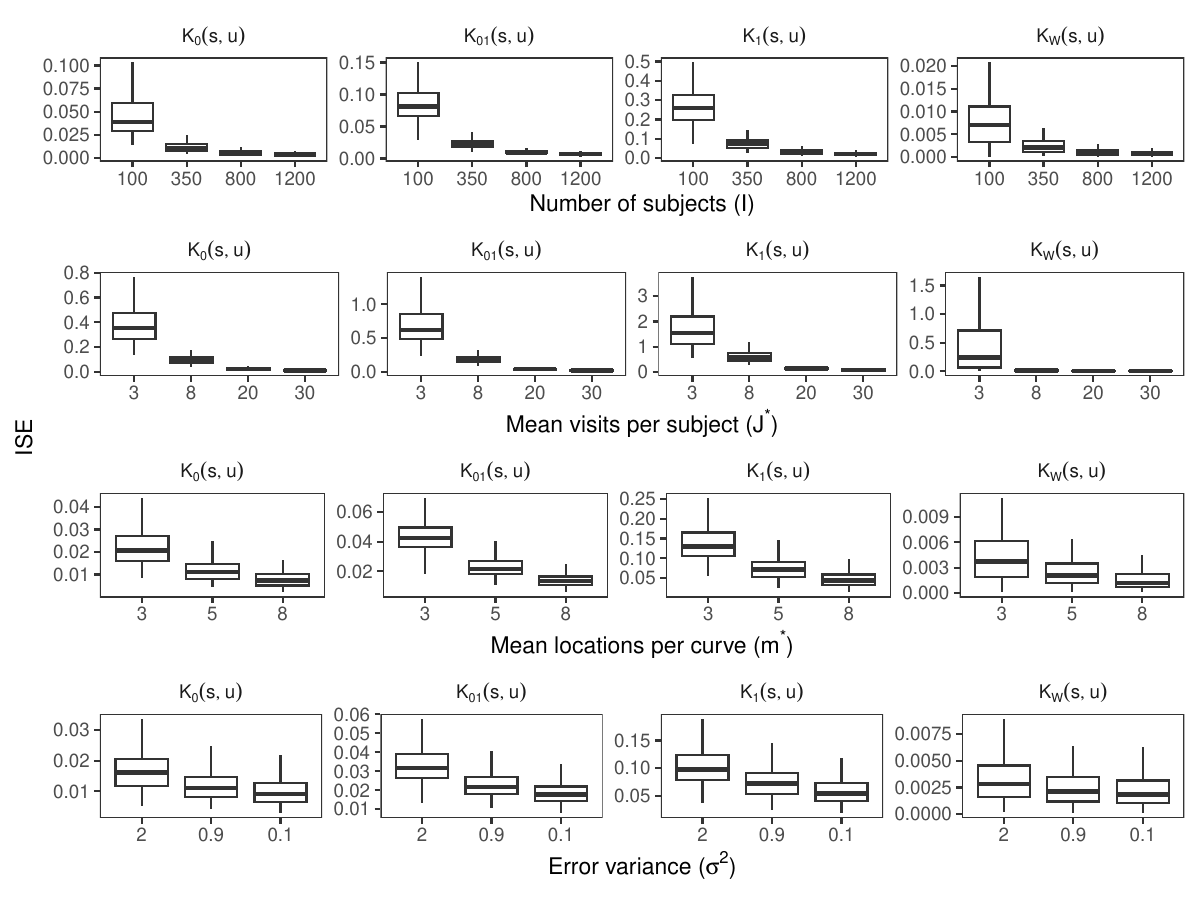}
    \caption{Boxplots of integrated squared error (ISE) for covariance functions in Simulation 1 (sparse LFPCA). Each row varies one parameter (number of subjects $I$, mean visits per subject $J^*$, mean observations per curve $m^*$, or error variance $\sigma^2$) while holding others at baseline values ($I = 350$, $J^* = 30$, $m^* = 5$, and $\sigma^2 = .85$). Columns correspond to $K_0(s,t)$, $K_{01}(s,t)$, $K_1(s,t)$, and $K_W(s,t)$.
    }
    \label{fig:cov_ise_plot}
\end{figure}

Overall, we found that all metrics improve with the number of subjects $I$, the mean number of visits $J^*$, the mean number of locations per curve $m^*$, and lower error variance $\sigma^2$, as expected.
The distribution of ISE of the estimated covariance functions across simulation scenarios is shown in Figure \ref{fig:cov_ise_plot}.
Each row of plots represents a different parameter being varied, and the four columns represent the four covariance functions.
Across all scenarios, the estimation of the within-subject covariance function $K_W(s,u)$ is more accurate than the other three covariance functions---the y-axis scale is an order of magnitude lower. 
This aligns with expectations because compared with the between-subject estimation, (1) there are fewer parameters to estimate, and (2) there are many more levels of random effects (i.e. more subject-visits than subjects).
After $K_W(s,u)$, estimation accuracy is highest for $K_0(s,u)$, followed by $K_{01}(s,u)$, and lowest for $K_1(s,u)$, which may reflect the difficulty in estimating covariances based on visit time covariates.
Additionally, the median ISE for all covariance functions decreases substantially with higher sample sizes, highlighting the need for increased sample size when estimating multiple covariance components simultaneously.

\begin{figure}
    \centering
    \includegraphics[width=\linewidth]{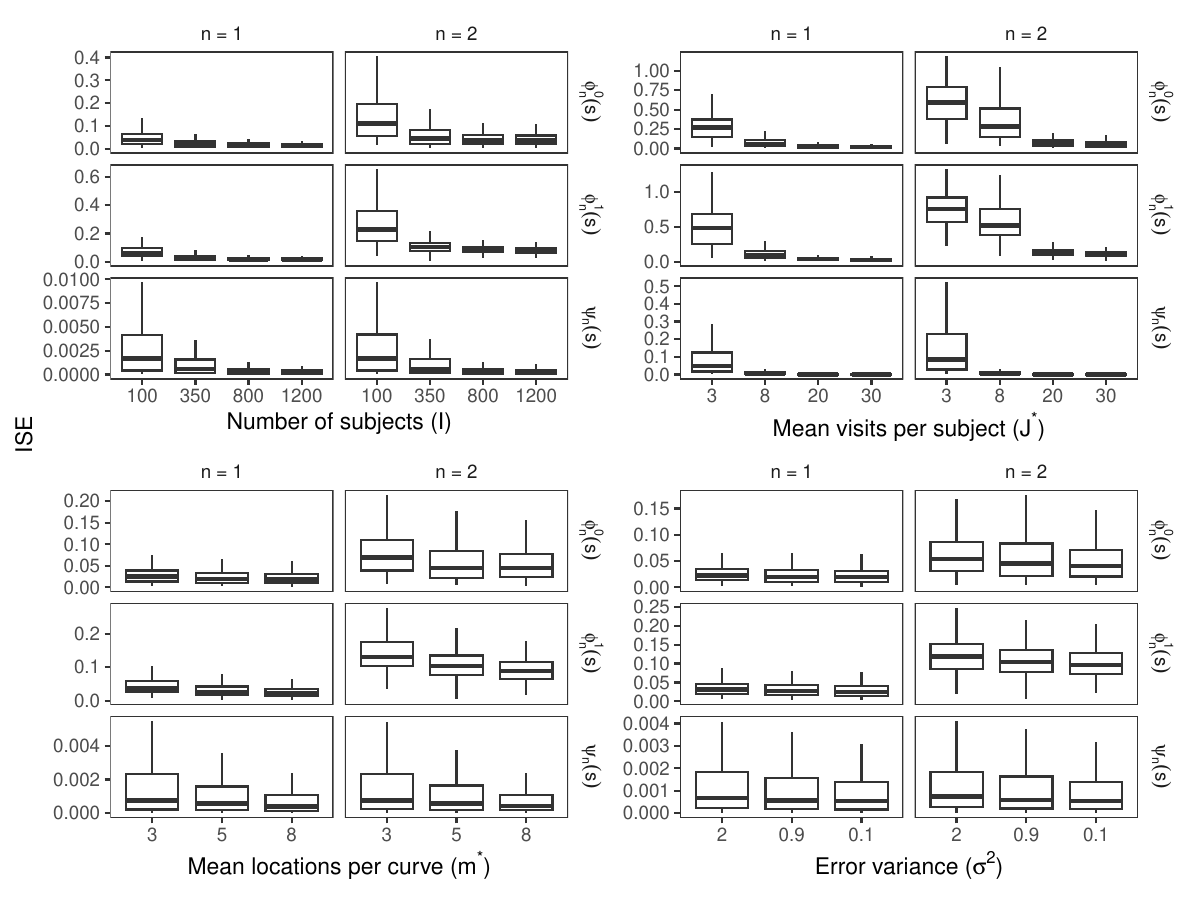}
    \caption{Boxplots of ISEs for eigenfunctions in Simulation 1 (sparse LFPCA). Each quadrant varies one parameter while holding others at baseline. Within each quadrant, the rows of plots represent $\phi_n^0$, $\phi_n^1$, and $\psi_n$; the columns represent the two eigenfunctions for each set.
    The base parameters are $I = 350$, $J^* = 30$, $m^* = 5$, and $\sigma^2 = .85$, and one parameter is varied at a time.
    }
    \label{fig:ef_ise_plot}
\end{figure}

The results in estimating the eigenfunctions are shown in Figure~\ref{fig:ef_ise_plot}. The patterns in covariance function estimation carry over to the eigenfunctions; for example, the eigenfunctions of $K_W(s,u)$ are estimated more accurately than the others.
Additionally, for the subject-level eigenfunctions, the second eigenfunction ($n=2$) is estimated worse than the first, whereas both within-subject eigenfunctions ($\psi_1(s)$, $\psi_2(s)$) are estimated about equally well.

The results for prediction MSE are shown in Supplementary Figure~\ref{fig:s_pred_mse_plot}.
As expected, prediction accuracy improves with more subjects and visits per subject, though with diminishing returns with higher $I$ and $J^*$. More dramatic improvements occur when increasing mean points per curve and reducing error variation. 

In summary, SLFPCA behaves as expected across the range of simulation parameters, capturing the covariance structure at both levels and producing reasonable curve predictions under sparse observations. Establishing reliable performance is crucial in this setting, as many modern longitudinal functional studies involve limited measurements per subject and substantial imbalance across visits.

\subsection{Summary of Simulations 2 and 3}

Simulation 2 evaluates SLFPCA-M as a special case of the proposed framework applied to multilevel data, comparing it against MFPCA-SC \citep{di2009mfpca} and Fast MFPCA \citep{cui2023fmfpca}.
Relative to MFPCA-SC, SLFPCA-M achieved comparable median ISE for the between-subject covariance $K_0(s,u)$ across all simulation parameters, while substantially outperforming MFPCA-SC in estimating the within-subject covariance $K_W(s,u)$, especially when sample sizes ($I$, $J^*$, $m^*$) were small or error variance was large.
Subject-level eigenfunctions were estimated comparably by both methods, but SLFPCA-M consistently outperformed MFPCA-SC on within-subject eigenfunctions. 
Prediction MSE was comparable between the two methods, with slight advantages for SLFPCA-M when $J^*$ was small. 
Compared to Fast MFPCA, SLFPCA-M achieved substantially higher estimation accuracy, particularly for the within-subject covariance and eigenfunctions, at the cost of increased computation time. 
Full results are presented in Supplementary Section~\ref{subsec:s_mfpca_sim}.

Simulation 3 compares SLFPCA to LFPCA-G \citep{greven2010lfpca} in the complete and dense data setting.
For covariance estimation, SLFPCA achieved better performance for $K_0(s,u)$, while LFPCA-G performed better for $K_{01}(s,u)$ and $K_1(s,u)$; estimation of $K_W(s,u)$ was comparable between the two methods. 
SLFPCA had an advantage in estimating $\phi^0_1(s)$ and $\phi^1_1(s)$, while LFPCA-G performed better for $\phi^0_2(s)$. Both methods estimated within-subject eigenfunctions $\psi_n(s)$ similarly. SLFPCA substantially  outperformed LFPCA-G in curve prediction across all simulation scenarios. 
Full results are presented in Supplementary Section~\ref{subsec:s_greven_sim}.

In summary, the proposed approach provides an accurate, scalable, and unified framework for both multilevel and longitudinal sparse FPCA, with comparable or superior performance compared to existing methods.
Nonetheless, the flexibility to estimate cross-covariances in the sparse longitudinal setting is a key advantage not available in existing methods.

\section{Application to Intern Health Study SensorKit Data}
\label{sec:application}

\subsection{Background and Data Description}
\label{subsec:data}

Mental fatigue poses a significant risk to workplace safety and productivity, making it important to detect subtle cognitive declines early enough to intervene \citep{ricci2007fatigue, mccormick2012surgeon}. 
However, reliable measures of mental fatigue require active participation, limiting their real-world utility. 
To address this gap, \cite{fang2026sensorkit_fatigue} studied typing speed performance as a potential ambulatory indicator of mental fatigue within the Intern Health Study (IHS), an annual cohort study of mental health and stress that follows first-year medical residents in the US from the pre-internship period through the intern year (starting July 1st).
This cohort is well-suited for studying mental fatigue as physician interns routinely work extended hours and face high-stakes decisions.
Participants starting residency in 2023 in the United States who used an iPhone were offered enrollment into the Apple SensorKit arm of the study.
Among other measurements, Apple SensorKit \citep{apple_sensorkit, funk2025sensorkit} allows researchers to passively collect typing speed and accuracy without viewing the content typed, enabling scalable and non-intrusive assessment of typing performance in the large IHS cohort.
While \cite{fang2026sensorkit_fatigue} studied both typing speed and rate of deletions, we focus on typing speed because they found relatively minimal changes in deletion rate.

In addition to Apple SensorKit, participants were also provided their choice of wearable device (Fitbit Charge 4, Inspire 2, or Apple Watch) or compensation if they already owned a Fitbit, Apple, or Garmin watch. From the device, sleep and wake times were extracted.
For each typing session, time awake was calculated as the difference between the timestamp of the typing session initiation (rounded to the nearest half hour for privacy) and the wake time of the most recent sleep episode.
Thus, for each participant for each wake period, we have a sample of typing speed observations (one observation per typing session) as a function of time awake, as shown in Figure~\ref{fig:sk_figure}.
The time of a sleep episode (denoted by $T$ in Figure~\ref{fig:sk_figure}) is the difference between the wake time for that episode and 6:00 a.m. on June 29. For example, for participant 232 (middle column), $T = 23$ implies the participant woke up around 6:00 a.m. on July 22, and $ T = 36.8$ implies that the participant woke up at about 1:12 a.m. on August 5; nonstandard wake times are not rare in this cohort.
The sample size for this analysis is 365 participants.
Data from the first two months of the internship (July and August) are included in the analysis, but the number of sleep episodes per participant varies between 1 and 74, with a median of 28 sleep episodes.
The number of typing sessions per day varies between 1 and 36, with a median of 3 typing sessions a day.
Since the sample's average sleep duration is 6.94 hours, only 6.4\% of observations are beyond 17 hours awake. Because there may be a missing sleep episode not captured by the tracker, we exclude observations beyond 17 hours awake to ensure data quality.

Using the proposed SLFPCA framework, our primary goal is to study between-subject differences in patterns of typing speed, such as when typing speed starts to decline, and to characterize these differences by individual-level covariates such as demographics and mental health.
Additionally, we aim to characterize within-person (i.e., day-to-day) differences based on within-person covariates such as the duration of the preceding sleep episode.
Furthermore, it is of interest to dynamically predict an individual's typing performance as their day progresses.
If estimated typing speed decreases below a certain threshold, informing the individual allows them to plan precautionary measures.

\subsection{Results}
\label{subsec:app_results}

\begin{figure}
    \centering
    \includegraphics[width=0.68\linewidth]{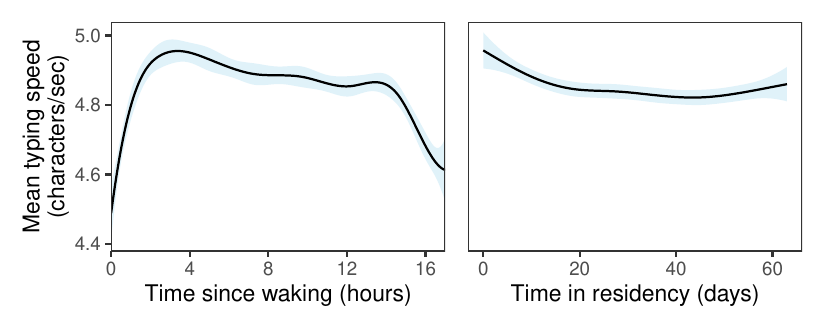}
    \caption{Estimated marginal mean functions in the SensorKit data. The left panel shows $\mu(s)$, the smoothed mean as a function of time awake in hours (averaged across participants and days), and the right panel shows $\mu(T_{ij})$, the mean as a function of days in residency.
    The y-axis is mean typing speed in characters per second.
    The blue ribbons indicate 95\% credible intervals based on the Bayesian posterior covariance of the smoothing coefficients.
    The gray dashed line is the overall average typing speed.}
    \label{fig:mean_sk}
\end{figure}

SLFPCA takes 10.1 minutes to run on the SensorKit typing speed data.
In addition to the population-level mean function $\mu(s, T_{ij})$ used to center the data (step 1), we estimated marginal mean functions $\mu(s)$ and $\mu(T_{ij})$ for visualization (Figure \ref{fig:mean_sk}).
In the left panel, the mean as a function of time of day, $\mu(s)$,  mirrors the average typing speed function in \cite{fang2026sensorkit_fatigue}.
On average, typing speed increases upon waking, peaks at about 3.4 hours after awaking, decreases gradually until about 14 hours, and then drops sharply.
The smaller number of observations beyond 16 hours results in wide confidence intervals after $s = 16$.
The right panel of Figure~\ref{fig:mean_sk} reveals a longitudinal trend not previously examined: on average, typing speed decreased over the course of the study, with the decline most pronounced near the start of the study.
This change over study time motivates analyzing the data using an FPCA model with a longitudinal component.

\begin{figure}
    \centering
    \includegraphics[width=.8\linewidth]{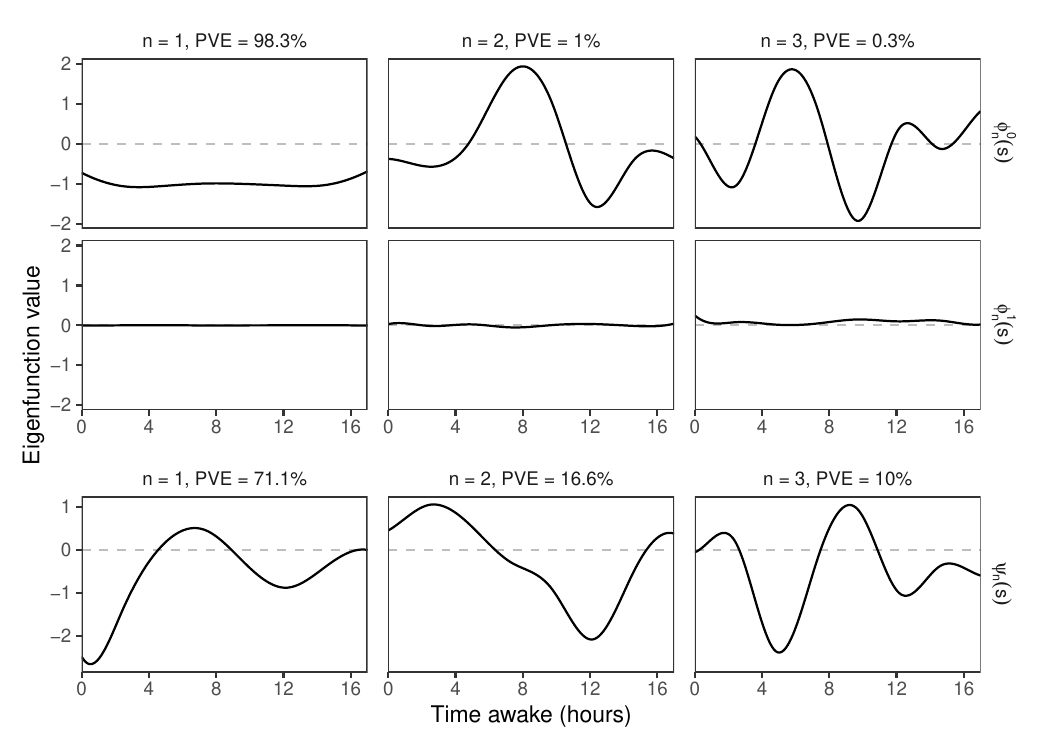}
    \caption{Estimated eigenfunctions from SensorKit typing speed data. The x-axis is time awake in hours. The first row shows the subject-level random intercept eigenfunctions, the second row shows the subject-level random slope, and the third row shows the visit-level eigenfunctions.
    The labels above each panel give the percent of variability explained (PVE) within the level by the eigenfunction.}
    \label{fig:ef_sk}
\end{figure}

Based on the estimated eigenvalues, 90\% of variability is accounted for at the subject-level, and 10\% is at the visit-level, implying larger differences in typing speed trajectories between subjects than across days. 
The top three eigenfunctions for each level and their PVE within the level are shown in Figure~\ref{fig:ef_sk}. 
Between subjects, 98.3\% of variability is explained by the first principal component $\bm \phi_1(s)$.
The first eigenfunction for $K_0(s,u)$, $\phi_1^0(s)$, decreases slightly after waking and then is mostly constant, mirroring the mean function.
For interpretation, Supplementary Figure~\ref{fig:s_ef_pm_plot} shows the population average plus and minus a suitable multiple of each eigenfunction.
Those with positive scores tend to type slower overall, while those with negative scores type faster.
Note that eigenfunctions are identifiable only up to multiplication by -1.
The longitudinal component $\phi_1^1(s)$ of the first PC is approximately zero, indicating that there are minimal subject-specific longitudinal effects, although there are global longitudinal effects.
The other eigenfunctions explain relatively little variability.

Within each subject, the first eigenfunction explains 71.1\% of the variability. It is strongly negative from 0 to 4 hours, representing variation in typing speed initially upon waking.
Visits with negative scores start higher and have a second peak around 12 hours.
The second eigenfunction explains 16.6\% of variability, which is positive for the first 6 hours and negative between 6 and 15 hours. It also captures a dip around 12 hours; visits with positive scores have lower typing speed at that time.
The third eigenfunction represents 10\% of variability; visits with positive scores have lower typing speeds around 4 hours.
The PVE is more spread out across the within-subject eigenfunctions compared to the between-subject eigenfunctions, indicating that the day-to-day patterns are more complex.

Next, we examine univariate associations between subject-level covariates and  scores corresponding to the first subject-level eigenfunction (the only one explaining more than 5\% of variability), and the associations between visit-level covariates and scores corresponding to the first three visit-level eigenfunctions.
We use a t-test for continuous variables and an ANOVA for categorical variables.
We consider the following subject-level covariates: age, sex, ethnicity, marital status, neuroticism, previous history of depression, depression score (PHQ), anxiety score (GAD), stress score (SLE), tobacco use, alcohol use, and cannabis use.
All variables are measured at baseline.
No subject-level variables showed significant associations with scores on the first eigenfunction, i.e., the above variables are not associated with a participant's overall typing speed.

We consider the following visit-level covariates: sleep duration (hours), time of waking (hours), and whether it is a weekend.
Higher scores on the first eigenfunction are negatively associated with sleep duration ($\beta = -0.0004$, $p < .001$) and positively associated with the weekends ($\beta = .0024$, $p = .010$).
That is, shorter sleep durations and weekends tend to be associated with lower typing speeds upon waking.
This is highly interpretable since sleep deprivation is one of the two main factors influencing real-world mental fatigue along with time on task \citep{fang2026sensorkit_fatigue}.
Earlier wake time ($\beta = -0.0002$, $p  = .001$) is associated with scores on the second eigenfunction, i.e., earlier wake times are associated with a lower typing speed around 12 hours.
Waking earlier may shift the dip in alertness to an earlier point, relative to the participant's mean profile.

The proposed method can also produce predicted trajectories based on sparse observations. Figure~\ref{fig:sk_figure} shows a sample of nine trajectories predicted by SLFPCA.
By combining population averages, subject-level effects, and visit-level effects, SLFPCA generates full trajectories over the entire functional domain, even from sparsely observed data. For days with more observations, such as in the top right and middle right panels, the predicted trajectories deviate more from the participant averages, whereas for days with fewer observations, such as all three panels in the third column, the predicted trajectories are shrunk toward the participant's average.

\section{Discussion}
\label{sec:discussion}

In this paper, we proposed sparse LFPCA, the first FPCA framework for sparse and irregular longitudinal functional data. 
Existing LFPCA methods \citep{greven2010lfpca} were developed for densely observed data on a common grid, and existing sparse FPCA methods \citep{sparse_face, di2014sparse_mfpca, cederbaum2016sparse_flmm} do not fully accommodate longitudinal structures.
To address this gap, we introduced a structured penalized spline framework that estimates and smooths multiple covariance components simultaneously, borrowing information across locations.
Simulations show the proposed method successfully estimates covariance functions and eigenfunctions and predicts underlying curves.
Additionally, sparse LFPCA achieves comparable or better performance relative to the LFPCA method designed for dense data by \cite{greven2010lfpca}, and the MFPCA methods by \cite{cui2023fmfpca} and \cite{di2009mfpca, di2014sparse_mfpca}.

Applied to the SensorKit typing speed data from the IHS, sparse LFPCA reveals interpretable subject- and visit-level patterns not captured by the population-level marginal analysis in \cite{fang2026sensorkit_fatigue}, which adopted generalized estimated equations.
Specifically, we found that each participant's baseline typing speed accounts for a majority of between-participant variability. Day-to-day variation only comprises 10\% of total variability but exhibits more complex structures. The primary source of within-subject variation comes from typing speed upon waking, which is associated with daily sleep duration and whether it is a weekend, while the second largest source captures a dip around 12 hours awake and is associated with earlier wake times.
These results demonstrate how sparse LFPCA enables a more granular understanding of typing performance and its relationship with potentially actionable factors.

The proposed method and computational framework can be generalized to further enhance flexibility, speed, and scientific utility. First, while this paper focuses on LFPCA, the proposed framework can be extended to more general functional linear mixed models. Specifically, the regression in Section~\ref{subsubsec:kb_estimation} can flexibly accommodate other random effect structures by modifying the design matrix appropriately.
Second, the methods can be computationally optimized; for example, running an exhaustive grid search to estimate three smoothing parameters can be slow compared to other optimization methods.
Third, developing inferential tools to quantify the uncertainty associated with the estimated eigenfunctions and predicted trajectories would further enhance the practical utility of the proposed framework. Finally, it is of scientific interest to develop a joint inferential framework that correlates the nuanced functional patterns identified at subject- and day-levels with a final cross-sectional outcome or longitudinal outcomes. We leave these directions for future work.

\noindent \textbf{Acknowledgments and funding:} 
We gratefully acknowledge Intern Health Study participants and research coordinators for their contributions, without whom this research would not have been possible.
ZW and SS are partly supported by an NIH grant (R01MH101459) and a Patient-Centered Outcomes Research Institute (PCORI) Project Program Award (ME-2025C1-44006). All statements in this report, including its findings and conclusions, are solely those of the authors and do not necessarily represent the views of the Patient-Centered Outcomes Research Institute (PCORI), its Board of Governors or Methodology Committee.

\bibliographystyle{plainnat}
\bibliography{refs}

\clearpage

\setcounter{section}{0}
\setcounter{subsection}{0}
\setcounter{equation}{0}
\setcounter{figure}{0}
\setcounter{table}{0}

\renewcommand{\thesection}{S\arabic{section}}
\renewcommand{\thesubsection}{S\arabic{section}.\arabic{subsection}}
\renewcommand{\theequation}{S\arabic{section}.\arabic{equation}}

\renewcommand{\thefigure}{S\arabic{figure}}
\renewcommand{\thetable}{S\arabic{table}}

\begin{center}
\huge Supplementary Materials
\end{center}

Section \ref{sec:s_review} expands on limitations of existing methods in the setting of sparse longitudinal functional data.
Section \ref{sec:s_kb} contains details on constructing the regression $\widehat{\bm C} = \bm X \bm \alpha$ and the proofs referenced in the smoothing parameter selection.
Section \ref{sec:s_kw} contains details on constructing the regression for the within-subject covariance estimation and outlines the remainder of the penalized spline estimation framework.
Section \ref{sec:s_mme} describes the mixed models equations for estimating scores.
Section \ref{sec:s_weight} gives the weight matrices for the regressions to increase estimation efficiency.
Section \ref{sec:s_mfpca} describes the modifications to the SLFPCA necessary to conduct MFPCA as a special case.
Section \ref{sec:s_additional_sim} gives additional simulation results for Simulations 1 (sparse LFPCA) and 2 (MFPCA) and presents the design and results for Simulation 3 (LFPCA with complete data).

\section{Limitations of Existing Approaches for Sparse Functional Data}
\label{sec:s_review}

\cite{greven2010lfpca} proposed a method to estimate the LFPCA model in Equation~(2.2) of the main text for dense data, summarized in Algorithm~\ref{alg:greven}.
The main steps are to estimate the fixed effects (step 1), estimate the covariance functions (steps 2 and 3), construct eigendecompositions of the covariance functions (step 4), and predict scores (step 5).
The primary challenge is in accurately estimating the covariance functions.
Specifically, let $r_{ijk} = Y_{ij}(s_{ijk}) - \mu(s_{ijk}, T_{ij})$ denote the residuals after subtracting the fixed effects, and let $C_{i j_1 j_2 k_1 k_2} = r_{ij_1 k_1} r_{i j_2 k_2}$ be cross-products of the residuals, where $j_1, j_2 \in \set{1, ..., J_i}$ index visits and $k_1, k_2 \in \set{1, ..., m_{ij}}$ index observations for each subject.
Based on the model in Equation~(2.1) of the main text, the expected value of $C_{i j_1 j_2 k_1 k_2}$ is given in Equation~(2.3) of the main text, restated here:
\begin{equation}
    \begin{aligned}
    \E \br{C_{i j_1 j_2 k_1 k_2}} 
    &= K_0(s_{i j_1 k_1}, s_{i j_2 k_2}) + T_{ij_2} K_{01}(s_{i j_1 k_1}, s_{i j_2 k_2}) + T_{ij_1} K_{01}(s_{i j_2 k_2}, s_{i j_1 k_1}) \\
    &\quad + T_{ij_1} T_{i j_2} K_{1}(s_{i j_1 k_1}, s_{i j_2 k_2}) + \br{K_W(s_{i j_1 k_1}, s_{i j_2 k_2}) + \sigma^2 \delta_{k_1  k_2} } \delta_{j_1 j_2}.
    \end{aligned}
\end{equation}
where $\delta$ is the Kronecker delta.

\begin{algorithm}
\caption{LFPCA \citep{greven2010lfpca}}\label{alg:greven}
\begin{enumerate}
    \item Estimate the fixed effect surface $\mu(s, T_{ij})$ under the working independence model $Y_{ij}(s) = \mu(s, T_{ij}) + \epsilon_{ij}(s)$.
    \item Using linear regression, estimate the covariance functions $K_{\bm Z}(s,t)$ and $K_W(s,t)$ from residuals $Y_{ij}(s) - \mu(s, T_{ij})$.
    \item Smooth the raw covariance function estimates from step 2; this also provides an estimate for $\sigma^2$.
    \item Construct eigendecompositions of the smoothed covariance functions, and truncate the expansions to provide a low-dimensional representation of $\bm Z_i(s)$ and $W_{ij}(s)$.
    \item Predict subject- and visit- specific scores using Best Linear Unbiased Predictions (BLUPs).
\end{enumerate}
\end{algorithm}

The cross-products $C_{i j_1 k_1 j_2 k_2}$ underpin the covariance estimation in \cite{greven2010lfpca} as well as the proposed methods.
Specifically, based on Equation~(2.3) of the main text, the covariance functions $K_{\bm Z}(s,u)$ and $K_W(s,u)$ can be estimated by fitting a linear regression model at each location $(s, u) \in \mathcal{S}^* \times \mathcal{S}^*$.
For each outcome $C_{i j_1 k_1 j_2 k_2}$, the covariates in the design matrix are $(1, T_{i j_2}, T_{i j_1}, T_{i j_1} T_{i j_2}, \delta_{j_1 j_2})$, associated with the parameters $\{K_0(s, u), K_{01}(s,u), K_{01}(u,s), K_1(s,u), K_W(s,u) + \sigma^2\delta_{k_1k_2}\}$ where $(s, u) = (s_{i j_1 k_1}, s_{i j_2 k_2})$.
This regression yields raw estimates of each covariance function on the grid $\mathcal{S}^*$.
After the regression step, a bivariate smoother is applied separately over each covariance function to obtain the final estimates.

While this approach works well for dense data on a common grid, it faces significant limitations when applied to sparse data.
In the sparse and irregular data setting, the number of observations per curve, $m_{ij}$, is assumed to be relatively small (e.g., about 2 to 5), and the locations $s_{ijk}$ are a random sample in $\mathcal{S}$ rather than being on a grid.
Thus, the number of residual cross-products $C_{i j_1 k_1 j_2 k_2}$ can be very small compared to the number of unknown parameters.
Specifically, if the times $s_{ijk}$ are rounded to the nearest location on a grid of $L$ points, this leads to a total of only $\sum_{i = 1}^I \frac{m_i (m_i + 1)}{2}$ residual cross-products (the number of subjects in the regression model), where $m_i = \sum_{j = 1}^J m_{ij}$ is the total number of observations for subject $i$. 
However, factoring in symmetry constraints, there are $\frac{5 L (L + 1)}{2} + 1$ parameters to estimate in total.
Suppose that $m_i = \rho L$ for all $i$, the model is not estimable if $\frac{I \rho L (\rho L + 1)}{2} < \frac{5 L (L + 1)}{2} + 1$. For example, when $I = 100$ and $L = 100$, the model becomes inestimable when $\rho < .22$, which is a quite common setting for sparse observations. 
Thus, a direct application of \cite{greven2010lfpca} is statistically infeasible for sparse observations.
Moreover, because the regression in \cite{greven2010lfpca} does not borrow information between pairs of locations, the sample size to estimate parameters $K_0(s, u), K_{01}(s, u), K_{01}(u, s), K_1(s, u)$, and $K_W(s, u) + \sigma^2$ for each $(s,u)$ is very limited, leading to high-variance or unstable estimates.

\cite{cederbaum2016sparse_flmm} considered this issue in the sparse functional linear mixed model (FLMM) setting and incorporated smoothing to borrow information across locations.
Instead of a linear regression, they propose fitting a varying coefficient model, where the auto-covariances are estimated as smooth bivariate surfaces.
However, they focus on a simplified crossed design and assume that cross-covariance functions such as $K_{01}(s,u)$ are zero.
In the longitudinal setting, this implies independence between the functional random intercept $Z_{i,0}(s)$ and the functional random slope $Z_{i, 1}(s)$, a strong and unrealistic assumption for most application scenarios.
Moreover, the estimation remains computationally intensive, taking over two hours to fit the crossed model with 707 curves (sample size in \cite{cederbaum2016sparse_flmm}) on a standard computer, as described in their paper. This limits its scalability to larger datasets, such as the Intern Health Study which contains 10,306 daily curves. 

\section{Between-subject covariance estimation}
\label{sec:s_kb}

\subsection{Construction of outcome vector, design matrix, regression coefficients}
\label{subsec:s_kb_matrix}

This section shows the details of constructing the outcome vector $\widehat{\bm C}$, the design matrix $\bm X$, and the vector of regression coefficients $\bm \alpha$ when estimating the subject-level covariance functions.
Let $q_i = \sum_{j_1 = 1}^{J_i - 1} \sum_{j_2 = j_1 + 1}^{J_i} m_{i j_1} m_{i j_2}$ be the total number of cross-visit auxiliary variables for subject $i$.
Define
$\widehat{\bm C}_{ij_1 j_2 k} = 
    \set{\widehat C_{ij_1 j_2 k 1}, \widehat C_{ij_1 j_2 k 2}, ...,\widehat C_{i j_1 j_2 k m_{ij_2}}}^\top \in \R^{m_{i j_2}}$,
$\widehat{\bm C}_{ij_1 j_2} = \set{\widehat{\bm C}_{ij_1 j_2 1}^\top, ..., \widehat{\bm C}_{ij_1 j_2 m_{ij_2}}^\top}^\top \in \R^{m_{ij_1} m_{i j_2}}$, 
$\widehat{\bm C}_{i j_1} = \set{\widehat{\bm C}_{ij_1 (j_1 + 1)}^\top, ...,\widehat{\bm C}_{ij_1 J_i}^\top}^\top \in \R^{m_{j_1}\sum_{j_2 = j_1 + 1}^{J_i} m_{ij_2}}$, and
$\widehat{\bm C}_{i} = \set{\widehat{\bm C}_{i 1}^\top, ..., \widehat{\bm C}_{i J_i}^\top}^\top \in \R^{q_i}$. Then $\widehat{\bm C} = \pt{ \widehat{\bm C}_1^\top, ..., \widehat{\bm C}_I^\top}^\top \in \R^{q}$ is the outcome vector of the regression model, where $q = \sum_{i = 1}^I q_i$.

To construct the design matrix, first define
$\bm{b}(s) = \set{B_1(s), ..., B_c(s)}^\top \in \R^c$.
Considering estimation of $K_0(s,u)$ first, let $\bm \theta_0 = \text{vech} \bm \Theta_0 \in \R^{\frac{c (c + 1)}{2}}$. Let $\bm{G}_c \in \R^{c^2 \times \frac{c(c+1)}{2}}$ be the duplication matrix such that $\text{vec} \bm \Theta_0 = \bm{G}_c \bm{\theta}_0$.
Note that $H_0(s,u) = (\bm b(u) \otimes \bm b(s))^\top \bm{G}_c \bm \theta_0$.
For $j_1 < j_2$, let $\bm{B}_{ij_1 j_2 k_1} = \br{\bm{b}(s_{ij_2 1}), ..., \bm{b}(s_{ij_2 m_{ij_2}})} \otimes \bm{b}(s_{ij_1k_1}) \in \R^{c^2 \times m_{ij_2}}$, 
$\bm{B}_{ij_1 j_2} = \br{\bm B_{i j_1 j_2 1}, ..., \bm B_{i j_1 j_2 m_{i j_1}}}^\top \in \R^{m_{ij_1} m_{i j_2} \times c^2}$,
$\bm{B}_{ij_1} = \br{\bm B_{ij_1(j_1 + 1)}^\top, ..., \bm B_{ij_1 J_i}^\top}^\top \in \R^{m_{i j_1}\sum_{j_2 = j_1 + 1}^{J_i} m_{i j_2} \times c^2}$, 
$\bm B_i = \br{\bm B_{i1}^\top, ..., \bm B_{i J_i}^\top}^\top \in \R^{q_{i} \times c^2}$,
and $\bm B = \br{\bm B_1^\top, ..., \bm B_I^\top}^\top \in \R^{q \times c^2}$. 
The columns of $\bm B_{i j_1 j_2}$ correspond to $\theta^0_{\kappa \ell}$ for $1 \leq \kappa \leq c$ and $1 \leq \ell \leq c$; the same is true for $\bm B_{i j_1}$, $\bm B_i$, and $\bm B$.
Define $\bm X_0 = \bm B \bm G_c \in \R^{q \times \frac{c (c + 1)}{2}}$, the component of the design matrix for estimating $K_0(s,u)$.

The portions of the design matrix for estimating $K_{01}(s,u)$ and $K_1(s,u)$ are constructed similarly.
Define $\bm \theta_{01} = \text{vec} \bm \Theta_{01} \in \R^{c^2}$.
There are $c^2$ coefficients to be estimated because $K_{01}(s,u)$ is not symmetric.
Define $\bm B_{ij_1 j_2}^{s}$ by modifying $\bm B_{i j_1 j_2}$, swapping the column corresponding to $\theta^0_{\kappa, \ell}$ with the column corresponding to $\theta^0_{\ell, \kappa}$.
Define $\bm B^{01}_{i j_1 j_2} = T_{i j_2} \bm B_{i j_1 j_2}$, and construct $\bm B^{01}$ by stacking matrices for the subjects and visit pairs, in the same way as $\bm B$.
Similarly, define $\bm B^{10}_{i  j_1 j_2 } = T_{i j_1} \bm B^s_{i j_1 j_2}$, and construct $\bm B^{10}$ in the same way as $\bm B$.
Then, the portion of the design matrix corresponding to the estimation of $K_{01}(s,u)$ is $\bm X_{01} = \bm B^{01} + \bm B^{10} \in \R^{q \times c^2}$.
Finally, define $\bm \theta_1 = \text{vech} \bm \Theta_1 \in \R^{\frac{c (c + 1)}{2}}$.
Let $\bm B^1_{i j_1 j_2} = T_{i j_1} T_{i j_2} \bm B_{i j_1 j_2}$, and construct $\bm B^1$ in the same way as $\bm B$. Then the component of the design matrix for estimating $K_1(s,u)$ is $\bm X_1 = \bm B^1 \bm G_c \in \R^{q \times \frac{c (c + 1)}{2}}$.

Combining the three portions, the full design matrix is $\bm X = [\bm X_0, \bm X_{01}, \bm X_{1}] \in \R^{q \times p}$, where $p = 2 c^2 + c$, the total number of coefficients to estimate.
The full vector of coefficients to estimate is $\bm \alpha = \pt{\bm \theta_0^\top, \bm \theta_{01}^\top, \bm \theta_1^\top}^\top \in \R^{p}$.

\subsection{Weights for weighted least squares}
\label{sec:s_weight}

Section~2.4.2 of the main text and Section~\ref{subsec:s_kw_weights} reference weight matrices used in weighted least squares: $\bm W_i$ for between-subject covariance estimation and $\widetilde{\bm W}_i$ for within-subject, specified as the inverse of $\Cov (\bm C_i)$ and $\Cov (\bm A_i)$, respectively.
Proposition~2.1 of the main text is used to construct the weight matrices; the proof is below.
Compared to the corresponding proposition in \cite{sparse_face} (Proposition 1), $\bm M_{i j_1 j_2 k_1 k_2}$ contains four additional terms for the between-subject covariances.

\begin{restatedproposition}{2.1}
Define $\bm{M}_{i j_1 j_2 k_1 k_2} = \begin{pmatrix}
    K_0(s_{i j_1 k_1}, s_{i j_2 k_2}) \\
    T_{i j_2} K_{01}(s_{i j_1 k_1}, s_{i j_2 k_2}) \\
    T_{i j_1} K_{01}(s_{i j_2 k_2}, s_{i j_1 k_1})  \\
    T_{i j_1} T_{i j_2} K_1(s_{i j_1 k_1}, s_{i j_2 k_2}) \\
    K_W(s_{i j_1 k_1}, s_{i j_2 k_2}) \delta_{j_1 j_2} \\
    \sigma^2 \delta_{j_1 j_2} \delta_{k_1 k_2}
\end{pmatrix}$. Then
\[\Cov(C_{i j_1 j_2 k_1 k_2}, C_{i j_3 j_4 k_3 k_4}) = \bm{1}^\top \pt{\bm{M}_{i j_1 j_3 k_1 k_3} \otimes \bm{M}_{i j_2 j_4 k_2 k_4} + \bm{M}_{i j_1 j_4 k_1 k_4} \otimes \bm{M}_{i j_2 j_3 k_2 k_3}}.
\]
\end{restatedproposition}

\begin{proof}[Proof of Proposition~2.1 of the main text]

We will use the following lemma. \\
\textbf{Isserlis' Theorem} \cite{isserlis1918formula}: If $(X_1, ..., X_n) \sim MVN(0, \Sigma)$ then
\[\E \br{X_1 \cdots X_m} = \sum_{p \in P_n^2} \prod_{\set{i, j} \in p} \Cov (X_i, X_j),\]
where $P_n^2$ is the set of all possible pairs in $\set{1, ..., n}$.

The covariance between $C_{i j_1 j_2 k_1 k_2}$ and $C_{i j_3 j_4 k_3 k_4}$ is
\begin{align}
    \Cov(C_{i j_1 j_2 k_1 k_2}, C_{i j_3 j_4 k_3 k_4})
    &= \Cov \pt{r_{ij_1 k_1} r_{ij_2 k_2}, 
    r_{ij_3 k_3} r_{ij_4 k_4}} \nonumber \\
    &= \E \br{r_{ij_1 k_1} r_{ij_2 k_2} r_{ij_3 k_3} r_{ij_4 k_4}} 
    - \E \br{r_{ij_1 k_1} r_{ij_2 k_2}} 
     \E \br{r_{ij_3 k_3} r_{ij_4 k_4}}. \label{eq:cov_CC}
\end{align}
The first term in Equation~\ref{eq:cov_CC} above is
\begin{align*}
    \E &\br{r_{ij_1 k_1} r_{ij_2 k_2} r_{ij_3 k_3} r_{ij_4 k_4}} \\
    &= \E \{ 
    \br{Z_{i, 0}(s_{i j_1 k_1}) + T_{ij_1} Z_{i, 1}(s_{i j_1 k_1}) + W_{ij_1}(s_{i j_1 k_1})  + \epsilon_{i j_1}(s_{ij_1 k_1)}} \\
    &\quad \br{Z_{i, 0}(s_{i j_2 k_2}) + T_{ij_2} Z_{i, 1}(s_{i j_2 k_2}) + W_{ij_2}(s_{i j_2 k_2})  + \epsilon_{i j_2}(s_{ij_2 k_2})} \\
    &\quad \br{Z_{i, 0}(s_{i j_3 k_3}) + T_{ij_3} Z_{i, 1}(s_{i j_3 k_3}) + W_{ij_3}(s_{i j_3 k_3})  + \epsilon_{i j_3}(s_{i j_3 k_3})} \\
    &\quad \br{Z_{i, 0}(s_{i j_4 k_4}) + T_{ij_4} Z_{i, 1}(s_{i j_4 k_4}) + W_{ij_4}(s_{i j_4 k_4})  + \epsilon_{i j_4}(s_{i j_4 k_4})}
    \},
\end{align*}
which includes additional terms compared to the proof of Proposition 1 in \cite{sparse_face}.
The next step is to expand the terms and then calculate expectations.
For brevity, the expansion of the terms is omitted; there are 256 terms and, by independence, only 72 nonzero terms.
Expectations that contain 4 terms of either between-subject, within-subject, or error terms are calculated with Isserlis' Theorem; for example,
\begin{align*}
    \E \br{Z_{i,0}(s_{i j_1 k_1}) Z_{i,0}(s_{i j_2 k_2})  
    Z_{i,0}(s_{i j_3 k_3})  Z_{i,0}(s_{i j_4 k_4}) }
    = K_0(s_{i j_1 k_1}, s_{i j_2 k_2}) K_0(s_{i j_3 k_3}, s_{i j_4 k_4})  \\ 
    + K_0(s_{i j_1 k_1}, s_{i j_3 k_3}) K_0(s_{i j_2 k_2}, s_{i j_4 k_4})  \\
    + K_0(s_{i j_1 k_1}, s_{i j_4 k_4}) K_0(s_{i j_2 k_2}, s_{i j_3 k_3}) 
\end{align*}
The remainder of the terms in the expansion can be calculated from the definitions of the covariance functions given in Section~2.4.1 of the main text.
The second term in Equation~\ref{eq:cov_CC} is, based on Equation 2,
\begin{align*}
    \E \br{r_{ij_1 k_1} r_{ij_2 k_2}} 
     \E \br{r_{ij_3 k_3} r_{ij_4 k_4}}
     &= 
    \{
    K_0(s_{i j_1 k_1}, s_{i j_2 k_2}) + T_{ij_2} K_{01}(s_{i j_1 k_1}, s_{i j_2 k_2}) + T_{ij_1} K_{01}(s_{i j_2 k_2}, s_{i j_1 k_1}) \\
    &\quad + T_{ij_1} T_{i j_2} K_{1}(s_{i j_1 k_1}, s_{i j_2 k_2}) + \br{K_W(s_{i j_1 k_1}, s_{i j_2 k_2}) + \sigma^2 \delta_{k_1  k_2} } \delta_{j_1 j_2}
    \} \\
    &\quad  \times \{
    K_0(s_{i j_3 k_3}, s_{i j_4 k_4}) + T_{ij_4} K_{01}(s_{i j_3 k_3}, s_{i j_4 k_4}) + T_{ij_3} K_{01}(s_{i j_4 k_4}, s_{i j_3 k_3}) \\
    &\quad + T_{ij_3} T_{i j_4} K_{1}(s_{i j_3 k_3}, s_{i j_4 k_4}) + \br{K_W(s_{i j_3 k_3}, s_{i j_4 k_4}) + \sigma^2 \delta_{k_3  k_4} } \delta_{j_3 j_4}
    \}.
\end{align*}
Substituting the terms back Equation~\ref{eq:cov_CC} above and subtracting gives Proposition~2.1 of the main text.
\end{proof}

\subsection{Selection of smoothing parameter}
\label{subsec:s_kb_sp}

\begin{restatedlemma}{2.2}
    The smoother matrix $\bm S$ can be written as 
    \[\bm S = \bm X \bm A \pt{\bm I_p + \sum_{l \in \mathcal{L}} \lambda_l \text{diag}(\bm s_l)}^{-1} (\bm X \bm A)^\top \bm W\]
for some $\bm A$, $\bm s_0$, $\bm s_{01}$, and $\bm s_{1}$ which do not depend on any $\lambda_l$.
\end{restatedlemma}

\begin{proof}
Let $\bm M = \bm X^\top \bm W \bm X$ and write $\bm M = \bm M^{1/2} \bm M^{1/2}$ using SVD. Let $\bm R_l = \bm M^{-1/2} \bm Q_l \bm M^{-1/2}$ for each $l$. Since the three $\bm Q$ matrices span different subspaces of $\R^p$, any product of the $\bm Q$ matrices is $\bm 0$,  that is, $\bm Q_0, \bm Q_{01}$, and $\bm Q_{1}$ are mutually commutative.
Because the matrices are symmetric and mutually commutative, they are simultaneously diagonalizable, i.e., there exists one orthogonal $\bm U$ and a vector $\bm s_l$ such that $\bm R_l = \bm U \text{diag}(\bm s_l) \bm U^\top$ for each $l$. Now, let $\bm A = \bm M^{-1/2} \bm U$.
Then 
\begin{align*}
    \bm X^\top \bm W \bm X + \bm Q &=
    \bm M + \sum_{l \in \mathcal{L}} \lambda_l \bm Q_l \\
    &= \bm M^{1/2} \pt{\bm I + \sum_{l \in \mathcal{L}} \lambda_l \bm M^{-1/2} \bm Q_l \bm M^{-1/2}} \bm M^{1/2} \\
    &= \bm M^{1/2} \pt{\bm I + \sum_{l \in \mathcal{L}} \lambda_l \bm R_l} \bm M^{1/2} \\
    &= \bm M^{1/2} \pt{\bm U \bm U^\top + \sum_{l \in \mathcal{L}} \lambda_l \bm U \text{diag}(\bm s_l) \bm U^\top } \bm M^{1/2}  \\
    &= \bm M^{1/2} \bm U \pt{ \bm I + \sum_{l \in \mathcal{L}} \lambda_l \text{diag}(\bm s_l)  } \bm U^\top \bm M^{1/2}.
\end{align*}
Then 
\begin{align*}
    (\bm X^\top \bm W \bm X + \bm Q)^{-1} 
    &= \bm M^{-1/2} \bm U \pt{ \bm I + \sum_{l \in \mathcal{L}} \lambda_l \text{diag}(\bm s_l)  }^{-1} \bm U^\top \bm M^{-1/2} \\
    &= \bm A \pt{ \bm I + \sum_{l \in \mathcal{L}} \lambda_l \text{diag}(\bm s_l)  }^{-1} \bm A^\top,
\end{align*}
where $\bm A = \bm M^{-1/2} \bm U $. Substituting $(\bm X^\top \bm W \bm X + \bm Q)^{-1}$ with $ \bm A \pt{ \bm I + \sum_{l \in \mathcal{L}} \lambda_l \text{diag}(\bm s_l)  }^{-1} \bm A^\top$ gives the target expression. 
\end{proof}

\begin{restatedproposition}{2.3}
The iCV criterion can be approximated as iGCV and simplified as follows:
    \begin{align*}
    \text{iGCV} &= \norm{\bm C}^2 - 2 \tilde{\bm d}^\top (\tilde{\bm f} \odot \bm{f}) + (\tilde{\bm f} \odot \tilde{\bm d})^\top (\bm F^\top \bm F) (\tilde{\bm f} \odot \tilde{\bm d}) + 2 \tilde{\bm d}^\top \bm g - 4 \tilde{\bm d}^\top \bm G \tilde{\bm d}  \\
    &\quad + 2 \tilde{\bm d}^\top \br{\sum_{i = 1}^I \set{\bm L_i (\tilde{\bm f} \odot \tilde{\bm d}) } \odot \set{\tilde{\bm L}_i (\tilde{\bm f} \odot \tilde{\bm d})}}.
\end{align*}
\end{restatedproposition}

\begin{proof}
Let $\bm S_i = \bm X_i (\bm X^\top \bm W \bm X + \bm Q )^{-1} \bm X^\top \bm W$ so that $\bm S_i \widehat{\bm C}_i $ are the fitted values for subject $i$. 
Let $\bm S_{ii} = \bm X_i(\bm X^\top \bm W \bm X + \bm Q )^{-1} \bm X_i^\top \bm W_i$, the diagonal block of $\bm S$ corresponding to auxiliary variables from subject $i$.
We apply Lemma 3.1 from \cite{xu_huang2012_cv} to rewrite the iCV error as
$\text{iCV} = \sum_{i = 1}^I \norm{(\bm I - \bm S_{ii})^{-1} (\bm S_i \widehat{\bm C} -\widehat{\bm C}_i)}^2$.
Similar to \cite{xu_huang2012_cv} and \cite{sparse_face}, we use the approximation $(\bm I - \bm S_{ii})^{-T} (\bm I - \bm S_{ii})^{-1} \approx\bm I + \bm S_{ii} + \bm S_{ii}^\top$ to further simplify iCV, leading to the generalized cross validation (iGCV) criterion
\begin{align*}
    \text{iGCV} &= \sum_{i=1}^I (\bm S_i \widehat{\bm C} - \widehat{\bm C}_i)^\top (\bm I + \bm S_{ii} + \bm S_{ii}^\top) (\bm S_i \widehat{\bm C} - \widehat{\bm C}_i) \\
    &= \sum_{i = 1}^I \br{  (\bm S_i \bm C - \widehat{\bm C}_i)^\top (\bm S_i \widehat{\bm C} - \widehat{\bm C}_i) 
    + (\bm S_i \widehat{\bm C} - \widehat{\bm C}_i)^\top \bm S_{ii} (\bm S_i \widehat{\bm C}-\widehat{\bm C}_i) 
    + (\bm S_i \widehat{\bm C} - \widehat{\bm C}_i)^\top \bm S_{ii}^\top (\bm S_i \widehat{\bm C} - \widehat{\bm C}_i) } \\
    &= \norm{\bm S \widehat{\bm C} - \widehat{\bm C}}^2
    + 2 \sum_{i = 1}^I  (\bm S_i \widehat{\bm C} - \widehat{\bm C}_i)^\top \bm S_{ii} (\bm S_i \widehat{\bm C} - \widehat{\bm C}_i).
\end{align*}

Given Lemma~2.2 of the main text and the modifications for constructing $\bm X$, $\widehat{\bm C}$, and $\bm S$, the remainder of the proof is the same as the proof of Proposition 2 in \cite{sparse_face}, so it is omitted.
\end{proof}

\section{Within-subject covariance estimation}
\label{sec:s_kw}

\subsection{Construction of outcome vector, design matrix, regression coefficients}
\label{subsec:s_kw_matrix}

This section gives the details on constructing the regression model $\widehat{\bm A} = \widetilde{\bm X} \bm \gamma$ to estimate the within-subject covariance $K_W(s, u)$.
The tilde notation below differentiates the matrices constructed for estimation for $K_W(s,u)$ from the matrices constructed in the previous step for the between-subject covariance functions.
Let $\widetilde q_{ij} = m_{ij}(m_{ij} + 1)/2$ be the number of same-visit residual cross-products for subject $i$ at visit $j$ and $\widetilde q_i = \sum_{j = 1}^{J_i} \widetilde q_{ij}$ be the total number of estimates for subject $i$.
To define the outcome vector, let $\widehat{ \bm{A}}_{ijk} = \set{\widehat A_{ijjkk} , \widehat A_{ijjk (k + 1)}, ..., \widehat A_{ijjkm_{ij}}}^\top \in \R^{m_{ij} - k + 1}$, and $\widehat{\bm A}_{ij} = \set{\widehat{\bm A}_{ij1}^\top, ..., \widehat{\bm A}_{ijm_{ij}}^\top}^\top \in \R^{\widetilde q_{ij}}$.
Then let $\widehat{\bm A}_{i} = \set{\widehat{\bm A}_{i1}^\top, ..., \widehat{\bm A}_{i J_i}^\top}^\top \in \R^{\widetilde q_i}$ be a vector of the variables $\widehat{A}_{ijk_1 k_2}$ from the $j$th visit for the $i$th subject, where $k_1 \leq k_2$.
Here, the main adaption from \cite{sparse_face} is to stack visits from a subject into the same vector $\widehat{\bm{A}}_{i}$.
Constructing these vectors at the subject level, instead of the visit level, ensures independence between $\widehat{\bm{A}}_{i}$ terms and makes later calculations of the weight matrix and iGCV statistic simpler.

For the vector of regression coefficients, let $\bm \theta_W = \text{vech} \bm \Theta_W \in \R^{\frac{c (c + 1)}{2}}$; then the parameter vector is $\bm \gamma = (\bm \theta_W^\top, \sigma^2)^\top$.
For the design matrix, let $\widetilde{\bm{B}}_{ijk} = \br{\bm{b}(s_{ijk}), ..., \bm{b}(s_{ijm_{ij}})} \otimes \bm{b}(s_{ijk}) \in \R^{c^2 \times (m_{ij} - k + 1)}$, $\widetilde{\bm{B}}_{ij} = \br{\widetilde{\bm{B}}_{ij1}, ..., \widetilde{\bm{B}}_{ij m_{ij}}}^\top \in \R^{\widetilde q_{ij} \times c^2}$, $\widetilde{\bm{B}}_i = \br{\widetilde{\bm{B}}_{i1}^\top, ..., \widetilde{\bm{B}}_{i J_i}^\top}^\top \in \R^{\widetilde q_i \times c^2}$, and $\widetilde{\bm{B}} = \br{\widetilde{\bm{B}}_1^\top, ..., \widetilde{\bm{B}}_I^\top}^\top \in \R^{\widetilde q \times c^2}$, where $\widetilde q = \sum_{i = 1}^I \widetilde q_i$. 
Let $\bm{\delta}_{ijk} = (1, \bm{0}_{m_{ij} - k}^\top)^\top \in \R^{m_{ij} - k + 1}$, and $\bm{\delta}_{ij} = \set{\bm{\delta}_{ij1}^\top, ..., \bm{\delta}_{ijm_{ij}}^\top}^\top \in \R^{\widetilde q_{ij}}$, and $\bm{\delta}_{i} = \set{\bm{\delta}_{i1}^\top, ..., \bm{\delta}_{i J_i}^\top} \in \R^{\widetilde q_i}$ contain the indicators $\delta_{k_1 = k_2}$ corresponding to  the elements of $\widehat{\bm{A}_i}$.
Combine these matrices across individuals as $\widetilde{\bm X}_i = [ \widetilde{\bm B}_i \bm{G}_c, \bm \delta_i]$, $\widetilde{\bm X} = \br{\widetilde{\bm X}_1^\top, ..., \widetilde{\bm X}_I^\top}^\top$, $\widehat {\bm{A}} = \pt{ \widehat{\bm{A}}_1^\top, ..., \widehat{\bm A}_I^\top}^\top$, and $\bm \delta = (\bm \delta_1^\top, ..., \bm \delta_I^\top)^\top$.

\subsection{Weighted least squares}
\label{subsec:s_kw_weights}

Similar to the between-subject covariance estimation, the model is fit with weighted least squares to improve estimation efficiency.
Note that $\widehat A_{i j j k_1 k_2}$ is an estimate for $C_{i j j k_1 k_2}$, so we specify the weight matrix $\widetilde{\bm W_i}$ as the inverse of $\Cov(\bm C_{i})$.
The weight matrix is constructed using Proposition~2.1 of the main text, and again we define
$\widetilde{\bm W}_i^{-1} = (1 - \beta) \Cov (\bm A_i) + \beta \text{diag}\pt{\text{diag}(\Cov (\bm A_i))}$ to ensure numerical stability.
Let $\widetilde{\bm W} = \text{blockdiag}(\widetilde{\bm W}_1, ..., \widetilde{ \bm W}_n)$.

\subsection{Penalized estimation}
\label{subsec:s_kw_penalty}

As in the between-subject estimation, we add a penalty $\lambda ||\bm \Theta_W \bm{D}||^2_F$, where $\bm{D} \in \R^{c \times (c -2)}$ is a second-order differencing matrix, the norm is the Frobenius norm, and $\lambda$ is the smoothing parameter.
Defining $ \bm P = \bm G_c^\top (\bm I_c \otimes \bm D \bm D^\top) \bm G_c$ and $\widetilde{\bm Q} = \begin{pmatrix}
    \bm P & \bm 0 \\
    \bm 0 & 0 
\end{pmatrix}$,
the objective function is
\[\hat{\bm \gamma} = \argmin_{\bm \gamma} \pt{(\widehat{\bm{A}} - \widetilde{\bm{X}} \bm \gamma)^\top \widetilde{\bm W} (\widehat{\bm{A}} - \widetilde{\bm{X}} \bm \gamma) + \lambda \bm \gamma^\top \widetilde{\bm Q} \bm \gamma},\]
and an explicit form for $\widehat{\bm{\gamma}}$ is
\[\widehat{\bm{\gamma}} = \begin{pmatrix}
    \widehat{\bm{\theta}}_W \\ \widehat \sigma^2
\end{pmatrix} =  \pt{\widetilde{\bm{X}}^\top \widetilde{\bm W} \widetilde{ \bm X} + \lambda \widetilde{ \bm Q}}^{-1} \pt{ \widetilde{\bm X}^\top \widetilde{\bm W} \widehat{\bm{ A}}}.\]

The smoothing parameter $\lambda$ is selected by leave-one-subject-out cross validation, as in \cite{sparse_face} and Section~2.4.4 of the main text.
We use the efficient approximation for iCV derived in Proposition 2 of \cite{sparse_face}, modifying the matrices to account for the multi-level structure in the same way that we modified constructing $\widehat{\bm{A}_i}$.

\section{Score prediction with MME}
\label{sec:s_mme}

To estimate the scores $\xi_{i n_1}$ and $\zeta_{ij n_2}$, we first construct the matrix form of the mixed effects model in Equation~(2.2) of the main text, adapting the approach of \cite{cui2022fui} to sparse and longitudinal data.
Let $r_{ijk} = Y_{ij}(s_{ijk}) - \mu(s_{ijk}, T_{ij})$ be the residuals as defined in Section~2.1 of the main text. Let $\bm{r}_{ij} = \set{ r_{ij1}, ..., r_{ijm_{ij}}}^\top \in \R^{m_{ij}}$, and $\bm{r}_{i} = \set{\bm{r}_{i1}^\top, ..., \bm{r}_{i J_i}^\top}^\top \in \R^{m_i}$, where $m_i = \sum_{j = 1}^{J_i} m_{ij}$ is the total number of observations for subject $i$.
Let $\bm{s}_{ij} = \set{s_{ij1}, ..., s_{i j m_{ij}}}^\top \in \R^{m_{ij}}$ and $\bm{s}_i = \set{\bm{s}_{i1}^\top, ..., \bm{s}_{i J_i}^\top}^\top \in \R^{m_i}$. 

The design matrices are defined as follows.
Let $\bm{\phi}^0_{i n_1}$ be the vector of $\phi^0_{n_1}(s)$ evaluated at locations in $\bm{s}_i$ and $\bm{\Phi}^0_i = \pt{\bm{\phi}^0_{i 1} \cdots  \bm{\phi}^0_{i N_Z}} \in \R^{m_i \times N_Z}$; define $\bm{\Phi}^1_i$ analogously.
Let $\bm T_{ij} = T_{ij} \bm J_{m_{ij} \times N_Z}$ and $\bm T_i = (\bm T_{i1}^\top, ..., \bm T_{ij}^\top)^\top \in \R^{m_i \times N_Z} $ be a matrix of visit times.
Let $\bm M_i = \bm{\Phi}^0_i + \bm T_i \odot \bm \Phi^1_i$.
Similarly, let $\bm{\psi}_{i j n_2}$ be the vector of $\psi_{n_2}(s)$ evaluated at times $\bm{s}_{ij}$, let $\bm{\Psi}_{ij} = \pt{\bm{\psi}_{i j 1} \cdots \bm{\psi}_{i j N_W}} \in \R^{m_{ij} \times N_W}$, and let $\bm{\Psi}_i = \text{blockdiag}(\bm{\psi}_{i1}, ..., \bm{\psi}_{i J_i}) \in \R^{m_i \times J_i N_W}$.

Now, let $\bm{\xi}_i = \set{\xi_{i 1}, ..., \xi_{i N_Z}}^\top \in \R^{N_Z}$ be the level 1 scores for the $i$th subject. Similarly, let $\bm{\zeta}_{ij} = \set{\zeta_{i j 1}, ..., \zeta_{i j N_W}}^\top \in \R^{N_W}$; then $\bm{\zeta}_i = \set{\bm{\zeta}_{i1}^\top, ..., \bm{\zeta}_{i J_i}^\top}^\top \in \R^{N_W J_i}$ is the vector of level 2 scores for subject $i$.
Let $\bm{\Lambda}_Z = \text{diag}(\lambda_1^{Z}, ..., \lambda_{N_Z}^{Z}) \in \R^{N_Z \times N_Z}$ and $\bm{\Lambda}_W = \text{diag}(\lambda_1^{W}, ..., \lambda_{N_W}^{W}) \in \R^{N_W \times N_W}$ be the covariance matrices for $\bm{\xi}_i$ and $\bm{\zeta}_{ij}$, respectively; the covariance of $\bm{\zeta}_{i}$ is $\bm{I}_{J_i} \otimes \bm{\Lambda}_W$.
Finally let $\bm{\epsilon}_i$ be the vector of error terms for subject $i$ with covariance $\sigma^2 \bm I_{m_i}$.
In matrix mixed model form, for the $i$th subject, Equation~(2.2) of the main text is
\[\bm{r}_i = \bm{M}_i \bm{\xi}_i + \bm{\Psi}_{i} \bm{\zeta}_i + \bm{\epsilon}_i.\]
The mixed model equations (MME) give the solution
\[\begin{pmatrix}
    \hat{\bm{\xi}}_i \\ \hat{\bm{\zeta}}_i
\end{pmatrix} = 
\begin{pmatrix}
    \bm{M}_i^\top \bm{M}_i + \sigma^2 \bm \Lambda_Z^{-1}  & \bm M_i^\top \bm \Psi_i \\
    (\bm \Psi_i)^\top \bm M_i & (\bm \Psi_i)^\top \bm \Psi_i +  \sigma^2 \bm I_{J_i} \otimes \bm{\Lambda}_W^{-1}
\end{pmatrix}
\begin{pmatrix}
    \bm M_i^\top \bm r_{i} \\
    \bm \Psi_i^\top \bm r_i
\end{pmatrix}.\]
One could also use the best linear unbiased predictor (BLUP) and obtain identical estimates. In the setting of dense data, there are significant computational advantages to using the MME instead of the BLUP \citep{cui2023fmfpca}. However, because the number of points per curve $m_{ij}$ is small in sparse data, both methods perform similarly, so choosing between the MME and BLUP methods is a matter of preference.

\section{MFPCA}
\label{sec:s_mfpca}

The proposed framework can be used to estimate MFPCA as a special case of LFPCA with zero longitudinal component.

\subsection{Model and estimation}
\label{subsec:s_mfpca_model}

For subject $i$, $i = 1, ..., I$ at visit $j$, $j = 1, ..., J_i$, we observe $Y_{ij}(s)$ at locations $\set{s_{ijk}}_{k = 1, ..., m_{ij}} \subset \mathcal{S}$, where $m_{ij}$ is the number of observations for subject $i$ at visit $j$. The MFPCA model is
\[Y_{ij}(s) = \mu(s) + Z_{i}(s) + W_{ij}(s) + \epsilon_{ij}(s),\]
where $\mu(s)$ is the fixed effect surface, $Z_i(s)$ is the subject-level random intercept, $W_{ij}(s)$ is the visit-level random intercept, and $\epsilon_{ij}(s)$ is white noise with zero mean and variance $\sigma^2$.
This is identical to the LFPCA model except that for the subject-level random effects, there is only a random intercept.
Let $K_0(s,u) = \Cov \pt{Z_{i}(s), Z_i(u)}$ be the between-subject covariance function and $K_W(s,u) = \Cov \pt{W_{ij}(s), W_{ij}(u)}$ be the within-subject covariance function; both are symmetric.
The mean function $\mu(s)$ is estimated as a smooth function of $s$ using P-splines, although other fixed effect structures can be used depending on the application.

The process for estimating the between-subject covariance function remains the same, except we only estimate $K_0(s, u)$.
The outcome vector $\bm C$ is constructed exactly as described in Supplementary Section~\ref{subsec:s_kb_matrix}, for LFPCA.
The design matrix $\bm X$ contains only the first block, $X_0$, and correspondingly, the regression coefficients are $\bm \alpha = \bm \theta_0$.
Because only one covariance function is estimated at the between-subject level, the penalty and smoothing parameter selection are the same as in the within-subject covariance estimation described in Supplementary Section~\ref{subsec:s_kw_matrix}.

The within-subject covariance estimation is identical except that the estimators are $\widehat A_{i j k_1 k_2} = \widehat C_{i j k_1 k_2} - \widehat K_0(s_{ij k_1}, s_{i j k_2})$, instead of $\widehat A_{i j k_1 k_2} = \widehat C_{i j k_1 k_2} - \widehat K_0(s_{ij k_1}, s_{i j k_2}) - T_{ij} \widehat K_{01}(s_{i j k_1}, s_{i j k_2}) - T_{ij} \widehat K_{01}(s_{i j k_2}, s_{i j k_1}) - T_{ij}^2 \widehat K_1(s_{i j k_1}, s_{i j k_2})$.

The steps for the eigendecomposition, truncation, and score prediction remain similar.
The KKL expansion of the MFPCA model is
\begin{equation}
    Y_{ij}(s) = \mu(s) 
    + \sum_{n_1 = 1}^{N_Z} \xi_{in_1} \phi^0_{n_1}(s)
    + \sum_{n_2 = 1}^{N_W} \zeta_{i j n_2} \psi_{n_2}(s) + \epsilon_{ij}(s)
\end{equation}
For the score prediction, the only modification to Section~\ref{sec:s_mme} necessary is that $\bm M_i = \bm \Phi_i^0$.

\section{Additional simulation results}
\label{sec:s_additional_sim}

\subsection{Simulation 1: Sparse LFPCA}
\label{subsec:s_additional_sim_lfpca}

Additional simulation results are shown below.

\begin{figure}[H]
    \centering
    \includegraphics[width=0.625\linewidth]{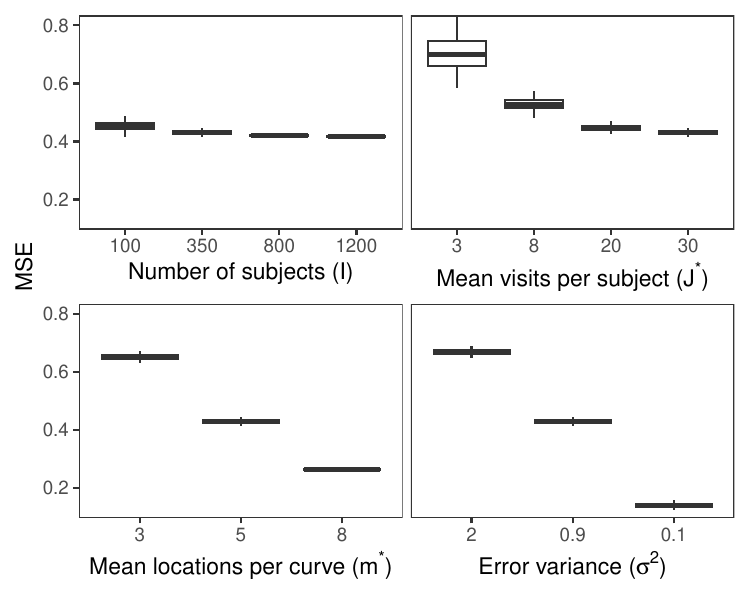}
    \caption{Boxplots of MSEs for predictions in Simulation 1 (sparse LFPCA). Each plot varies one parameter while holding the others at baseline: $I = 350$, $J^* = 30$, $m^* = 5$, and $\sigma^2 = .85$.
    }
    \label{fig:s_pred_mse_plot}
\end{figure}

\begin{figure}[H]
    \centering
    \includegraphics[width=\linewidth]{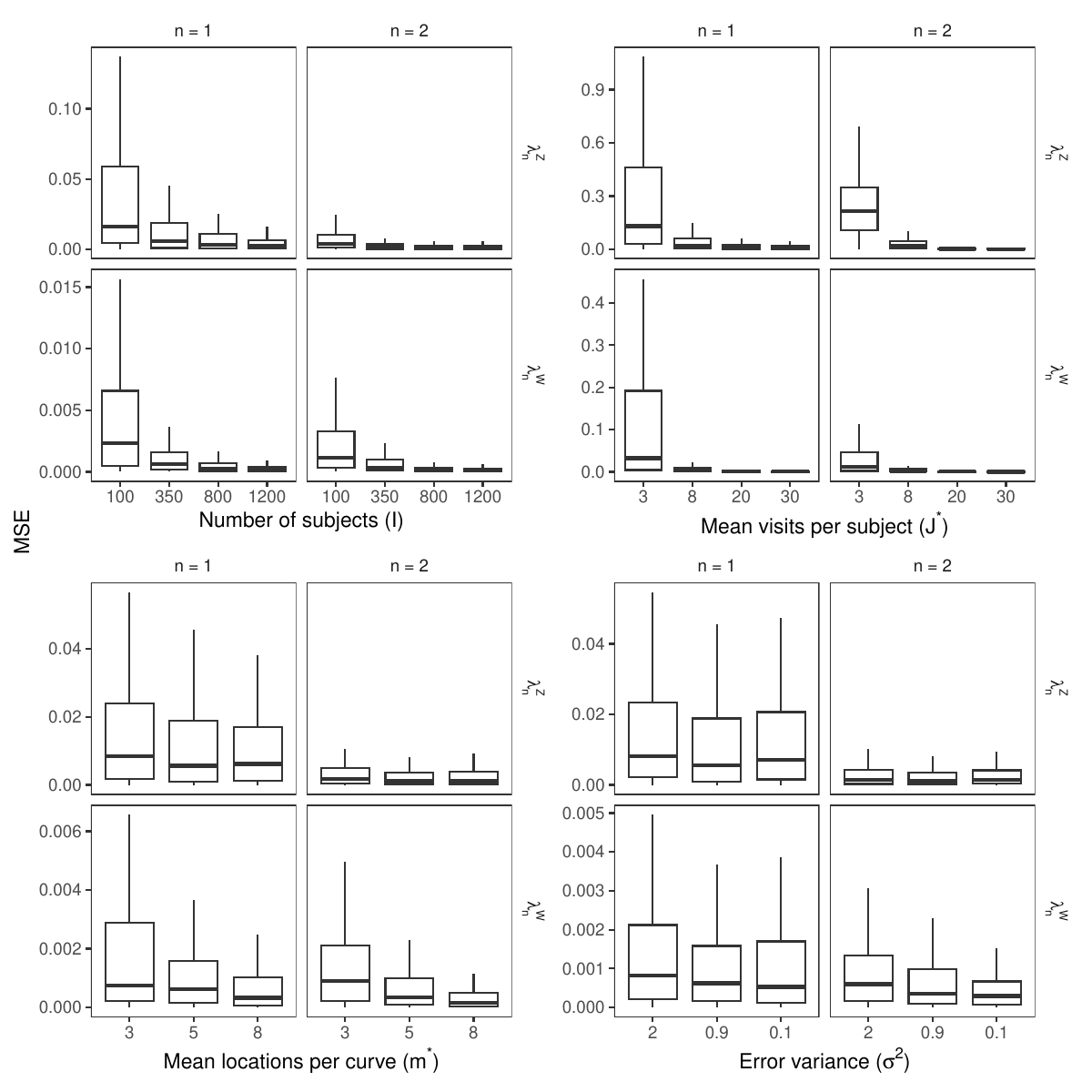}
    \caption{Boxplots of MSEs of eigenvalues in Simulation 1 (sparse LFPCA). 
    Each quadrant varies one parameter(number of subjects $I$, mean visits per subject $J^*$, mean observations per curve $m^*$, or error variance $\sigma^2$) while holding others at baseline values ($I = 350$, $J^* = 30$, $m^* = 5$, and $\sigma^2 = .85$).
    The patterns match those of the eigenfunctions in Figure~2 of the main text. }
    \label{fig:s_ev_mse_plot}
\end{figure}

\begin{figure}[H]
    \centering
    \includegraphics[width=0.625\linewidth]{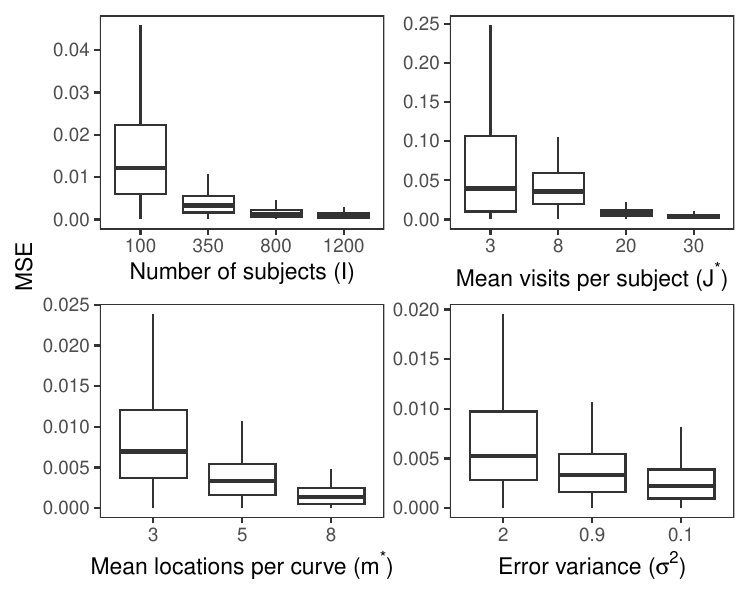}
    \caption{Boxplots of MSE of error variance estimation in Simulation 1 (sparse LFPCA). 
    The base parameters are $I = 350$, $J^* = 30$, $m^* = 5$, and $\sigma^2 = .85$, and one parameter is varied at a time.
    As expected, median MSE decreases when increasing $I$, $J^*$, and $m^*$, or reducing $\sigma^2$.}
    \label{fig:s_mse_sigma2_plot}
\end{figure}

\subsection{Simulation 2: MFPCA}
\label{subsec:s_mfpca_sim}

Since MFPCA can be viewed as a special case of our LFPCA framework, we conduct a simulation study to compare the proposed methods with existing approaches for MFPCA.

\subsubsection{Simulation design}
\label{subsubsec:s_sim_design_mfpca}

Following the design of Simulation 1, data is generated according to the KKL decomposition
\begin{equation}
    Y_{ij}(s) = \mu(s) 
    + \sum_{n_1 = 1}^{N_Z} \xi_{in_1} \phi^0_{n_1}(s)
    + \sum_{n_2 = 1}^{N_W} \zeta_{i j n_2} \psi_{n_2}(s) + \epsilon_{ij}(s)
    \label{eq:mfpca_kkl}.
\end{equation}
The number of visits for each subject $J_i$, number of observations for each visit $m_{ij}$, and scores $\xi_{in}$ and $\zeta_{in}$ are drawn in the same way; no visit times $T_{ij}$ are drawn.
The eigenvalues and eigenfunctions are identical, except the longitudinal components are removed by setting $\phi^1_1(s) = \phi^1_2(s) = 0$, and the subject-level eigenfunctions $\phi^0_1(s)$  and $\phi^0_2(s)$ are accordingly rescaled to have norm 1.
The simulation parameters and performance metrics are the same as in the first study, and 300 replicate datasets are analyzed for each parameter setting.

We compare the proposed method (SLFPCA-M) to two existing approaches: MFPCA-SC via \texttt{mfpca.sc()} \citep{di2009mfpca, di2014sparse_mfpca} and Fast MFPCA via \texttt{mfpca.face()} \citep{cui2023fmfpca}, both available in the \texttt{R} package \texttt{refund}.
MFPCA-SC (``smooth covariance") constructs method of moments estimators of the covariance functions, similar to LFPCA by \cite{greven2010lfpca}, except that the between-subject estimation involves only one covariance function.
Fast MFPCA leverages Fast Covariance Estimation \citep{xiao2016denseface} for each step to achieve computation times orders of magnitude faster than the original MFPCA.
Both MFPCA-SC and Fast MFPCA require data input on a common grid, that is, in wide matrix form. Accordingly, we convert the sparse long form data to a grid $\mathcal{S}^*$ of $L = 1000$ common locations by rounding each location $s_{ijk}$ to the nearest grid point. If multiple locations from the same curve are mapped to the same grid point, the corresponding outcomes are averaged.
Additionally, because Fast MFPCA requires at least $m_{ij} = 4$ observations per curve, visits with fewer observations are excluded when applying this method. As a result, prediction metrics for Fast MFPCA are computed only with the remaining visits.

\subsubsection{Simulation results}
\label{subsubsec:s_sim_results_mfpca}

Across most simulation settings, Fast MFPCA exhibits larger errors. This is not surprising given our sparse simulation design and that Fast MFPCA requires dense observations to improve low-rank approximation performance. To avoid obscuring results for other methods, we present results with Fast MFPCA separately, below.

\begin{figure}[H]
    \centering
    \includegraphics[width=\linewidth]{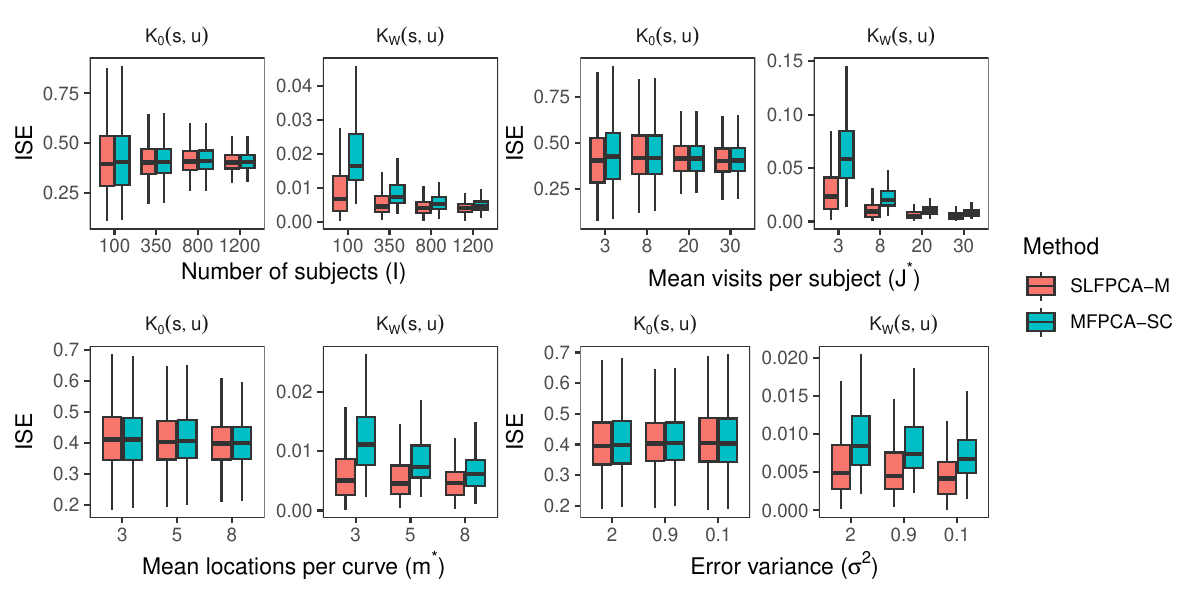}
    \caption{Boxplots of ISE of covariance functions in Simulation 2 (MFPCA). Each quadrant varies one parameter while the other parameters are fixed at baseline ($I = 350$, $J^* = 30$, $m^* = 5$, and $\sigma^2 = .85$). The panels within the quadrant correspond to $K_0(s,t)$ and $K_W(s,t)$.}
    \label{fig:s_cov_ise_mfpca}
\end{figure}

Figure \ref{fig:s_cov_ise_mfpca} compares SLFPCA-M with MFPCA-SC in terms of the ISE of $K_0(s,u)$ and $K_W(s,u)$. 
When estimating $K_0(s,u)$ (left plot in each quadrant), SLFPCA-M achieves median ISE comparable with MFPCA-SC in all scenarios.
Additionally, SLFPCA-M substantially outperforms MFPCA-SC when estimating $K_W(s,u)$ (right plot in each quadrant) across scenarios, especially when $I$, $J$, or $m^*$ are small or $\sigma^2$ is large.
Supplementary Figure \ref{fig:s_cov_ise_mfpca_f} includes Fast MFPCA, which performs slightly worse than the other methods in estimating $K_0(s,u)$ and substantially worse in estimating $K_W(s,u)$, especially when $m^*$ is small or $\sigma^2$ is high.

\begin{figure}[H]
    \centering
    \includegraphics[width=\linewidth]{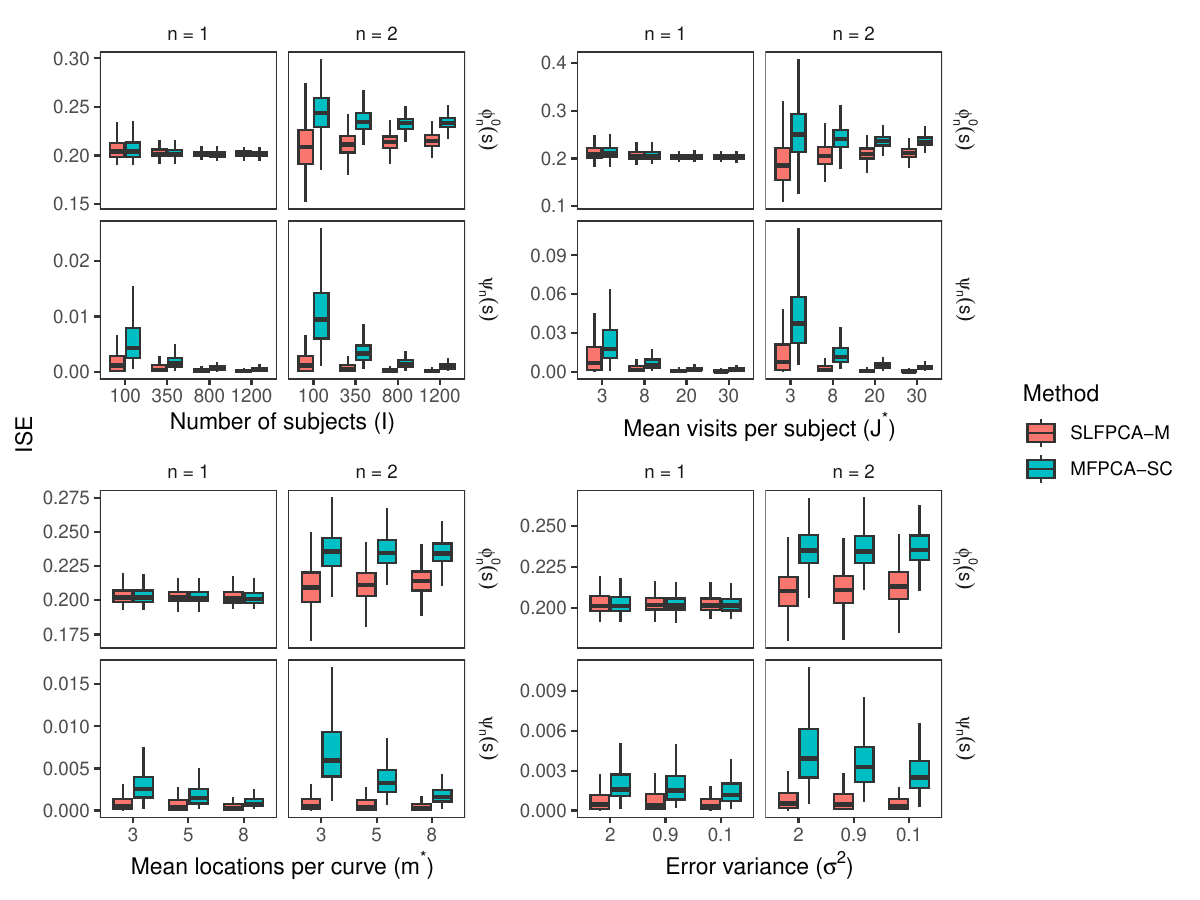}
    \caption{Boxplots of ISE of eigenfunctions in Simulation 2 (MFPCA). Each quadrant varies one parameter while the others are fixed at baseline ($I = 350$, $J^* = 30$, $m^* = 5$, and $\sigma^2 = .85$). Rows correspond to the between-subject and within-subject levels, and columns correspond to the two eigenfunctions at each level.
    }
    \label{fig:s_ef_ise_mfpca}
\end{figure}

Figure \ref{fig:s_ef_ise_mfpca} shows the ISE of estimated eigenfunctions across simulations. Based on the top rows of the four panels, the SLFPCA-M and MFPCA-SC estimate the subject-level eigenfunctions $\phi_1^0(s)$ and $\phi_2^0(s)$ similarly well.
For both within-subject eigenfunctions, the SLFPCA-M consistently outperforms MFPCA-SC across scenarios.
Moreover, for SLFPCA-M, the errors in estimating the second eigenfunction are not substantially higher than the first.
In contrast, MFPCA-SC estimates $\psi_2(s)$ considerably worse than $\psi_1(s)$, especially when $I$, $J^*$, and $m^*$ are small.
Supplementary Figure \ref{fig:s_ef_ise_mfpca_f} shows that Fast MFPCA performs worse than the other two methods, especially in estimating the second eigenfunction at each level.

Figure \ref{fig:s_pred_mse_mfpca} shows the MSE of predictions. SLFPCA-M performs slightly better when $J^*$ is low, but otherwise the difference between the methods is small compared to increasing the mean points per curve or error variance. 

\begin{figure}[H]
    \centering
   \includegraphics[width=0.75\linewidth]{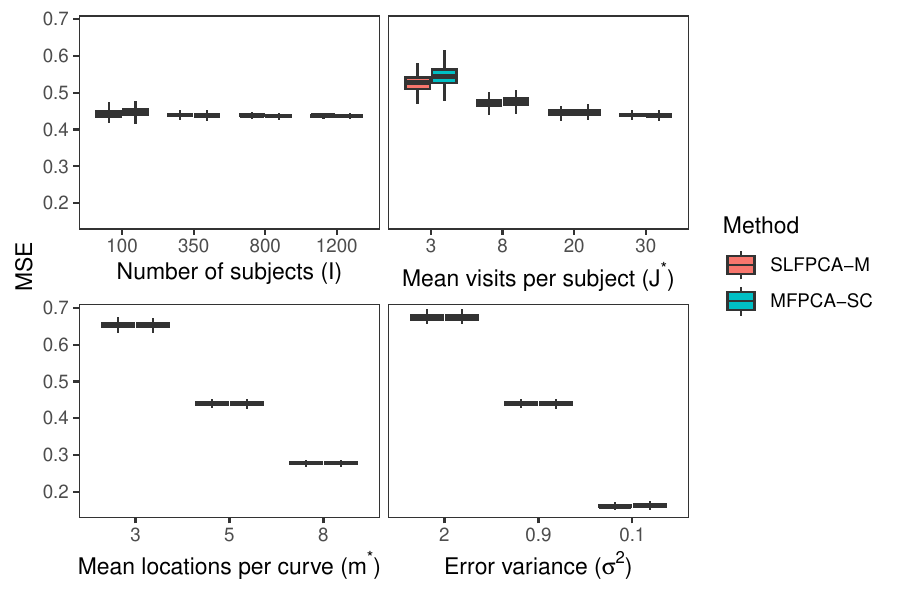}
    \caption{Boxplots of MSEs for predictions in Simulation 2 (MFPCA). Each plot varies one parameter while holding the others at baseline: $I = 350$, $J^* = 30$, $m^* = 5$, and $\sigma^2 = .85$.}
    \label{fig:s_pred_mse_mfpca}
\end{figure}

\begin{figure}[H]
    \centering
    \includegraphics[width=\linewidth]{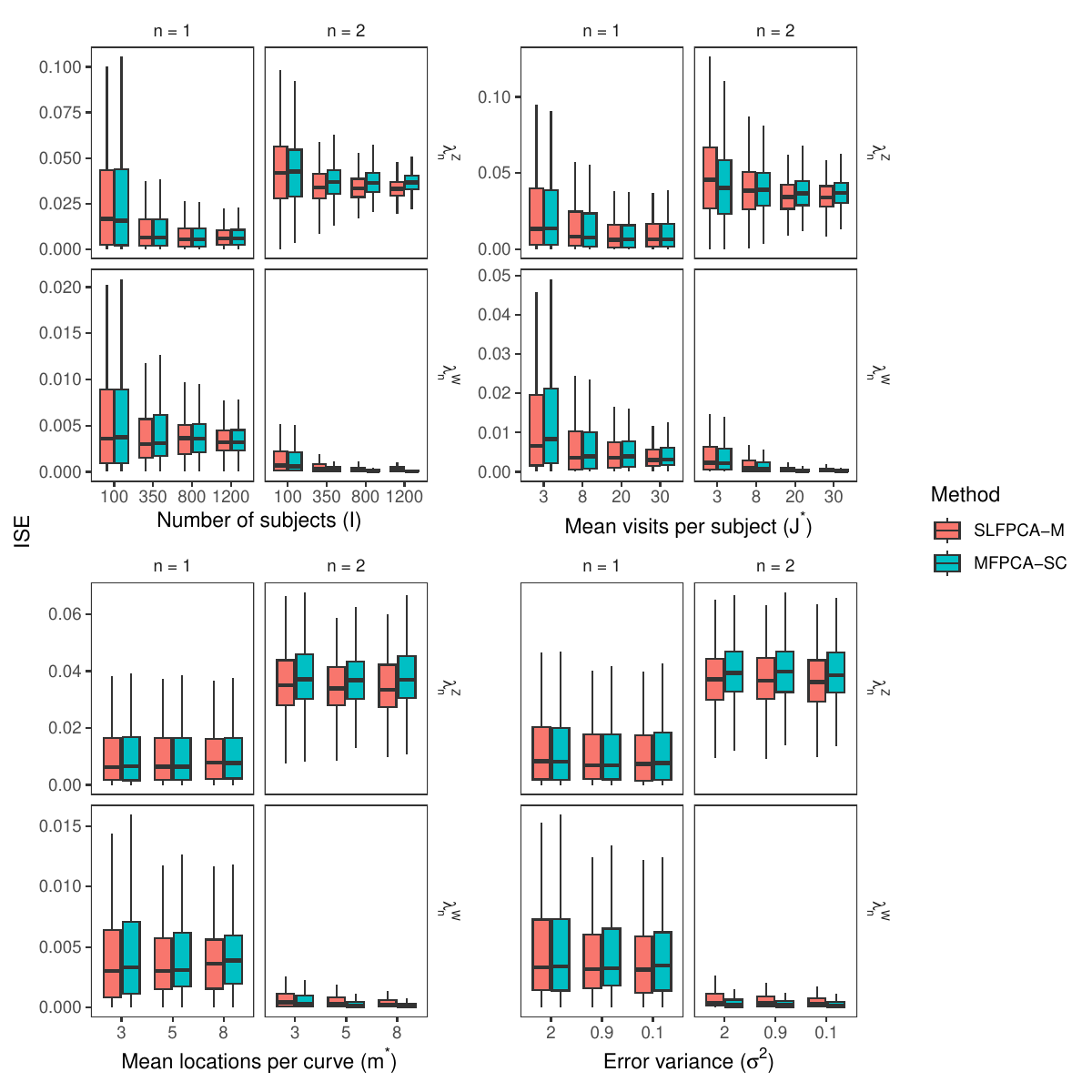}
    \caption{Boxplots of MSE of eigenvalues in Simulation 2 (MFPCA). 
    The base parameters are $I = 350$, $J^* = 30$, $m^* = 5$, and $\sigma^2 = .85$, and one parameter is varied at a time.
    The proposed methods and MFPCA \citep{di2009mfpca} perform comparably.}
    \label{fig:s_ev_mse_mfpca}
\end{figure}

\begin{figure}[H]
    \centering
    \includegraphics[width=0.625\linewidth]{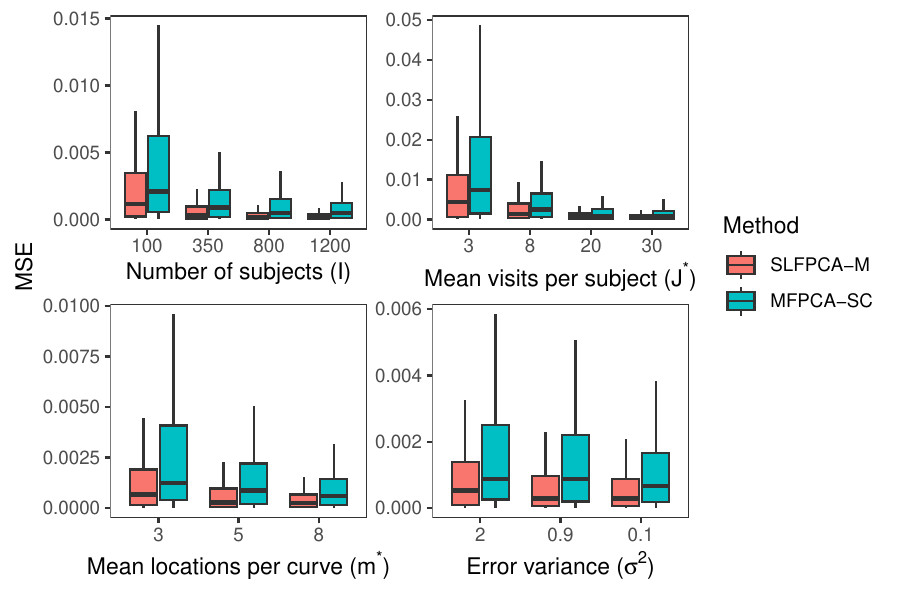}
    \caption{Boxplots of MSE of error variance estimation in Simulation 2 (MFPCA). 
    The base parameters are $I = 350$, $J^* = 30$, $m^* = 5$, and $\sigma^2 = .85$, and one parameter is varied at a time.}
    \label{fig:s_sigma2_mse_mfpca}
\end{figure}

Fast MFPCA results were excluded from the plots above because its substantially lower estimation accuracy would obscure the performance differences between MFPCA-SC and SLFPCA-M; the results with Fast MFPCA are given below.

\begin{figure}[H]
    \centering
    \includegraphics[width=\linewidth]{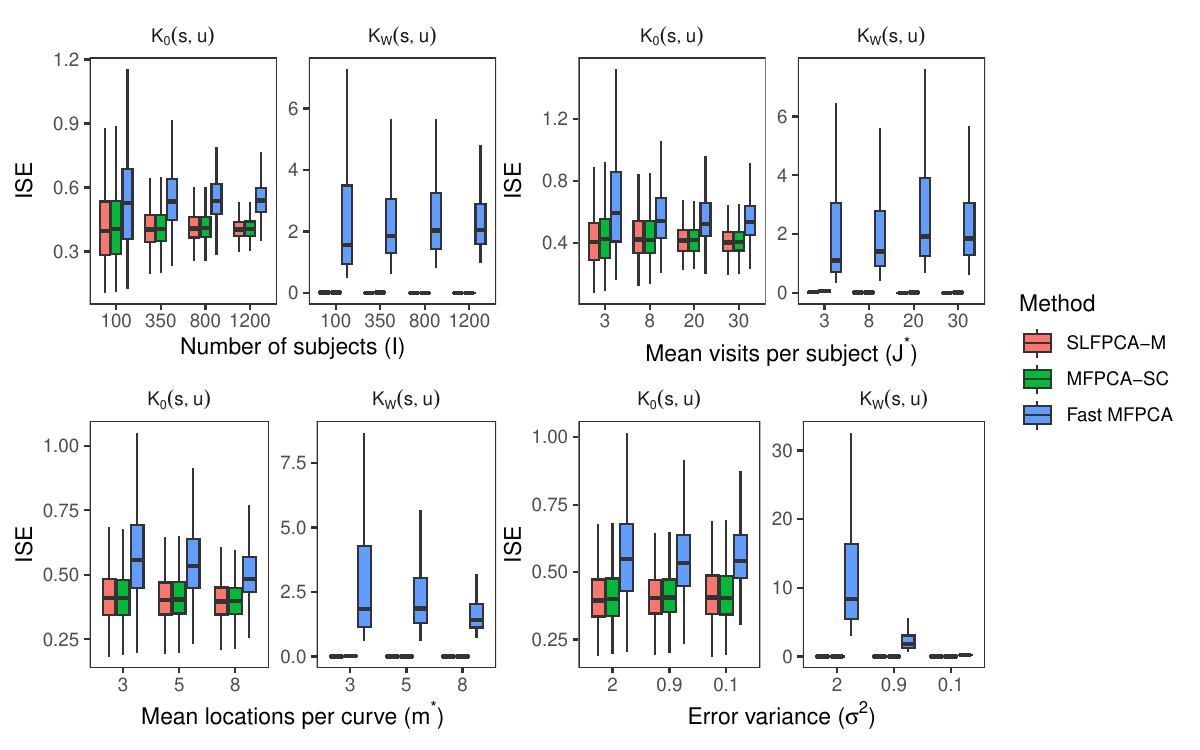}
    \caption{ISE of covariance functions in Simulation 2 (MFPCA), similar to Figure~\ref{fig:s_cov_ise_mfpca} but including Fast MFPCA.}
    \label{fig:s_cov_ise_mfpca_f}
\end{figure}

\begin{figure}[H]
    \centering
    \includegraphics[width=\linewidth]{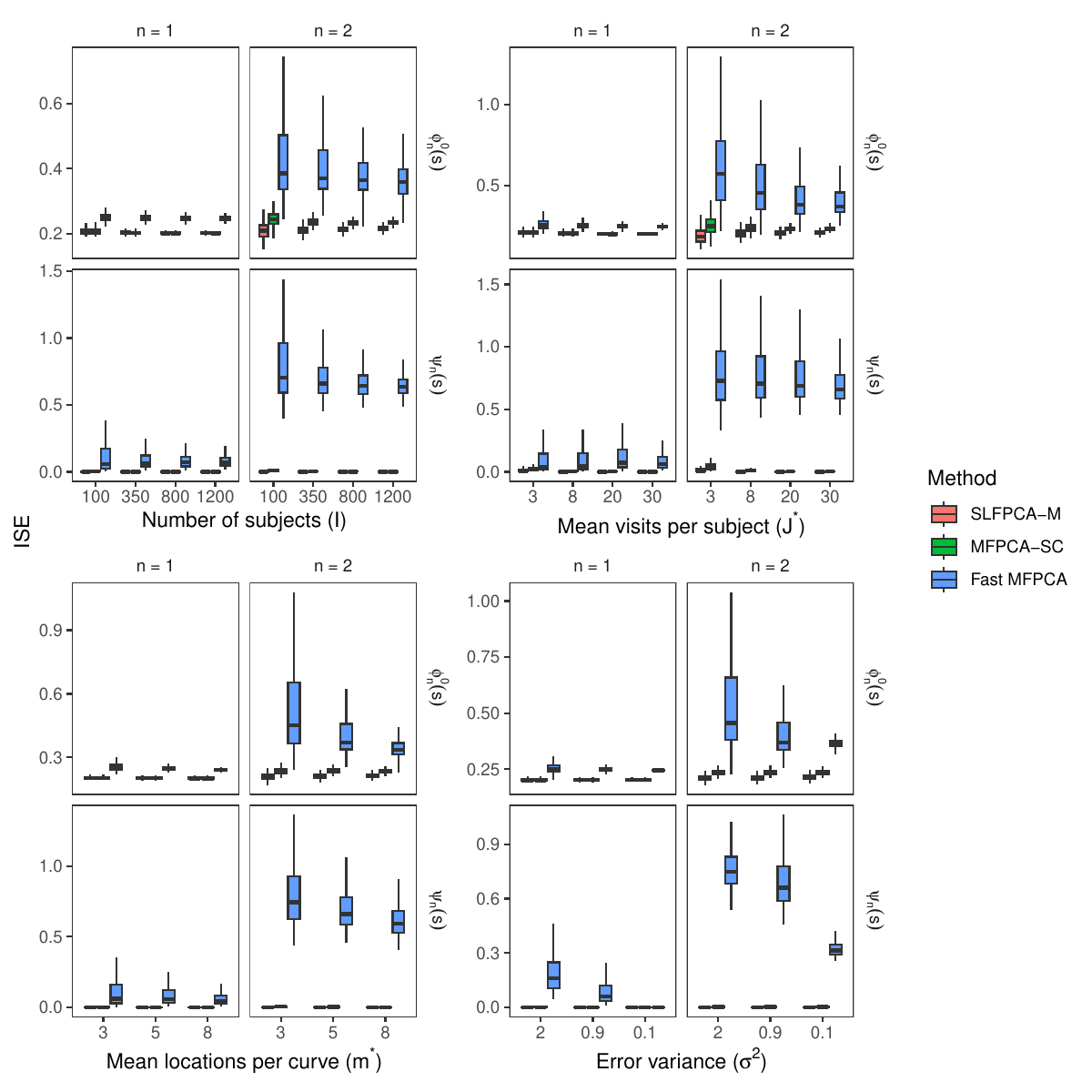}
    \caption{ISE of eigenfunctions in Simulation 2 (MFPCA), similar to Figure~\ref{fig:s_ef_ise_mfpca} but including Fast MFPCA.}
    \label{fig:s_ef_ise_mfpca_f}
\end{figure}

\begin{figure}[H]
    \centering
    \includegraphics[width=\linewidth]{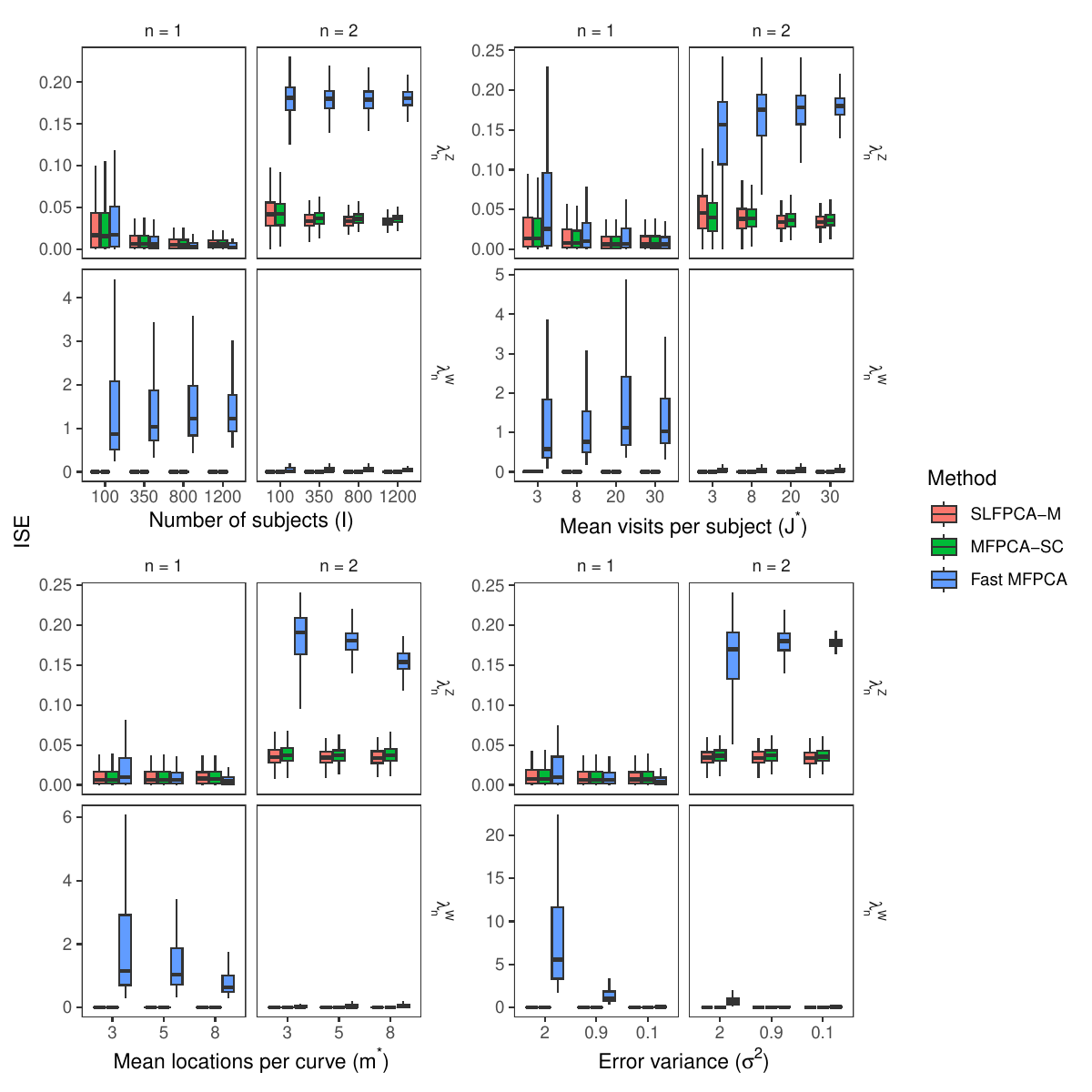}
    \caption{MSE of eigenvalues in Simulation 2 (MFPCA), similar to Figure~\ref{fig:s_ev_mse_mfpca} but including Fast MFPCA.}
    \label{fig:s_ev_mse_mfpca_f}
\end{figure}

\begin{figure}[H]
    \centering
    \includegraphics[width=0.625\linewidth]{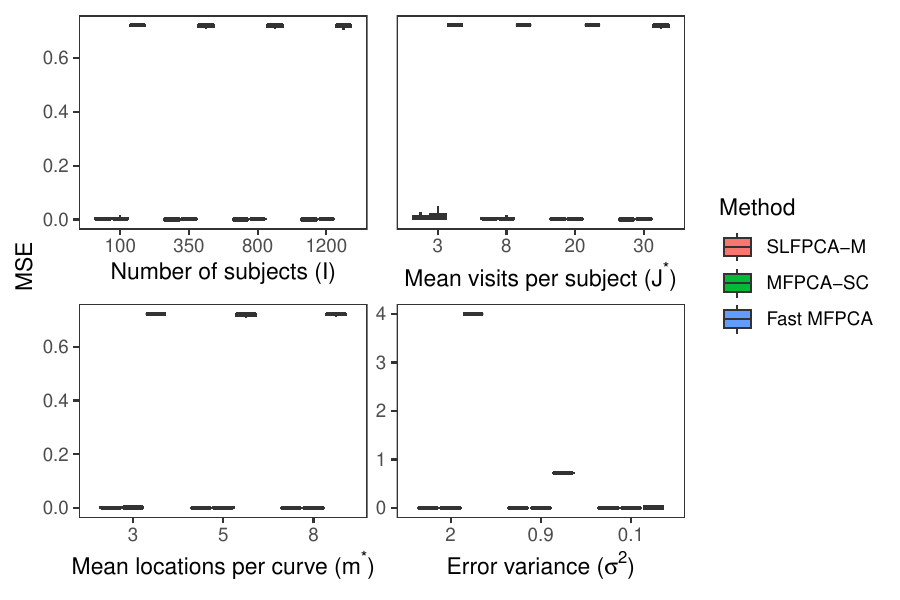}
    \caption{MSE of error variance in Simulation 2 (MFPCA), similar to Figure~\ref{fig:s_sigma2_mse_mfpca} but including Fast MFPCA.}
    \label{fig:s_sigma2_mse_mfpca_f}
\end{figure}

\begin{figure}[H]
    \centering
    \includegraphics[width=0.625\linewidth]{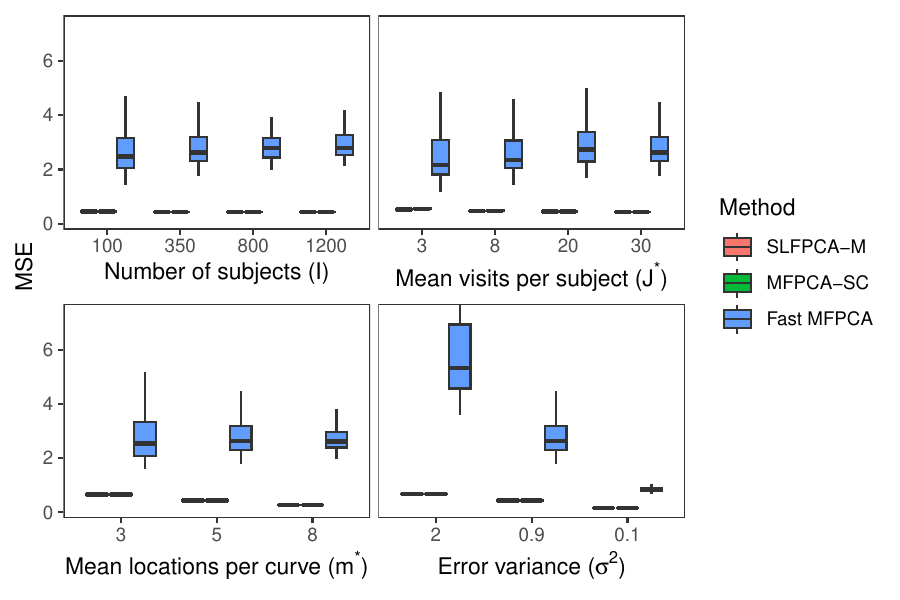}
    \caption{MSE of predictions in Simulation 2 (MFPCA), similar to Figure~\ref{fig:s_pred_mse_mfpca} but including Fast MFPCA.}
    \label{fig:s_pred_mse_mfpca_f}
\end{figure}

\subsection{Simulation 3: LFPCA with complete data}
\label{subsec:s_greven_sim}

Because the available implementation of LFPCA-G \citep{greven2010lfpca} does not accommodate missing data, we compare SLFPCA to LFPCA-G in the complete data scenario.

\subsubsection{Simulation design}
\label{subsubsec:s_greven_sim_design}

The simulation design is identical to the design presented in Section~3.1 of the main text, except that all curves $Y_{ij}(\cdot)$ are observed on a common grid $\mathcal{S}^*$ of timepoints rather than randomly sampled, curve-specific timepoints.
Each curve has $|\mathcal{S}^*| = 100$ points equally spaced on $[0, 1]$.
In addition, the parameters $I$ and $J$ are reduced for computation time. The base simulation parameters are $I = 80$, $J^* = 5$, and $\sigma^2 = .85$; the variations are $I = 40$, $J^* = 3$, and $\sigma^2 = .1, 2$. 

\subsubsection{Results}
\label{subsubsec:s_greven_sim_results}

First, the distribution of covariance function ISE is shown in Figure~\ref{fig:s_cov_ise_complete}.
Across all parameters, SLFPCA achieves better performance for $K_0(s,u)$, LFPCA-G performs better for $K_{01}(s,u)$ and $K_1(s,u)$, and estimation of $K_W(s,u)$ is comparable between the two methods.
Higher $I$ and $J$ lead to lower MSE for both methods, as expected, but decreasing error variance does not seem to impact the ISEs as much.

Figure~\ref{fig:s_ef_ise_complete} shows the eigenfunction estimation results. While the SLFPCA estimates $K_0(s,u)$ better, it only has an advantage in estimating the first eigenfunction $\phi^0_1(u)$; the second eigenfunction is estimated better by LFPCA-G.
For $K_1(s,u)$, SLFPCA estimates the first eigenfunction better, despite worse performance estimating $K_1(s,u)$ overall.
Both eigenfunctions are estimated similarly for $K_W(s,u)$.
Again, these conclusions are consistent across variations in parameters.
Figure~\ref{fig:s_ev_mse_complete} shows the eigenvalue estimation results. For the between-subject eigenvalues, SLFPCA estimates the second eigenvalue better, while LFPCA-G estimates the first eigenvalue better, which is the opposite trend of the eigenfunctions. For within-subject, the methods perform similarly.

Figure~\ref{fig:s_pred_mse_complete} shows that SLPFCA outperforms LFPCA-G in all scenarios for prediction. The results for SLFPCA follow the expected patterns when increasing $I$, $J$, and $\sigma^2$, but the predictions from LFPCA-G do not improve as much when increasing $I$ or $J$. 
It is not clear why the prediction accuracy is so different despite similar accuracy in covariance estimation.

\begin{figure}
    \centering
    \includegraphics[width=\linewidth]{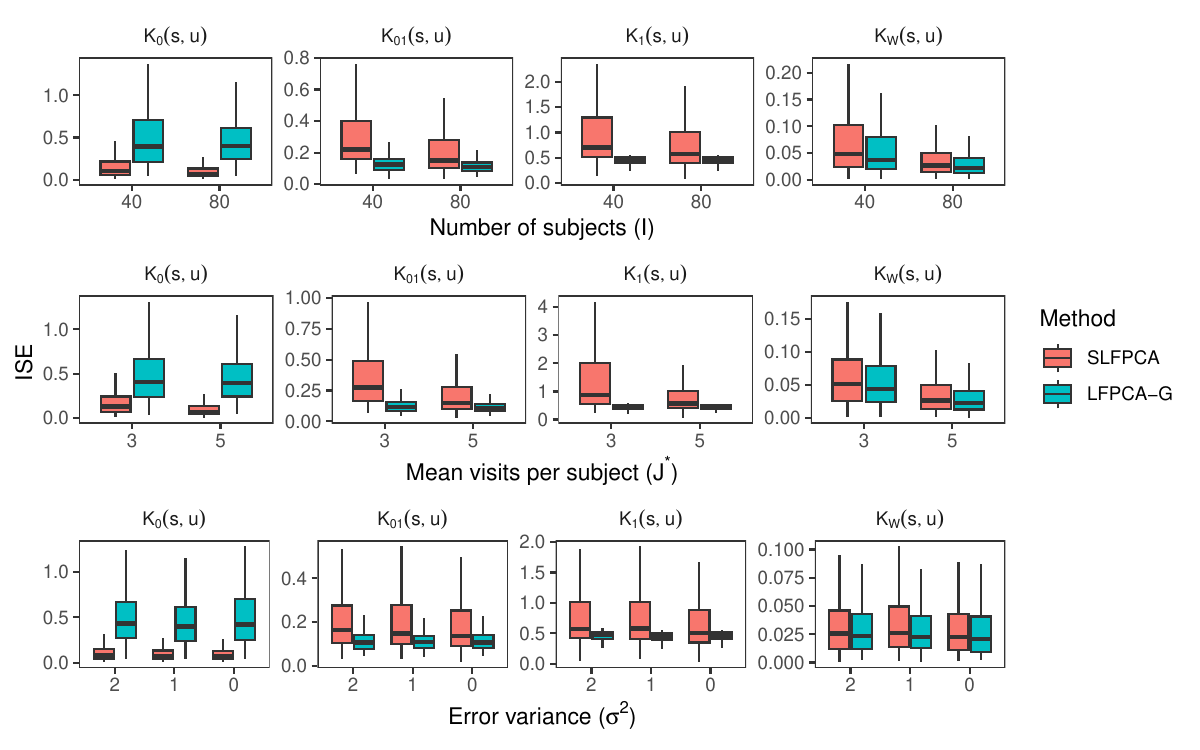}
    \caption{Boxplots of ISE of covariance functions in Simulation 3 (LFPCA with complete data). The base simulation parameters are $I = 80$, $J^* = 5$, and $\sigma^2 = .85$; the variations are $I = 40$, $J^* = 3$, and $\sigma^2 = .1, 2$. }
    \label{fig:s_cov_ise_complete}
\end{figure}

\begin{figure}
    \centering
    \includegraphics[width=\linewidth]{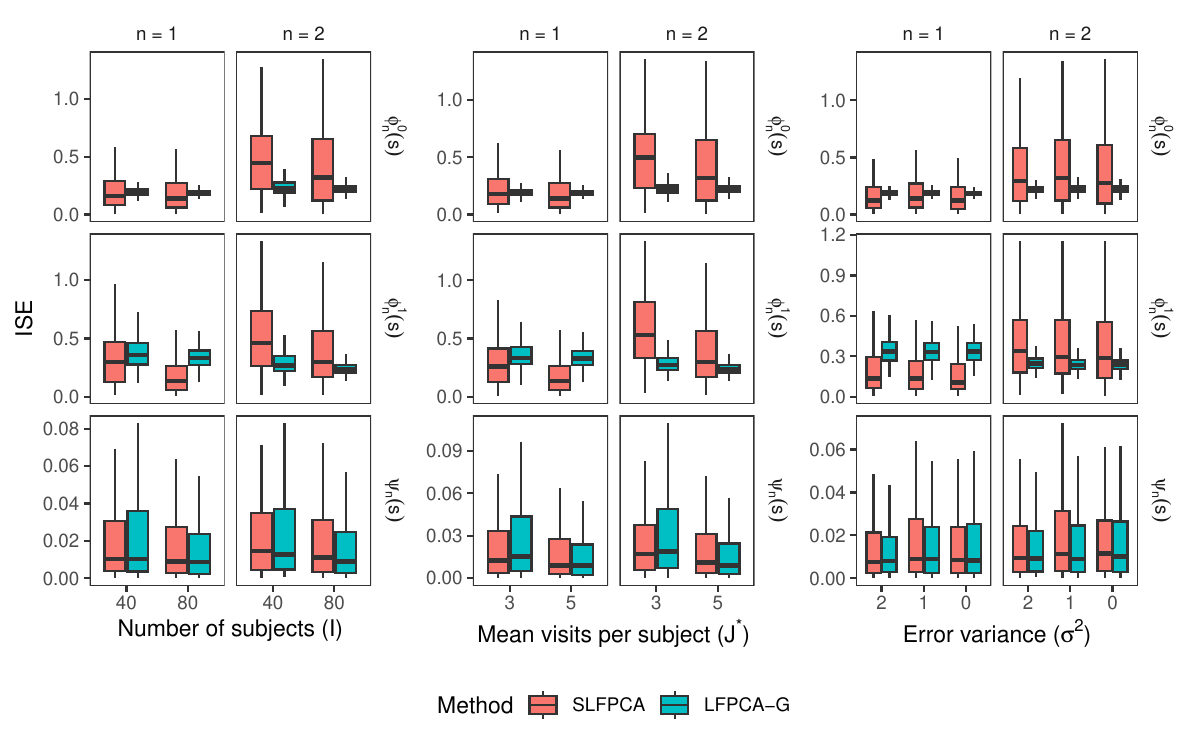}
    \caption{Boxplots of ISE of eigenfunctions in Simulation 3 (complete data). }
    \label{fig:s_ef_ise_complete}
\end{figure}

\begin{figure}
    \centering
    \includegraphics[width=\linewidth]{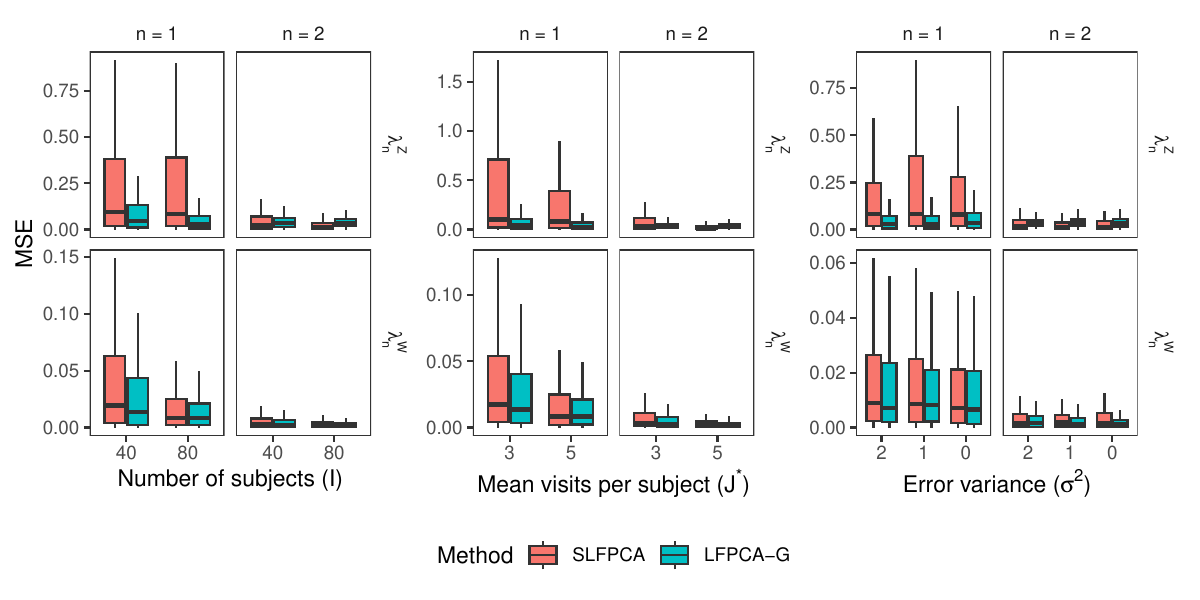}
    \caption{Boxplots of MSE of eigenvalues in Simulation 3 (complete data). }
    \label{fig:s_ev_mse_complete}
\end{figure}

\begin{figure}
    \centering
    \includegraphics[width=0.875\linewidth]{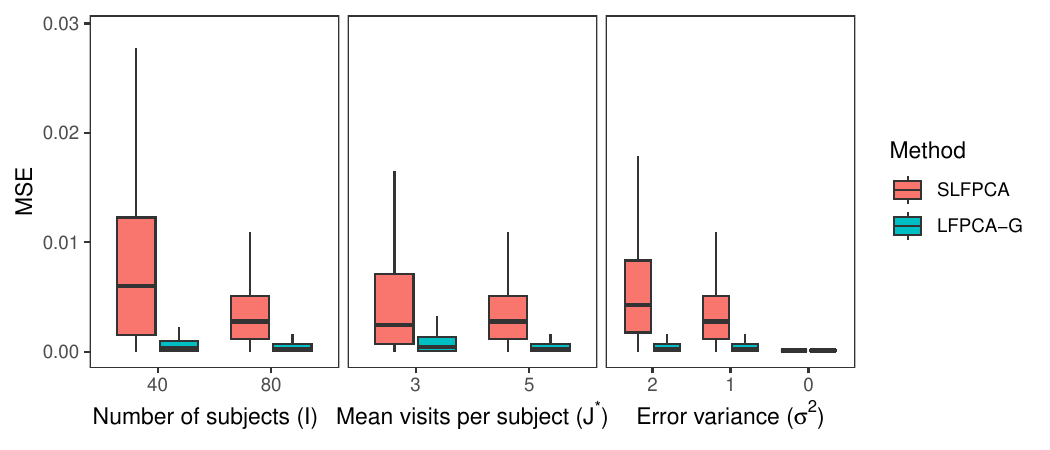}
    \caption{Boxplots of MSE of error variance in Simulation 3 (complete data). }
    \label{fig:s_sigma2_mse_complete}
\end{figure}

\begin{figure}
    \centering
    \includegraphics[width=.875\linewidth]{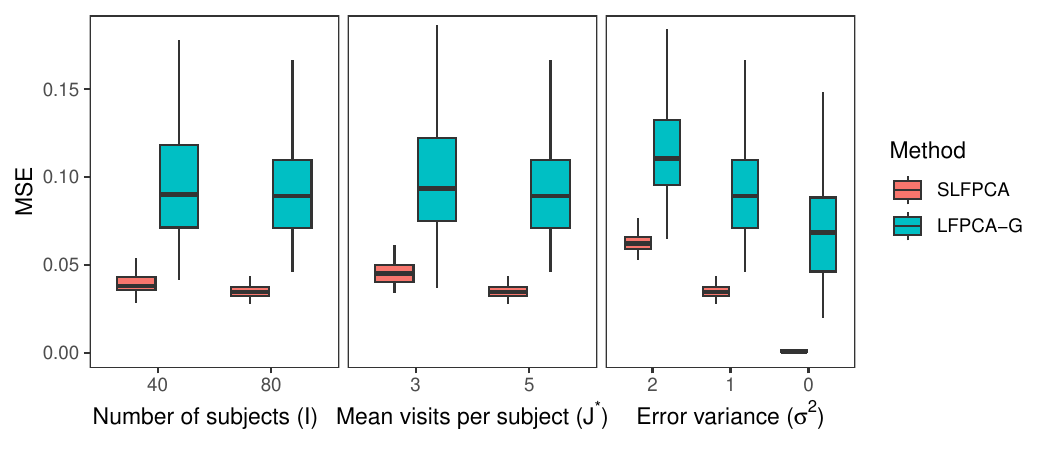}
    \caption{Boxplots of MSE of predictions in Simulation 3 (complete data). }
    \label{fig:s_pred_mse_complete}
\end{figure}

\pagebreak

\section{Additional application results}
\label{section:s_application}

\begin{figure}[H]
    \centering
    \includegraphics[width=\linewidth]{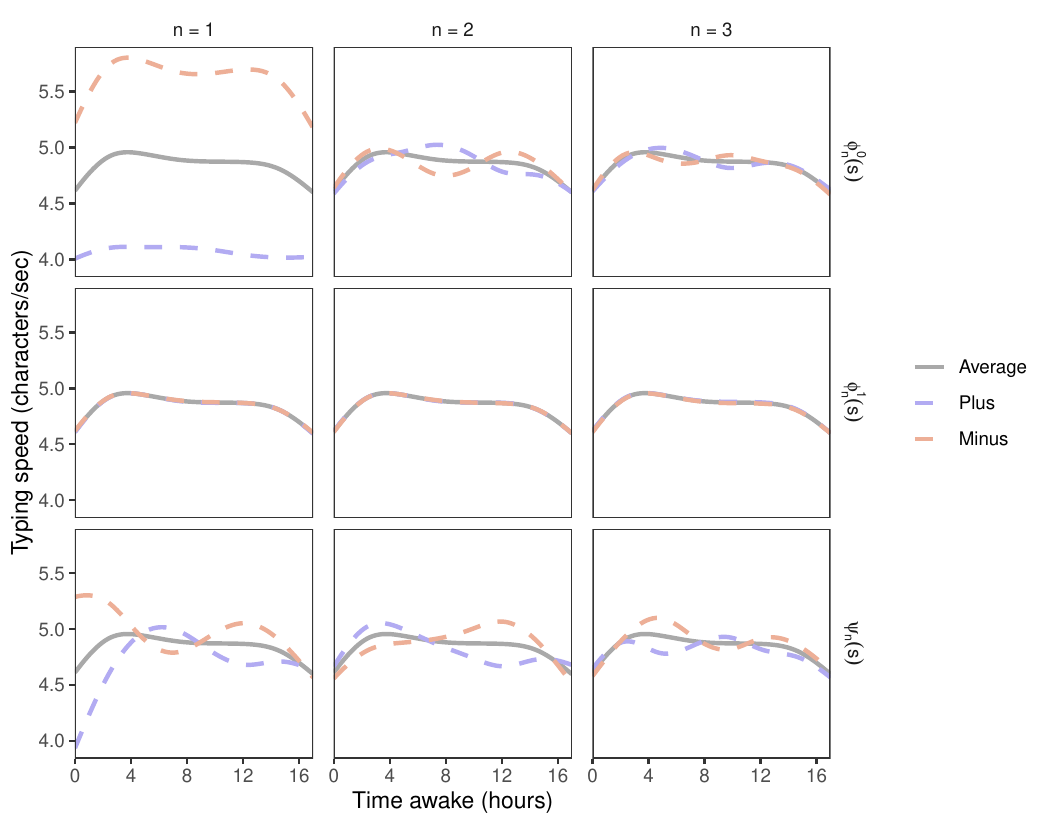}
    \caption{Marginal mean function plus and minus a suitable multiple of eigenfunctions in SensorKit typing speed data. Specifically, the three rows show $\mu(s) \pm \sqrt{\lambda_Z} \phi_n^0(s)$, $\mu(s) \pm \sqrt{\lambda_Z} \phi_n^1(s)$, and $\mu(s) \pm \sqrt{\lambda_W} \psi_n(s)$ for $n = 1, ..., 4$.}
    \label{fig:s_ef_pm_plot}
\end{figure}

\end{document}